\pgfplotsset{compat=newest}
\definecolor{A}{RGB}{223,189,217}
\definecolor{B}{RGB}{223,179,217}
\definecolor{C}{RGB}{213,159,217}
\definecolor{D}{RGB}{174,134,202}
\definecolor{E}{RGB}{174,124,202}
\definecolor{F}{RGB}{174,114,202}
\definecolor{G}{RGB}{168,114,203}
\definecolor{H}{RGB}{148,114,203}
\definecolor{I}{RGB}{122,109,198}
\definecolor{J}{RGB}{102,99,188}
\definecolor{K}{RGB}{81,88,179}
\definecolor{L}{RGB}{71,70,160}
\definecolor{M}{RGB}{51,60,154}
\definecolor{N}{RGB}{35,50,124}
\definecolor{O}{RGB}{26,50,104}
\definecolor{P}{RGB}{1,1,1}
\definecolor{lilla}{HTML}{750787}
\newtheorem{prop}{Proposition}
\newtheorem{theorem}{Theorem}
\newtheorem{lem}{Lemma}
\newtheorem{assumption}{Assumption}
\newcommand{\RM}[1]{\MakeUppercase{\romannumeral #1{}}}
\DeclareMathOperator*{\Real}{Re}
\DeclareMathOperator*{\Imag}{Im}
\DeclareMathOperator*{\L2}{\mathcal{L}^2}
\DeclareMathOperator*{\R}{\mathbb{R}}
\DeclareMathOperator{\E}{\mathbb{E}}
\DeclareMathOperator{\L2}{\mathcal{L}^2}
\DeclareMathOperator{\N}{\mathcal{N}}
\begin{document}
%
\title{Sampling Strategies for Data-Driven Inference of Input-Output System Properties}
%
%
%
%

\author{Anne~Koch, 
				Jan~Maximilian~Montenbruck, 
        and~Frank~Allg\"ower
\thanks{A Koch, JM Montenbruck and F Allg\"ower are with the Institute for Systems Theory and Automatic Control, University of Stuttgart. 
This work was funded by Deutsche Forschungsgemeinschaft (DFG, German
Research Foundation) under Germany’s Excellence Strategy - EXC 2075 -
390740016. The authors thank the International Max Planck Research School
for Intelligent Systems (IMPRS-IS) for supporting Anne Koch. For correspondence, 
{\tt\small mailto:anne.koch@ist.uni-stuttgart.de}}
}

\IEEEpubid{\begin{minipage}{\textwidth}\ \\[12pt] \\ \\
         \copyright 2020 IEEE.  Personal use of this material is  permitted.  Permission from IEEE must be obtained for all other uses, in  any current or future media, including reprinting/republishing this material for advertising or promotional purposes, creating new  collective works, for resale or redistribution to servers or lists, or  reuse of any copyrighted component of this work in other works. The final version of record is available at {http://dx.doi.org/10.1109/TAC.2020.2994894}.
     \end{minipage}}
\maketitle

\begin{abstract}
Due to their relevance in controller design, we consider the problem of determining the $\mathcal{L}^2$-gain, passivity properties and conic relations of an input-output system. While, in practice, the input-output relation is often undisclosed, input-output data tuples can be sampled by performing (numerical) experiments. Hence, we present sampling strategies for discrete time and continuous time linear time-invariant systems to iteratively determine the $\mathcal{L}^2$-gain, the shortage of passivity and the cone with minimal radius that the input-output relation is confined to. These sampling strategies are based on gradient dynamical systems and saddle point flows to solve the reformulated optimization problems, where the gradients can be evaluated from only input-output data samples. This leads us to evolution equations, whose convergence properties are then discussed in continuous time and discrete time.
\end{abstract}

\begin{IEEEkeywords}
Data-based systems analysis, Optimization, Machine Learning, Linear Systems, Identification for Control
\end{IEEEkeywords}

%
\IEEEpeerreviewmaketitle

\section{Introduction}
%
\IEEEPARstart{C}{ontrol} theory based on mathematical models is used in most existing control applications. While the theory for model-based control is quite elaborate, acquiring the mathematical model by first principles or identification of the plant can be time-consuming and highly dependent on expert knowledge. Hence, with the growing complexity of systems, acquiring the mathematical model becomes more and more challenging. 
At the same time, the amount of available data is growing rapidly due to increasing computational power and sheer size of storage. Hence, researchers from diverse backgrounds and fields are facing the challenges and chances arising from this phenomenon commonly known as big data. In recent years, this development has also attracted more and more attention in engineering applications, where data usually comprise probing input signals and probed output signals from experiments and simulations. One main question is hence, how can we best benefit from information in form of data in the state-of-the art automatic control theory?

Many existing approaches of what is called data-driven controller design are summarized in \cite{Hou2013}, which strive to learn a controller directly from data without identifying a model first. In most approaches therein, however, stability for the closed loop cannot be guaranteed or one needs to assume a certain controller structure beforehand. One complementary approach to the direct controller design from data is to learn and analyze certain system-theoretic properties from data first and leverage this knowledge to design a controller. In fact, properties such as the $\mathcal{L}^2$-gain, the shortage of passivity and conic relations of the input-output behavior allow for the direct application of well-known feedback theorems for controller design, as shown for example in \cite{Zames1966,Desoer1975}. Thus, by learning such system-theoretic properties from data, one is not bound to a controller structure beforehand and insights to the a priori unknown system are obtained. Moreover, the approach can provide control theoretic guarantees for the closed-loop behavior.

There have been different approaches to learn certain
system properties or, more generally, dissipation inequalities
from input-output data tuples that are stored and available
for analysis. In \cite{Montenbruck2016}, the authors derive overestimates on the $\mathcal{L}^2$-gain, the shortage of passivity and the cone containing all input-output samples based on finite, but densely sampled, input-output data. In \cite{Romer2017a}, this approach is extended to a more general formulation of dissipation inequalities, where the ordering of the supply rates via the S-procedure allows for inference of system properties from only finite input-output data. However, to receive quantitative bounds on certain dissipation inequalities, these approaches generally introduce conservatism to account for yet unmeasured data points in the sense that they often could not verify a system property that the system did satisfy.
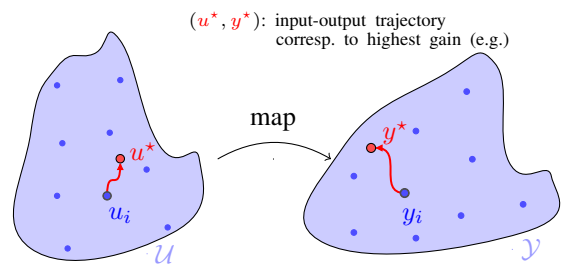
\begin{figure}[b]
 \begin{tikzpicture}[scale=0.7]
  \draw[fill=blue!20!white] plot [smooth cycle, tension=.9] coordinates {(-.7,.75) (0,1) (.5,1.25) (1.25,-.75) (2,-1) (1.5,-2.4) (-1.2,-2.7) (-1.2,-.5)};
 \draw[->] (2.35,-1) to[bend left] (4.5,-1);
 \draw (3.4,-1) -- (3.4,-1) node[anchor=south,yshift=7] {map};
  \draw[color=blue!40!white] (1,-2.8) -- (1,-2.8) node[anchor=west] {$\mathcal{U}$};	

\draw[red, line width=.75] (.25,-1.7) .. controls (.25,-1.1) and (.5,-1.7) .. (.5,-1.1);
\draw[red,fill=red] (.5,-1.1) -- (.45,-1.2) -- (.55,-1.2) -- cycle;
\draw[black, fill=red!70!white, anchor=south, yshift=3pt] (.5,-1.1) circle (.08);
\filldraw[black!70!white] (.25,-1.7) circle (.08);

\filldraw[blue!70!white] (.5,.5	) circle (.05);
\filldraw[blue!70!white] (.3,-.5) circle (.05);
\filldraw[blue!70!white] (-.6,-.7) circle (.05);
\filldraw[blue!70!white] (1,-1.2) circle (.05);
\filldraw[blue!70!white] (-.7,-1.7) circle (.05);
\filldraw[blue!70!white] (-.5,-2.7) circle (.05);
\filldraw[blue!70!white] (1.4,-2.3) circle (.05);
\filldraw[blue!70!white] (-.7,.4) circle (.05);
\filldraw[blue!70!white] (.25,-1.7) circle (.05);
\node[anchor=north, xshift=5pt, yshift=-1pt, blue] at (.25,-1.7) {$u_i$};

\node[anchor=west, xshift=0pt, yshift=5pt, red] at (.5,-1.1) {$u^\star$};

  \end{tikzpicture} \hspace{-60pt}\raisebox{0.2em}{\begin{tikzpicture}[scale=0.75]
  \draw[fill=blue!20!white] plot [smooth cycle, tension=.9] coordinates {(0,.75) (1,1) (1.25,-.75) (2,-1) (1.5,-2) (-2,-2.2) (-1.7,-.5)};
  \draw[color=blue!40!white] (1.3,-2.3) -- (1.3,-2.3) node[anchor=west] {$\mathcal{Y}$};
	
\draw[red, line width=.75] (-.6,-1.3) .. controls (-1.1,-1.3) and (-.6,-.5) .. (-1.1,-.5);
\draw[red,fill=red] (-1.1,-.5) -- (-1.0,-.55) -- (-1.0,-.45) -- cycle;
\draw[black, fill=red!70!white, anchor=west, xshift=3pt] (-1.3,-.5) circle (.08);
\filldraw[black!70!white] (-.6,-1.3) circle (.08);
	
\filldraw[blue!70!white] (.5,.5) circle (.05);
\filldraw[blue!70!white] (.6,-.7) circle (.05);
\filldraw[blue!70!white] (-.4,-.2) circle (.05);
\filldraw[blue!70!white] (1.5,-1.5) circle (.05);
\filldraw[blue!70!white] (-.6,-1.3) circle (.05);
\filldraw[blue!70!white] (-.4,-2.1) circle (.05);
\filldraw[blue!70!white] (-1.5,-1) circle (.05);
\filldraw[blue!70!white] (-1.5,-2) circle (.05);
\filldraw[blue!70!white] (.4,-1.7) circle (.05);
\node[anchor=south, xshift=-15pt, yshift=-7pt, blue] at (.25,-1.7) {$y_i$};

\node[anchor=south west, xshift=3pt, yshift=2pt, red] at (-1.3,-.7) {$y^\star$};

\node[anchor=east, xshift=-2pt, yshift=1pt, text width = 5cm] at (2.5,1.7) {\scriptsize $({\color{red}{u^\star}},{\color{red}{ y^\star}})$: input-output trajectory};
\node[anchor=east, xshift=-2pt, yshift=1pt, text width = 5cm] at (4.05,1.35) {\scriptsize corresp.\ to highest gain (e.g.)};

\end{tikzpicture}}
\caption{Given finitely many data-samples from a system that maps inputs $u \in \mathcal{U}$ to the outputs $y \in \mathcal{Y}$, where $\mathcal{U}$ is the input space and $\mathcal{Y}$ is the output space. We can determine system properties of our input-output map by finding bounds on the input-output operator through continuity assumptions (cf. \cite{Montenbruck2016}, \cite{Romer2017a}). Our research objective is to draw further data tuples $(u_i, y_i)$ to actually converge to the input-output trajectory $(u^\star, y^\star)$ that corresponds to the respective system property, e.g. the $\mathcal{L}^2$-gain.}
\label{fig:smart_sampling}
\end{figure}
If we assume, however, that additional (numerical) experiments can be performed, which is certainly the case in many application examples, then 
we can iteratively choose further data tuples in order to decrease conservatism and to obtain sharper bounds on the respective properties, as schematically illustrated in Fig.~\ref{fig:smart_sampling}. 

Instead of finding the worst-case overestimate on input-output samples, we thus want to iteratively draw further input-output samples which allow us to converge to the true $\mathcal{L}^2$-gain, shortage of passivity, or minimal radius of a cone containing the input-output behavior. 
In fact, some results in this direction can be found in \cite{Wahlberg2010} and \cite{Rojas2012}, where an iterative approach for determining the $\mathcal{L}^2$-gain of discrete time linear time-invariant (LTI) systems is presented and discussed. Building upon this idea, the concept of sampling strategies for data-driven inference of passivity properties and conic relations were introduced in our prior work \cite{Romer2017b} and \cite{Romer2018b}, respectively. In this paper, we present these approaches as a more general idea of the following form:
\subsubsection*{Sampling strategy for inference of system properties}
\begin{enumerate}
\item Formulate system property as optimization problem
\item Iteratively perform further (numerical) experiments
\begin{enumerate}
\item Calculate the gradient from input-output samples
\item Update the input along the (negative) gradient.
\end{enumerate}
\end{enumerate} 
This general approach is schematically illustrated in Fig. \ref{fig:sampling2}.
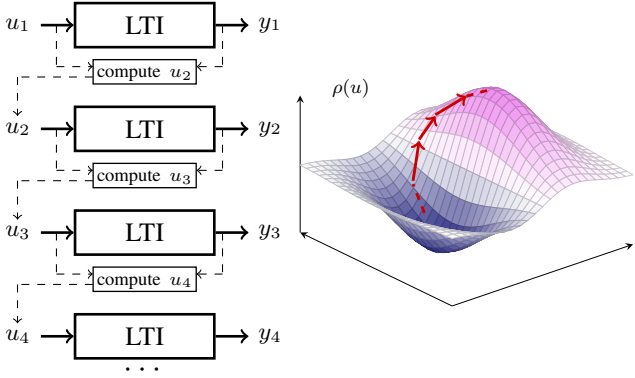
\begin{figure}
\begin{tikzpicture}[scale=.69]
\draw (0,0) node[draw,line width=1pt, inner sep=5pt,minimum size=12pt,text width=1.5cm,align=center](H) {LTI};
\draw[->, line width=1pt] (-2,0) node[left,font=\small] {$u_1$} -- (H) ;
\draw[->, line width=1pt] (H) -- (2,0) node[right,font=\small] {$y_1$};
\draw[dashed] (-1.7,0)--(-1.7,-.8);
\draw[dashed, ->] (-1.7,-.8) -- (-1,-.8);
\draw[dashed] (1.5,0)--(1.5,-.8);
\draw[dashed, ->] (1.5,-.8) -- (1,-.8);
\draw (0,-.9) node[draw,line width=.5pt, inner sep=1pt,minimum size=2pt,text width=1.3cm,text height = 0.2cm,align=center](u) {\scriptsize compute $u_2$};
\draw[dashed] (-1,-1)--(-2.45,-1);
\draw[dashed, ->] (-2.45,-1) -- (-2.45,-1.75) ;
\begin{scope}[shift={(0,-2)}]
\draw (0,0) node[draw,line width=1pt, inner sep=5pt,minimum size=12pt,text width=1.5cm,align=center](H) {LTI};
\draw[->, line width=1pt] (-2,0) node[left,font=\small] {$u_2$} -- (H) ;
\draw[->, line width=1pt] (H) -- (2,0) node[right,font=\small] {$y_2$};
\draw[dashed] (-1.7,0)--(-1.7,-.8);
\draw[dashed, ->] (-1.7,-.8) -- (-1,-.8);
\draw[dashed] (1.5,0)--(1.5,-.8);
\draw[dashed, ->] (1.5,-.8) -- (1,-.8);
\draw (0,-.9) node[draw,line width=.5pt, inner sep=1pt,minimum size=2pt,text width=1.3cm,text height = 0.2cm,align=center](u) {\scriptsize compute $u_3$};
\draw[dashed] (-1,-1)--(-2.45,-1);
\draw[dashed, ->] (-2.45,-1) -- (-2.45,-1.75) ;
\end{scope}
\begin{scope}[shift={(0,-4)}]
\draw (0,0) node[draw,line width=1pt, inner sep=5pt,minimum size=12pt,text width=1.5cm,align=center](H) {LTI};
\draw[->, line width=1pt] (-2,0) node[left,font=\small] {$u_3$} -- (H) ;
\draw[->, line width=1pt] (H) -- (2,0) node[right,font=\small] {$y_3$};
\draw[dashed] (-1.7,0)--(-1.7,-.8);
\draw[dashed, ->] (-1.7,-.8) -- (-1,-.8);
\draw[dashed] (1.5,0)--(1.5,-.8);
\draw[dashed, ->] (1.5,-.8) -- (1,-.8);
\draw (0,-.9) node[draw,line width=.5pt, inner sep=1pt,minimum size=2pt,text width=1.3cm,text height = 0.2cm,align=center](u) {\scriptsize compute $u_4$};
\draw[dashed] (-1,-1)--(-2.45,-1);
\draw[dashed, ->] (-2.45,-1) -- (-2.45,-1.75) ;
\end{scope}
\begin{scope}[shift={(0,-6)}]
\draw (0,0) node[draw,line width=1pt, inner sep=5pt,minimum size=12pt,text width=1.5cm,align=center](H) {LTI};
\draw[->, line width=1pt] (-2,0) node[left,font=\small] {$u_4$} -- (H) ;
\draw[->, line width=1pt] (H) -- (2,0) node[right,font=\small] {$y_4$};
\node[font = \large] at (0,-.65) {$\cdots$};
\end{scope}
\end{tikzpicture} 
\raisebox{3em}{\begin{tikzpicture}[scale=.9]  
\begin{axis}[
		axis lines=left, xtick=\empty, ytick=\empty, ztick=\empty,
		view ={-40}{35},
		small,
		colormap/violet,
]
\addplot3[
	surf,
	domain=-2:2,
	domain y=-1.3:1.3,
] 
	{exp(-x^2-y^2)*x};	
	\node[right, black] at (axis cs:-1.35,1.4,0.395){\footnotesize{$\rho(u)$}};	
\addplot3[dashed,red!80!black, very thick] coordinates { (0.9,0.7,0.25) (1.2,0.5,0.3)};
\addplot3[red!80!black, very thick, ->] coordinates { (0.6,1,0.11) (0.9,0.7,0.25)};
\addplot3[red!80!black, very thick, ->] coordinates {(0.5,1.2,-0.07) (0.6, 1, 0.1) };
\addplot3[red!80!black, very thick, ->] coordinates {(0.3,1.15,-0.3) (0.5,1.2,-0.08) };
\addplot3[red!80!black, very thick, dashed] coordinates {(0.1,0.8,-0.42) (0.3,1.15,-0.32) };
\end{axis}
\end{tikzpicture}}
\caption{The general idea is to iteratively perform (numerical) experiments to converge to the parameter corresponding to a certain system property, e.g. the $\mathcal{L}^2$-gain. The gradient ascent algorithm is based on only input-output data while the system and hence the optimization function $u \mapsto \rho(u)$ remains undisclosed.}
\label{fig:sampling2}
\end{figure}
\subsection{Outline}
The thrust of this work is to present a systematic approach to iteratively determine certain dissipation inequalities from input-output samples, where the input-output map remains undisclosed. In particular, we use multiple input-output trajectories with known initial condition to investigate the $\mathcal{L}^2$-gain, the shortage of passivity and conic relations, respectively.
We start in Sec.~\ref{sec:discrete} with discrete time LTI systems with a thorough analysis of continuous time optimization as well as the implications for the iterative scheme (discrete time optimization), where advanced sampling schemes can improve the convergence rate. While the general ideas have already been presented in \cite{Wahlberg2010, Romer2017b} and \cite{Romer2018b}, we are presenting stronger convergence results considering the shortage of passivity and conic relations. Furthermore, we introduce an improved iterative sampling strategy for both, the shortage of passivity and the cone with minimum radius. 
In Sec.~\ref{sec:extensions}, we generalize the framework presented in Sec.~\ref{sec:discrete}. Firstly, we extend the general approach to continuous time LTI systems in Sec.~\ref{sec:continuous} leading us to evolution equations, whose convergence properties we then investigate. We then shortly summarize the extension to multiple input multiple output (MIMO) systems in Sec.~\ref{sec:mimo} as presented in \cite{Oomen2014} and \cite{Romer2018a} and additionally provide results on the robustness of the presented framework to measurement noise. Finally, we apply the introduced approaches to different simulation examples in Sec.~\ref{sec:example} including an oscillating system and a high dimensional system, and end with a short conclusion in Sec.~\ref{sec:conclusion}.
\section{System Analysis for Discrete Time LTI Systems}
\label{sec:discrete}
Since our premise is to determine system properties from input-output data, one natural approach is the input-output framework introduced and presented for example in \cite{Zames1966} and \cite{Desoer1975}. Hence, we assume our system to be an operator that maps inputs $u$ to outputs $y$. In practical application, this input-output map is often undisclosed. However, we can perform simulations or experiments where we choose the input $u$ and measure the corresponding output $y$.
We start with a single input single output (SISO) discrete time LTI systems 
\begin{align}
y(t) = \sum_{k=0}^{\infty} g_k u(t-k),
\label{eq:1}
\end{align}
where $g_k$ denotes the impulse response sequence, $u$ is the input to the system and $y$ is the output of the system. 
For a given input sequence $u(t), t = 1, ..., n$ the input to output operator in \eqref{eq:1} can be written in matrix notation
\begin{align}
\begin{pmatrix} y(1)\\ \vdots \\ y(n) \end{pmatrix}
=
 \begin{pmatrix} g_0 & 0 & 0 & ... & 0 \\ 
 				g_1 & g_0 & 0 & ... & 0 \\ 
 				g_2 & g_1 & g_0 & ... & 0 \\ 
 				\vdots &  &  & \ddots &  \vdots \\ 
 				g_{n-1} & g_{n-2} & ... & g_1 & g_0 \\ 
 				\end{pmatrix}  				
\begin{pmatrix} u(1)\\ \vdots \\ u(n) \end{pmatrix}
\label{eq:yGu}
\end{align}
in the following denoted by $y = G u$ with $u, y \in \mathbb{R}^n$ and $G \in \mathbb{R}^{n \times n}$.
The matrix $G$ representing the convolution operator for finite inputs $u \in \mathbb{R}^n$ is a lower triangular Toeplitz matrix. 
Note that we assume $u(t)=0$ for $t \leq 0$ and only consider causal, asymptotically stable systems. However, the ideas of \cite{Tanemura2019b} can be applied for converting the sampling strategies to closed-loop approaches where pre-stabilizing controllers enable the application to unstable systems.
\subsection{$\mathcal{L}^2$-Gain}
\label{sec:gain}
The small-gain theorem, as for example presented in \cite{Zames1966}, plays an important role in systems analysis, stability studies and controller design. With the knowledge of an upper bound on the $\mathcal{L}^2$-gain of open-loop elements, the stability of the closed loop can be validated. 
The constant $\gamma$ is an upper bound on the $\mathcal{L}^2$-gain of a dynamical system if
\begin{align}
\|y\| \leq \gamma \|u\|
\label{eq:gamma}
\end{align}
holds for all input-output tuples $(u,y)$, where $u$ and $y$ are taken from some Hilbert space $\mathcal{H}$ of which $\langle \cdot, \cdot \rangle$ denotes the inner product and $\| \cdot \|$ the corresponding induced norm. A graphical interpretation of such a gain bound is depicted in Fig.~\ref{fig:l2gain}.
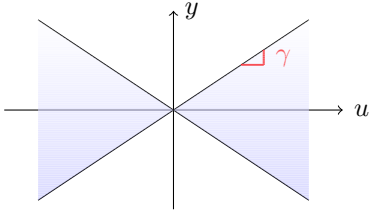
\begin{figure}[ht]
\centering
\begin{tikzpicture}[scale=0.6]
     \draw[->] (-3.75,0) -- (3.75,0) ;
		 \node[right=.8pt] at (3.75,0){$u$};
     \draw[->] (0,-2.2) -- (0,2.2) ;
		 \node[right=.8pt] at (0,2.2){$y$};
		\draw[-] (-3,-2.0) -- (3,2.0);		
		\draw[-] (-3,2.0) -- (3,-2.0);		
		\draw[-, red, thick] (1.5,1.0) -- (3*2/3,1.0);
		\draw[-, red,thick] (3*2/3,1.0) -- (3*2/3,2.0*2/3);
		\node[right=.8pt, red,thick] at (3*2/3,2.0*7/12){\textbf{$\gamma$}};
    \draw[draw=none, top color=white,bottom color=blue!30,fill opacity=0.3]
    (0,0)-- (-3,-2.0) -- (-3,2.0) -- cycle;
		 \draw[draw=none, top color=white,bottom color=blue!30,fill opacity=0.3]
    (0,0)-- (3,2.0) -- (3,-2.0) -- cycle;
\end{tikzpicture}
\caption{A graphical illustration of the $\mathcal{L}^2$-gain, denoted by $\gamma$, in the input-output plane.}
\label{fig:l2gain}
\end{figure}
For an iterative model-free approach to determine the $\mathcal{L}^2$-gain $\gamma$, we formulate the definition \eqref{eq:gamma} into the optimization problem 
\begin{align}
\gamma^2 = \sup_{\|u\|^2 \neq 0} \frac{\|y\|^2}{\|u\|^2}.
\label{eq:l2gain}
\end{align}
For discrete time LTI systems as given in \eqref{eq:yGu}, the optimization problem in \eqref{eq:l2gain} then reads
\begin{align}
\gamma^2  = \max_{\|u\|^2 \neq 0} \rho_1 (u) = \max_{\|u\|^2 \neq 0} \frac{u^\top G^\top G u}{\|u\|^2} ,
\label{eq:rayleigh1}
\end{align}
where the term $\rho_1(u)$ is also referred to as the Rayleigh quotient. The Rayleigh quotient is a smooth function $\rho_1(u): \mathbb{R}^n  \setminus \{0\} \rightarrow \mathbb{R}$ that is scale-invariant since $\rho_1(u) = \rho_1(\alpha u)$ holds for all scalars $\alpha \neq 0$. Therefore, it is sufficient to consider the Rayleigh quotient on the unit sphere $S^{n-1} =\{ u \in \mathbb{R}^n | \|u\| = 1\}$.
The critical points and critical values of $\rho_1$ are the eigenvectors and eigenvalues of $G^\top G$, respectively, as shown for example in \cite{Helmke1996}. Thus, the maximum value of the Rayleigh quotient \eqref{eq:l2gain} is exactly the maximum eigenvalue $\lambda_1$ of the symmetric matrix $G^\top G$. This relation is also referred to as the variational characterization of eigenvalues or as the Courant-Fischer-Weyl principle.

Our first proposition recasts a results of \cite{Wahlberg2010} and states that the gradient of the Rayleigh quotient can in fact be computed by only sampling two input-output data tuples, which can be generated, for example, from simulations or experiments.
\begin{prop}
The gradient vector field of $\rho_1: S^{n-1} \rightarrow \R$ 
is given by
\begin{align}
\nabla \rho_1(u) = 2 G^\top G u - 2 \rho_1(u) u
\label{eq:grad1a}
\end{align}
and can be computed by evaluating $u \mapsto Gu$ twice.
\label{prop:1a}
\end{prop}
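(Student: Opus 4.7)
The plan is to derive the gradient formula \eqref{eq:grad1a} by straightforward differentiation of the Rayleigh quotient and then to exploit the Toeplitz structure of $G$ in \eqref{eq:yGu} to show that the resulting vector field is obtainable from two forward simulations.

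First, I would compute the Euclidean gradient of $\rho_1$ treated as a smooth function on $\mathbb{R}^n\setminus\{0\}$. Applying the quotient rule to $\rho_1(u)=(u^\top G^\top G u)/\|u\|^2$, the symmetry of $G^\top G$ gives $\nabla(u^\top G^\top G u)=2G^\top G u$ and $\nabla\|u\|^2 = 2u$, yielding
\begin{equation*}
\nabla_{\mathbb{R}^n}\rho_1(u)=\frac{2\,G^\top G u}{\|u\|^2}-\frac{2u\,(u^\top G^\top G u)}{\|u\|^4}=\frac{2}{\|u\|^2}\bigl(G^\top G u-\rho_1(u)\,u\bigr).
\end{equation*}
Next I would restrict to $u\in S^{n-1}$. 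Since $\rho_1$ is scale-invariant, the Riemannian gradient of $\rho_1|_{S^{n-1}}$ is the orthogonal projection of the Euclidean gradient onto $T_uS^{n-1}=\{v:u^\top v=0\}$. On the sphere $\|u\|^2=1$, so the Euclidean gradient reduces to $2G^\top G u-2\rho_1(u)u$, and this expression is already tangent to $S^{n-1}$ at $u$, because $u^\top\bigl(2G^\top G u-2\rho_1(u)u\bigr)=2\rho_1(u)-2\rho_1(u)=0$. Hence no projection is needed and \eqref{eq:grad1a} follows.

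Finally, I would justify the two-evaluation claim. One forward simulation produces $y:=Gu$ and, as a by-product, the scalar $\rho_1(u)=u^\top y$. The only remaining quantity is $G^\top y$. Because $G$ is the lower-triangular Toeplitz convolution matrix of \eqref{eq:yGu}, letting $J$ denote the order-reversing permutation on $\mathbb{R}^n$, one has $G^\top = JGJ$. Therefore $G^\top y = J\bigl(G(Jy)\bigr)$, i.e.\ $G^\top y$ is computed by time-reversing $y$, performing a second simulation $u\mapsto Gu$ on the reversed signal, and time-reversing the result. Combining these two simulations gives the entire right-hand side of \eqref{eq:grad1a} from input-output data alone, without any knowledge of the $g_k$.

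The only subtle point is the tangency check on $S^{n-1}$, which ensures that the Riemannian and Euclidean gradient formulas coincide and no extra projection term is required; everything else is the quotient rule and a short observation about the adjoint of a real Toeplitz convolution operator.
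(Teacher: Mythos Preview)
Your approach is essentially identical to the paper's: compute the Euclidean gradient of the Rayleigh quotient, restrict to the sphere, and use the identity $G^\top = JGJ$ (the paper writes $P$ for your $J$) to realize $G^\top G u$ via two time-reversed simulations. Your explicit tangency check is a nice addition that the paper leaves implicit.

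There is one slip, however: you write $\rho_1(u)=u^\top y$ with $y=Gu$, but $u^\top y = u^\top G u$, whereas on $S^{n-1}$ the Rayleigh quotient is $\rho_1(u)=u^\top G^\top G u = y^\top y = \|y\|^2$. This does not damage the two-evaluation claim, since $\|y\|^2$ is equally a by-product of the first simulation (and the paper in fact computes $\rho_1(u)=u^\top(PGPGu)$ from the second), but the stated formula should be corrected.
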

\begin{proof}
We endow the unit sphere $S^{n-1}$ with the standard Riemannian metric, i.e. the Riemannian metric induced from the embedding $S^{n-1} \subset \mathbb{R}^n$. Hence, the gradient at $u \in S^{n-1}$ is uniquely determined by
\begin{align*}
\nabla \rho_1(u)&= 
\frac{2 G^\top G u \cdot u^\top u - 2 u^\top G^\top G u \cdot u}{(u^\top u)^2} \\
&=2 G^\top G u - 2 \rho_1(u) u.
\end{align*}
In order to compute $\nabla \rho_1(u)$ from evaluating $u \mapsto Gu$, we define the involutory permutation matrix
\begin{align*}
P = \begin{pmatrix} 0 & ... & 0 & 1 \\
0 &  ... & 1 & 0 \\
\vdots  & .^{.^{.}} & &\vdots \\
1 &  ... & 0 & 0 
\end{pmatrix}
\end{align*} 
with $P = P^{-1}$. 
Note that the matrices $G$ and $G^\top$ are involutory conjugate since $PG^\top = GP$ holds. Hence, we can compose $G^\top u$ by
${G^\top u = PGPu}$.
This finally leads to 
\begin{align*}
\nabla \rho_1(u) =  2 P G P G u - 2 (u^\top  P G P G u) u
\end{align*} 
which only consists of operations we can perform by evaluating $u \mapsto Gu$. 
\end{proof}
In experiments or simulations, the term $PGPGu$ can be obtained by performing one (numerical) experiment $y = Gu$, applying the reversed output $PGu$ to $G$ in a second experiment and reversing the output again. 
\subsubsection{$\mathcal{L}^2$-Gain - Continuous Time Solution} 
To find the maximum of the Rayleigh quotient, we employ a gradient dynamical system
\begin{align}
\frac{\mathrm{d}}{\mathrm{d} \tau} u(\tau) = \nabla \rho_1(u(\tau))
\label{eq:dyn_sys_1a}
\end{align}
with $u(\tau) = \begin{pmatrix} u(\tau,1), ..., u(\tau,n) \end{pmatrix}$, along whose solutions $\rho_1$ increases monotonically, as illustrated in Fig.~\ref{fig:rhos_illustration}(a).
This leads us to the evolution equation
\begin{align}
\frac{\mathrm{d}}{\mathrm{d} \tau} u(\tau) = 2 G^\top G u(\tau) - 2 \rho_1(u(\tau)) u(\tau), 
\label{eq:raydyn}
\end{align}
also known as the Rayleigh quotient gradient flow. 
It is readily verified that the gradient flow \eqref{eq:raydyn} leaves the sphere $S^{n-1}$ invariant \cite{Helmke1996}.
On the sphere $S^{n-1}$, the Rayleigh quotient gradient flow is in fact equivalent to the so-called \textit{Oja flow}
\begin{align}
\frac{\mathrm{d}}{\mathrm{d} \tau} u(\tau) =(u(\tau)^\top (G^\top G -  u(\tau)^\top G^\top G u(\tau)  I_n) u(\tau),
\label{eq:oja}
\end{align}
defined in $\mathbb{R}^n$, 
with $I_n$ being the $n \times n$ identity matrix. The \textit{Oja flow} is used, for example, in neural network learning theory as a means to determine the eigenvectors corresponding to the largest eigenvalues.
\begin{theorem}
Assume ${ \lambda_1 > \lambda_{2} \geq ... \geq \lambda_n }$ for the eigenvalues $\lambda_i$ of $G^\top G$. 
For almost all initial conditions $u(0)$ with $\|u(0)\| = 1$, $\rho_1$ converges to $\gamma^2$, the squared $\mathcal{L}^2$-gain, along the solutions of \eqref{eq:raydyn}. 
\label{thm:1a}
\end{theorem}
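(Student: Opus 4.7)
The plan is to combine gradient-flow convergence on a compact manifold with a linearization argument that excludes the non-maximal eigenvectors as attractors for almost every initial condition. Since the text already notes that $S^{n-1}$ is invariant under \eqref{eq:raydyn}, I can take the flow to be a well-defined smooth dynamical system on a compact manifold for all $\tau \geq 0$.

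The first substantive step is to turn $\rho_1$ into a Lyapunov-type function. Because the restriction of \eqref{eq:raydyn} to $S^{n-1}$ is exactly the Riemannian gradient ascent of $\rho_1$ with respect to the standard metric, along any solution
\begin{equation*}
\tfrac{\mathrm{d}}{\mathrm{d}\tau}\rho_1(u(\tau)) \;=\; \langle \nabla \rho_1(u(\tau)),\, \dot u(\tau)\rangle \;=\; \|\nabla \rho_1(u(\tau))\|^2 \;\geq\; 0,
\end{equation*}
with equality iff $u(\tau)$ is a critical point of $\rho_1$. LaSalle's invariance principle, applied to the compact set $S^{n-1}$, then forces every solution to accumulate on the set of equilibria of \eqref{eq:raydyn}, which (as recalled in the text) coincides with the unit eigenvectors $\pm v_1,\ldots,\pm v_n$ of $G^\top G$ with critical values $\lambda_1,\ldots,\lambda_n$. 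Since the critical values are isolated, $\rho_1(u(\tau))$ must converge to some eigenvalue $\lambda_i$.

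The next step, and the main obstacle, is to rule out convergence to any $\lambda_i$ with $i\geq 2$ except for a measure-zero set of initial conditions. For this I would linearize the tangential flow on $S^{n-1}$ at an eigenvector $v_i$: writing a tangent perturbation in the orthonormal basis $\{v_j : j\neq i\}$, a short calculation gives the linearization $\dot w_j = 2(\lambda_j - \lambda_i) w_j$ for $j\neq i$. Hence the unstable subspace of $v_i$ has dimension equal to $\#\{j : \lambda_j > \lambda_i\}$. For any $i\geq 2$, the assumption $\lambda_1>\lambda_2\geq\cdots\geq\lambda_n$ gives at least one positive eigenvalue (the $v_1$-direction), so $\pm v_i$ is a hyperbolic saddle (or source) of the tangential flow; the stable manifold theorem then provides a smooth stable manifold of dimension at most $n-2$, which has measure zero in $S^{n-1}$. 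At $\pm v_1$, by contrast, all tangential eigenvalues $2(\lambda_j - \lambda_1)$ are strictly negative, so $\pm v_1$ are asymptotically stable.

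Finally, I would finish by a standard measure-theoretic argument: the set of initial conditions on $S^{n-1}$ whose omega-limit contains some $\pm v_i$, $i\geq 2$, is contained in the countable union of the stable manifolds of these saddles, hence has measure zero. For every initial condition outside this exceptional set the trajectory converges to $\pm v_1$, and by continuity of $\rho_1$, $\rho_1(u(\tau))\to \lambda_1 = \gamma^2$, which is the claim. The delicate point to make precise is that LaSalle only yields $\omega$-limit sets in the critical set, not pointwise convergence to a single equilibrium; I would bridge this either by invoking \L ojasiewicz-style convergence for real-analytic gradient flows, or by a direct argument using the discrete set of critical values and the fact that trajectories close to a hyperbolic sink actually converge to it.
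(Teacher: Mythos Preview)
Your argument is essentially the content of the result the paper invokes: the paper's own proof is a one-line citation of Thm.~3.4 in \cite{Helmke1996}, which establishes almost-global convergence of the Rayleigh quotient gradient flow on $S^{n-1}$ by exactly the mechanism you outline (gradient structure $\Rightarrow$ convergence to the critical set, linearization to classify equilibria, stable sets of non-maximal critical points have measure zero). So your route is not different in spirit, just unpacked rather than cited.

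There is, however, one technical gap in your write-up. The hypothesis is only $\lambda_1>\lambda_2\geq\cdots\geq\lambda_n$, so eigenvalues $\lambda_i$ with $i\geq 2$ may be repeated. In that case the corresponding critical points are \emph{not} isolated and \emph{not} hyperbolic: if $\lambda_i$ has multiplicity $m$, the critical set at level $\lambda_i$ is an $(m-1)$-sphere inside $S^{n-1}$, and your linearization acquires $m-1$ zero eigenvalues along the tangent directions to that sphere. The ordinary stable manifold theorem for hyperbolic equilibria, which you invoke, does not apply directly. The fix is standard and is precisely what \cite{Helmke1996} (and the paper itself, for $\rho_2$, in Lemma~\ref{lem:2b}) uses: $\rho_1$ is a Morse--Bott function on $S^{n-1}$, each non-maximal critical submanifold has at least one normally unstable direction (the $v_1$-component, with eigenvalue $2(\lambda_1-\lambda_i)>0$), and hence its stable set has codimension at least one and measure zero. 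Alternatively, a center-stable manifold argument gives the same conclusion. Your \L ojasiewicz remark for pointwise convergence is fine, since $\rho_1$ is real-analytic; the Morse--Bott route also settles this (cf.\ Prop.~3.9 in \cite{Helmke1996}).
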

\begin{proof}
This result follows directly from Thm.~3.4 in \cite{Helmke1996}.
\end{proof}
Note that the convergence almost everywhere only excludes starting points in the union of eigenspaces of $G^\top G$ corresponding to the eigenvalues $\lambda_2,...,\lambda_n$, which is a nowhere dense subset of $S^{n-1}$.
\begin{figure}[ht]%
\centering
{
\definecolor{mycolor1}{rgb}{0.00000,0.44700,0.74100}%
\begin{tikzpicture}

\begin{axis}[%
width=0.18\textwidth,
height=0.15\textwidth,
at={(0cm,0cm)},
axis lines=left, xtick=\empty, 
scale only axis,
xmin=0,
xmax=110,
xlabel={\phantom{k} $\tau$ \phantom{k}},
xmajorgrids,
ymin=-0.6,
ymax=0.2,
ymajorgrids,
yticklabels={,,},
ylabel={$\rho_1 (u(\tau))$},
axis background/.style={fill=white}
]
\addplot [color=mycolor1,solid,forget plot]
  table[row sep=crcr]{%
1	-0.678766018796828\\
2	-0.58921067245482\\
3	-0.506379218381688\\
4	-0.430359590426493\\
5	-0.361216375120553\\
6	-0.298985551677845\\
7	-0.243666412751247\\
8	-0.195208670813367\\
9	-0.153491117480858\\
10	-0.118285254291913\\
11	-0.0891929129088353\\
12	-0.0655457606279279\\
13	-0.0462858236917084\\
14	-0.0299856596427596\\
15	-0.0152821561273365\\
16	-0.00139758373513255\\
17	0.0119289649227392\\
18	0.0247504113001466\\
19	0.03706512392231\\
20	0.0488602151676209\\
21	0.0601211741729933\\
22	0.0708337600724437\\
23	0.0809843525201968\\
24	0.0905600285537514\\
25	0.0995485956564224\\
26	0.107938617358506\\
27	0.115719436217299\\
28	0.12288119465532\\
29	0.129414853634438\\
30	0.135312209084072\\
31	0.140565905973281\\
32	0.145169449876492\\
33	0.149117215817918\\
34	0.15240445406815\\
35	0.155027292357943\\
36	0.156982733536999\\
37	0.158268646607985\\
38	0.158883745130599\\
39	0.158827487374844\\
40	0.158883606730264\\
41	0.158826065168842\\
42	0.158876285645092\\
43	0.158771995481494\\
44	0.158804915650707\\
45	0.158393913832851\\
46	0.157197501215369\\
47	0.158449124600938\\
48	0.157239245201372\\
49	0.158458101266488\\
50	0.15724615361474\\
51	0.158459613573304\\
52	0.157247321605825\\
53	0.158459870460731\\
54	0.157247520561509\\
55	0.158459914373598\\
56	0.157247554703601\\
57	0.158459921946284\\
58	0.157247560627597\\
59	0.158459923270679\\
60	0.157247561674442\\
61	0.158459923507953\\
62	0.157247561865466\\
63	0.158459923552331\\
64	0.157247561902391\\
65	0.158459923561289\\
66	0.157247561910272\\
67	0.158459923563337\\
68	0.157247561912223\\
69	0.158459923563891\\
70	0.1572475619128\\
71	0.158459923564068\\
72	0.157247561912998\\
73	0.158459923564133\\
74	0.157247561913075\\
75	0.158459923564159\\
76	0.157247561913106\\
77	0.15845992356417\\
78	0.15724756191312\\
79	0.158459923564174\\
80	0.157247561913125\\
81	0.158459923564176\\
82	0.157247561913127\\
83	0.158459923564178\\
84	0.157247561913129\\
85	0.158459923564177\\
86	0.157247561913129\\
87	0.158459923564178\\
88	0.157247561913129\\
89	0.158459923564178\\
90	0.157247561913129\\
91	0.158459923564178\\
92	0.157247561913129\\
93	0.158459923564178\\
94	0.15724756191313\\
95	0.158459923564178\\
96	0.157247561913129\\
97	0.158459923564178\\
98	0.15724756191313\\
99	0.158459923564178\\
100	0.157247561913129\\
101	0.158459923564178\\
};

\pgfplotsset{
    after end axis/.code={
        \node[above] at (axis cs:50,.2){\small{(a)}};    
    }
}

\end{axis}
\end{tikzpicture}%
}
{
\definecolor{mycolor1}{rgb}{0.00000,0.44700,0.74100}%
\begin{tikzpicture}

\begin{axis}[%
width=0.18\textwidth,
height=0.15\textwidth,
at={(0cm,0cm)},
axis lines=left, xtick=\empty, 
scale only axis,
xmin=0,
xmax=23,
xlabel={$k$},
xmajorgrids,
ymin=-0.6,
ymax=0.2,
ylabel={$\rho_1 (u_k)$},
yticklabels={,,},
ymajorgrids,
axis background/.style={fill=white}
]
\addplot [color=mycolor1,mark size=1.5pt,only marks,mark=*,mark options={solid},forget plot]
  table[row sep=crcr]{%
1	-0.6\\
2	-0.300700848219992\\
3	-0.0954488612874305\\
4	-0.0313721396074676\\
5	0.0188401384777219\\
6	0.0538288506860507\\
7	0.0776652815515371\\
8	0.0948467210811977\\
9	0.106721459247973\\
10	0.115182493587895\\
11	0.120957663463824\\
12	0.12508252737633\\
13	0.127845860093466\\
14	0.12985516357986\\
15	0.131156629499007\\
16	0.13214495721754\\
17	0.132742823366923\\
18	0.133241072314412\\
19	0.133501639986227\\
20	0.133765241465315\\
21	0.133864407586912\\
};

\pgfplotsset{
    after end axis/.code={
        \node[above] at (axis cs:10,.2){\small{(b)}};  
    }
}
\end{axis}
\end{tikzpicture}%
}
\caption{Illustration of the (a) continuous time and (b) discrete time gradient ascent optimization of the cost function $\rho_1: S^{n-1} \rightarrow \mathbb{R}$.}%
\label{fig:rhos_illustration}
\end{figure}
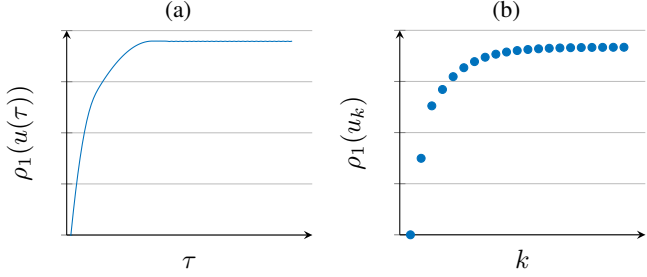
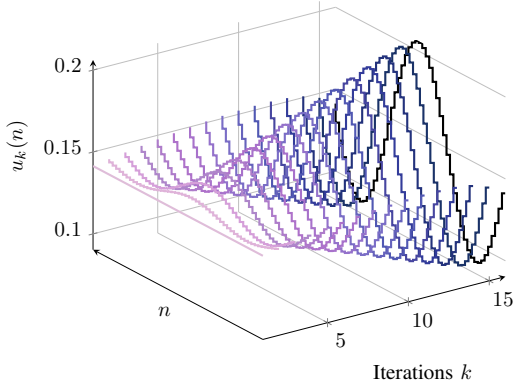
\begin{figure}[ht]
\centering
\begin{tikzpicture}[scale=0.8]
    \begin{axis}[  
		axis lines=left, ytick=\empty, 
        xlabel={Iterations $k$},
        ylabel={$n$},
        zlabel={$u_k(n)$},
        view/h=-35,
        grid=both, 
        cycle list name=color list]  
    \addplot3[const plot, no marks, color=P, line width = 1pt] table {ffigs/T16.dat};    
    \addplot3[const plot, no marks, color=O, line width = 1pt] table {ffigs/T15.dat};    
    \addplot3[const plot, no marks, color=N, line width = 1pt] table {ffigs/T14.dat};
    \addplot3[const plot, no marks, color=M, line width = 1pt] table {ffigs/T13.dat};
    \addplot3[const plot, no marks, color=L, line width = 1pt] table {ffigs/T12.dat};
    \addplot3[const plot, no marks, color=K, line width = 1pt] table {ffigs/T11.dat};
    \addplot3[const plot, no marks, color=J, line width = 1pt] table {ffigs/T10.dat};
    \addplot3[const plot, no marks, color=I, line width = 1pt] table {ffigs/T9.dat};
    \addplot3[const plot, no marks, color=H, line width = 1pt] table {ffigs/T8.dat};
    \addplot3[const plot, no marks, color=G, line width = 1pt] table {ffigs/T7.dat};    
    \addplot3[const plot, no marks, color=F, line width = 1pt] table {ffigs/T6.dat};    
    \addplot3[const plot, no marks, color=E, line width = 1pt] table {ffigs/T5.dat};    
    \addplot3[const plot, no marks, color=D, line width = 1pt] table {ffigs/T4.dat};    
    \addplot3[const plot, no marks, color=C, line width = 1pt] table {ffigs/T3.dat};    
    \addplot3[const plot, no marks, color=B, line width = 1pt] table {ffigs/T2.dat};    
    \addplot3[no marks, color=A, line width = 1pt] table {ffigs/T1.dat};    
    \end{axis}
\end{tikzpicture}
\caption{An inital input $u_0 \in \mathbb{R}^n$ iteratively converges towards the input $u$ corresponding to the parameter $\gamma$ ($\L2$-gain).}
\label{fig:PDE_disc}
\end{figure}
\subsubsection{$\L2$-Gain - Discrete Time Solution}  
In an experimental setup, we can only iteratively determine the gradient from input-output data and hence, we extend the result to discrete time optimization. Generally, an iterative approach for maximizing $\rho_1$ in \eqref{eq:rayleigh1} is to construct a sequence $(u_{k})_{k=1,2,...}$, as illustrated in Fig.~\ref{fig:PDE_disc}, such that ${ \rho_1(u_{k+1}) > \rho_1(u_{k}) }$ holds for all $k$ as depicted in Fig.~\ref{fig:rhos_illustration}(b). 
One standard tool in numerical linear algebra to construct such a sequence converging towards the dominant eigenvector and corresponding eigenvalue of a linear operator is the power method
\begin{align}
u_{k+1} = \frac{G^\top G u_k}{\| G^\top G u_k \|}.
\label{eq:power_method}
\end{align}
This method, which was amongst other methods proposed in \cite{Wahlberg2010}, presents a possible approach for iteratively estimating the $\mathcal{L}^2$-gain without any knowledge of an explicit expression of $G$, since we can retrieve the expression $G^\top G u_k$ for any $u_k \in \mathbb{R}^n$ by sampling two input-output samples $G u_k$ and $G^\top G u_k = P G P G u_k$. Hence, by iterative input-output sampling, we can apply the power method for finding the $\mathcal{L}^2$-gain while the input-output operator remains undisclosed.
\begin{prop}[\cite{Helmke1996,Wahlberg2010}]
Let the largest eigenvalue of $G^\top G$ be unique. 
Then, for almost all initial conditions $u_0$ with $\|u_0\| = 1$, the sequence $(u_k)_{k=1,2,\dots}$ constructed by \eqref{eq:power_method} converges to the dominant eigenvector of $G^\top G$ and hence, $\rho_1$ converges to $\gamma^2$, the squared $\mathcal{L}^2$-gain, along this sequence.
\end{prop}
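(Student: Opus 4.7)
The proposition is a standard power-method convergence statement applied to the symmetric positive semi-definite matrix $G^\top G$; the only non-standard aspect is that each iterate is obtained from two experiments via the $PGPG$ identity from Prop.~\ref{prop:1a}, but this does not alter the direction of $u_k$. First, I would verify by induction that
\[
u_k = \frac{(G^\top G)^k u_0}{\|(G^\top G)^k u_0\|}
\]
whenever the denominator is nonzero, thereby reducing the claim to a direct analysis of the iterates $(G^\top G)^k u_0$ with no further reference to the normalization at intermediate steps.

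Next, I would exploit the spectral theorem: since $G^\top G$ is symmetric, it admits an orthonormal eigenbasis $\{v_1,\dots,v_n\}$ with $(G^\top G) v_i = \lambda_i v_i$ and, by assumption, $\lambda_1 > \lambda_2 \geq \cdots \geq \lambda_n \geq 0$. Expanding the initial condition as $u_0 = \sum_{i=1}^n \alpha_i v_i$ yields
\[
(G^\top G)^k u_0 = \lambda_1^k \left( \alpha_1 v_1 + \sum_{i=2}^n \alpha_i \left(\tfrac{\lambda_i}{\lambda_1}\right)^{k} v_i \right).
\]
The exceptional set to be excluded is $\{u_0 \in S^{n-1}: \alpha_1 = 0\}$, i.e.\ the intersection of $S^{n-1}$ with a hyperplane, which has surface measure zero. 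Off this set, each ratio $|\lambda_i/\lambda_1| < 1$ for $i \geq 2$, so the bracketed quantity converges to $\alpha_1 v_1$; after normalization the prefactor $\lambda_1^k$ cancels and $u_k \to \pm v_1$.

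Finally, continuity of $\rho_1 : S^{n-1} \to \mathbb{R}$ together with the identification $\rho_1(v_1) = \lambda_1 = \gamma^2$ established in Sec.~\ref{sec:gain} yields $\rho_1(u_k) \to \gamma^2$ along this almost-everywhere convergent sequence. The main subtlety, rather than the asymptotic estimate itself, is ensuring that $(G^\top G)^k u_0 \neq 0$ at every step so that the iteration is well defined; this is automatic, since $\alpha_1 \neq 0$ combined with $\lambda_1 > 0$ forces a nonvanishing $v_1$-component in every iterate, and orthogonality of the eigenbasis then bounds $\|(G^\top G)^k u_0\|$ away from zero. The decisive ingredient, the spectral gap, is supplied directly by the uniqueness assumption on $\lambda_1$.
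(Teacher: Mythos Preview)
Your proof is correct. The paper itself does not prove this proposition at all: it is stated with attribution to \cite{Helmke1996,Wahlberg2010} and no argument is given. Your self-contained treatment---the induction to $(G^\top G)^k u_0/\|(G^\top G)^k u_0\|$, the eigenbasis expansion, the identification of the measure-zero exceptional set $\{\alpha_1=0\}$, and the well-definedness check---is the standard power-method proof and is exactly what one would find by chasing the references. One minor remark: your claim that $\lambda_1>0$ follows from the uniqueness hypothesis is correct (since $G^\top G\succeq 0$, a simple largest eigenvalue forces $\lambda_1>\lambda_2\geq 0$ whenever $n\geq 2$), but you might state this explicitly rather than leave it implicit.
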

With the scale-invariance of the Rayleigh quotient, the relevant information is contained in the direction of $u_k$. In other words, an iterate $u_k \in S^{n-1}$ represents the one-dimensional subspace $\{\beta u_k : \beta \in \mathbb{R} \}$. This links our approaches to optimizing over the real projective $(n{-}1)$-space, usually denoted by $\mathbb{RP}^{n{-}1}$. The real projective $(n{-}1)$-space is defined as the set of all lines through the origin in $\mathbb{R}^{n}$. For further information, the interested reader is referred to \cite{Helmke1996}.

To improve the convergence rate of \eqref{eq:power_method}, the application of the Lanczos method is introduced in \cite{Wahlberg2010}. In \cite{Rojas2012}, Rojas et al. propose another approach to decrease the required input-output samples. Since $G$ is a lower triangular Toeplitz matrix, the matrix $ P G $ is symmetric and can be factorized into $ Q \Lambda Q^\top$ where $Q$ is an orthonormal matrix and $ \Lambda $ is a diagonal matrix with the eigenvalues of $PG$. With $G^\top G = P G P G = Q \Lambda Q^\top Q \Lambda Q^\top = Q \Lambda^2 Q^\top$, we hence find that the maximum absolute eigenvalue of $ P G $ is exactly the square root of the maximal eigenvalue of $G^\top G$. Hence, finding the $\L2$-gain by applying the power method to the matrix $P G$ requires only one sample per iteration.

There is another well-known method that maximizes the Rayleigh quotient, namely the Rayleigh quotient iteration, which uses the iteration scheme
\begin{align} u_{k+1} = \frac{((G^\top G) - \rho_1(u_k)I_n)^{-1} u_k}{\|((G^\top G) - \rho_1(u_k)I_n)^{-1} u_k\|}.
\end{align}
The Rayleigh quotient iteration has stronger convergence properties \cite{Batterson1989}, but it cannot be applied in the present setting since we cannot compute the right hand side without explicit knowledge of $G$.
\subsection{Passivity}
\label{sec:pass}
Besides the $\L2$-gain, passivity is one of the key properties that can be exploited in order to analyze stability and design controllers, cf. \cite{Schaft2000}. The relevance of passivity for feedback control was recognized early, providing well-known feedback theorems for passive systems (cf. \cite{Zames1966} and \cite{Desoer1975}).
We start with a general input-output definition of passivity \cite{Desoer1975}. 
A system that maps inputs $u$ to the outputs $y$ is said to be passive if
\begin{align}
\langle y, u \rangle \geq 0
\label{eq:pas}
\end{align}
holds for all input-output tuples $(u,y)$, where $u$ and $y$ are taken from some Hilbert space $\mathcal{H}$. For controller design, however, we are specifically interested to which extent a system is or is not passive. 
The shortage of passivity is defined as the smallest $s$ such that
 \begin{align}
\langle y, u \rangle \geq -s \|y\|^2 
\label{eq:sop}
\end{align}
holds for all input-output tuples $(u,y)$. The system is said to be output strictly passive if $s<0$. A graphical illustration of output strict passivity is given in Fig. \ref{fig:shortage}.
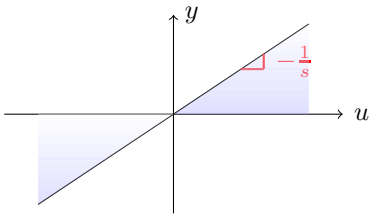
\begin{figure}[b]
\centering
\begin{tikzpicture}[scale=0.6]
     \draw[->] (-3.75,0) -- (3.75,0) ;
		 \node[right=.8pt] at (3.75,0){$u$};
     \draw[->] (0,-2.2) -- (0,2.2) ;
		 \node[right=.8pt] at (0,2.2){$y$};
		\draw[-] (-3,-2.0) -- (3,2.0);		
		\draw[-, red, thick] (1.5,1.0) -- (3*2/3,1.0);
		\draw[-, red,thick] (3*2/3,1.0) -- (3*2/3,2.0*2/3);
		\node[right=.8pt, red,thick] at (3*2/3,2.0*7/12){\textbf{$-\frac{1}{s}$}};
    \draw[draw=none, top color=white,bottom color=blue!30,fill opacity=0.3]
    (0,0)-- (-3,-2.0) -- (-3,0) -- cycle;
		 \draw[draw=none, top color=white,bottom color=blue!30,fill opacity=0.3]
    (0,0)-- (3,2.0) -- (3,0) -- cycle;
\end{tikzpicture}
\caption{A graphical illustration of output strict passivity in a plane, where $s$ denotes the shortage of passivity.}
\label{fig:shortage}
\end{figure}
For $s>0$, the shortage of passivity corresponds to the excess of passivity of a controller required to render the closed loop stable. For a more detailed description of passivity and its relevance for the application of well-known feedback theorems, the reader is referred to \cite{Zames1966}, \cite{Schaft2000} and Chapter 6 of \cite{Desoer1975}. Another parameter to determine to which extent a system is or is not (input strictly) passive is the input-feedforward passivity index, for which a sampling strategy can be found in \cite{Romer2017b}, extended in \cite{Tanemura2019}. While we only consider the shortage of passivity here in detail, input-strict passivity nicely fits into the general framework presented in this paper and the extensions and discussions in Sec.~\ref{sec:extensions} hence also hold for the input feedforward passivity parameter.

The shortage of passivity definition~\eqref{eq:sop} for discrete time LTI systems of the form \eqref{eq:yGu} reads
\begin{align}
u^\top G u \geq -s u^\top G^\top G u,
\label{eq:sop_lti}
\end{align}
which must hold for all admissible inputs $u$. Let us now assume that $g_0 \neq 0$. Consequently, the Toeplitz matrix $G$ has full rank and $G^{\top} G$ is positive definite.
Reformulating the definition of the system property \eqref{eq:sop_lti} into an optimization problem, with $u^\top G u = \frac{1}{2} u^\top (G+G^\top) u$ due to the symmetry in quadratic terms, leads to
\begin{align}
\begin{split}
s  = - \min_{\|u\| \neq 0} \rho_2(u) 
= - \min_{\|u\| \neq 0} \frac{1}{2}\frac{u^\top (G+G^\top) u}{u^\top G^\top G u}.
 \end{split}
 \label{eq:rho}
\end{align}
The term $\rho_2$ here is also referred to as the generalized Rayleigh quotient, which is a smooth function ${\rho_2: \mathbb{R}^n \backslash \{0\} \rightarrow \mathbb{R}}$. 
This minimization problem represents a generalized eigenvalue problem, where the critical points and critical values of $\rho_2$ are the generalized eigenvectors $v_i$ and generalized eigenvalues $\lambda_i$ of the pair ${(\frac{1}{2}(G+G^\top),G^\top G)}$ \cite{Romer2017b} defined by 
$$ \frac{1}{2}(G+G^\top) v_i = \lambda_i G^\top G v_i, \quad i = 1, \dots, n
.$$
Therefore, we are searching for the smallest generalized eigenvalue of the generalized eigenvalue problem denoted by ${\lambda_n}$. With $s=-\lambda_n$, this allows to infer information on passivity ($s \leq 0$), output strict passivity ($s < 0$) and the shortage of passivity ($s > 0$).

Due to the scale invariance of the generalized Rayleigh quotient $\rho_2$, we consider the optimization problem \eqref{eq:rho} on the sphere $S^{n-1}$. Our second proposition states that the gradient of the generalized Rayleigh quotient can be computed by only sampling three input-output tuples.
\begin{prop}
The gradient vector field of $\rho_2: S^{n-1} \rightarrow \R$ is given by
\begin{align}
\nabla \rho_2(u) & = \frac{1}{\|G u\|^2} ((G+G^\top)u - 2\rho_2(u) G^\top G u)
\label{eq:grad2a} \\
& = \frac{Gu + PGPu}{\|Gu\|^2} - \frac{u^\top (Gu + PGPu)}{\|Gu\|^4} (PG)^2 u. \notag
\end{align} 
and can be computed by evaluating $u \mapsto Gu$ thrice.
\label{prop:1b}
\end{prop}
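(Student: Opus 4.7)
The plan is to mirror the strategy used in the proof of Proposition \ref{prop:1a}: first obtain the Euclidean gradient on $\mathbb{R}^n \setminus \{0\}$ by the quotient rule, argue that on the unit sphere the Riemannian gradient (with respect to the induced metric) coincides with the Euclidean one, and then rewrite every occurrence of $G^\top$ using the identity $G^\top = PGP$ already established for lower triangular Toeplitz matrices, so that each term appearing in the formula can be produced by a call to the oracle $u \mapsto Gu$.

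Concretely, I would first note that $\rho_2(\alpha u) = \rho_2(u)$ for all $\alpha \neq 0$, i.e.\ $\rho_2$ is homogeneous of degree $0$. Differentiating this identity in $\alpha$ at $\alpha = 1$ gives $u^\top \nabla_{\mathbb{R}^n} \rho_2(u) = 0$, which means the Euclidean gradient already lies in the tangent space $T_u S^{n-1} = \{v \in \mathbb{R}^n : u^\top v = 0\}$ at every $u \in S^{n-1}$. Consequently no orthogonal projection is needed and the Riemannian gradient on the sphere equals the Euclidean gradient of the ambient extension. Applying the quotient rule to $\rho_2(u) = \tfrac12 u^\top (G+G^\top) u / (u^\top G^\top G u)$ and substituting the definition of $\rho_2$ back into the expression produces the first displayed line, namely
\begin{equation*}
\nabla \rho_2(u) = \frac{1}{\|Gu\|^2}\bigl((G+G^\top)u - 2\rho_2(u)\,G^\top G u\bigr).
\end{equation*}

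Next, I would invoke the relation $G^\top = PGP$ from the proof of Proposition \ref{prop:1a}, which yields $G^\top u = PGPu$ and $G^\top G u = PGPG u = (PG)^2 u$. Substituting these into the previous expression, together with $u^\top (G+G^\top) u = u^\top(Gu + PGPu)$, gives exactly the second displayed formula of the statement.

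Finally, to verify that three evaluations of $u \mapsto Gu$ suffice, I would track the data produced by each call: the first call on $u$ returns $Gu$ and hence $\|Gu\|^2$; the second call on $Pu$ returns $GPu$, from which $PGPu = G^\top u$ is obtained by a free permutation, giving $(G+G^\top)u = Gu + PGPu$; the third call, applied to $PGu$ (which is available from the first call by another permutation), returns $GPGu$, and a final application of $P$ produces $PGPGu = (PG)^2 u = G^\top G u$. All ingredients of the second displayed formula are thereby assembled. The only non-routine step is the bookkeeping of these three oracle calls; the differential calculus itself is entirely mechanical once scale invariance is observed.
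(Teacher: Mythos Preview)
Your proposal is correct and complete. The paper itself does not give a proof but simply cites \cite{Romer2017b}, Lemma~2; your argument---scale invariance forces the Euclidean gradient to be tangent to $S^{n-1}$, the quotient rule yields the first line, and the identity $G^\top = PGP$ together with the three oracle calls $(u,Gu)$, $(Pu,GPu)$, $(PGu,GPGu)$ supplies every term in the second line---is exactly the natural direct verification that the paper defers to that reference.
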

\begin{proof}
This result follows directly from \cite{Romer2017b}, Lemma 2.
\end{proof}
Computing the gradient vector field \eqref{eq:grad2a} hence requires the three data samples $(u, Gu)$, $(u, PGPu)$, and $(u, (PG)^2 u)$ from three consecutive (numerical) experiments. For reasons of measurement noise, it is recommendable to calculate $\|Gu\|^2$ by $u^\top (PG)^2 u$ (cf. \cite{Wahlberg2010}).
\subsubsection{Passivity - Continuous Time Solution} 
In order to find the smallest generalized eigenvalue $\lambda_n$ and hence the shortage of passivity, we employ the gradient dynamical system
\begin{align}
\frac{\mathrm{d}}{\mathrm{d} \tau} u(\tau) &= - \nabla \rho_2(u(\tau)) \label{eq:dyn2} \\
&= \frac{1}{\|G u(\tau)\|^2} (2\rho_2(u(\tau)) G^\top G u(\tau) {-} (G{+}G^\top)u(\tau)) \notag
\end{align}
along whose solutions $\rho_2$ decreases monotonically. 
By 
\begin{align*}
\frac{\mathrm{d}}{\mathrm{d} \tau} \| u(\tau) \|^2 &= 2 u(\tau)^\top \frac{\mathrm{d}}{\mathrm{d} \tau} u(\tau) \\
&= \frac{2}{\|G u(\tau)\|^2} \left( 2 \rho_2(u(\tau)) u(\tau)^\top G^\top G u(\tau) \right. \\[-1ex] 
& \quad \quad \quad \quad \quad \quad - u(\tau)^\top (G+G^\top) u(\tau) \left.  \right) \\[1ex]
&= 2 \left( 2 \rho_2(u(\tau)) - 2 \rho_2(u(\tau)) \right) = 0
\end{align*}
we verified that \eqref{eq:dyn2} leaves the sphere $S^{n-1}$ invariant.

When discussing convergence on the unit sphere, it is important to recall that there can never be only one critical point of any vector field on the unit sphere. The Euler 
 characteristic is two for any even-dimensional sphere and zero for any odd-dimensional sphere \cite[Thm.~2.3]{Hirsch1976}. 
Hence, almost global convergence, which excludes a nowhere dense subset of $S^{n-1}$, is the strongest convergence result possible for our vector field on the unit sphere.

Gradient flows of Morse-functions or more generally of Morse-Bott functions have, with the topological restrictions, strong convergence properties on manifolds. 
We say $\rho_2: S^{n-1} \rightarrow \mathbb{R}$ is a Morse-Bott function provided the following three conditions from \cite[p.~21]{Helmke1996} are satisfied:
\begin{enumerate}
\item[a)] $\rho_2: S^{n-1} \rightarrow \mathbb{R}$ has compact sublevel sets.
\item[b)] $C(\rho_2) = \cup_{j=1}^{k} N_j$ with $N_j$ being disjoint, closed and connected submanifolds of $S^{n-1}$ and $\rho_2$ being constant on $N_j$, $j=1,\dots,k$.
\item[c)] $\ker \left( H_{\rho_2} (u) \right) = T_u N_j$, for all $u \in N_j$, $j=1,\dots,k$.
\end{enumerate}
Here, $C(\rho_2)$ denotes the set of critical points of $\rho_2$, $H_{\rho_2} (u)$ denotes the Hessian of $\rho_2$ at $u$, $\ker \left( H_{\rho_2} (u) \right)$ denotes the kernel of the Hessian of $\rho_2$ at $u$ and $T_u N_j$ is the tangent space of $N_j$ at $u$.

Due to the strong convergence properties of gradient flows of Morse-Bott functions, we show in the following that $\rho_2: S^{n-1} \rightarrow \R$ is indeed a Morse-Bott function as defined above.
\begin{lem}
The generalized Rayleigh quotient $\rho_2$ on the unit sphere $S^{n-1}$ is a Morse-Bott function.
\label{lem:2b}
\end{lem}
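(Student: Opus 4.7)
The plan is to verify the three Morse-Bott conditions of \cite[p.~21]{Helmke1996} in turn, using the abbreviations $A:=\tfrac{1}{2}(G+G^\top)$ and $B:=G^\top G$. Condition (a) is immediate: since $S^{n-1}$ is compact, every sublevel set $\{u\in S^{n-1}:\rho_2(u)\le c\}$ is closed in a compact ambient space and hence compact.

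For (b) I use the standing assumption $g_0\neq 0$ (made just before \eqref{eq:rho}), which makes $G$ invertible and $B\succ 0$. The symmetric pencil $(A,B)$ therefore admits a complete system of real generalized eigenvalues $\lambda_1\geq\dots\geq\lambda_n$; let $\mu_1>\dots>\mu_r$ be the distinct values with eigenspaces $E_j:=\ker(A-\mu_j B)$ of dimensions $m_j$. From \eqref{eq:grad2a}, $\nabla\rho_2(u)=0$ on $S^{n-1}$ exactly when $Au=\rho_2(u)Bu$, i.e.\ $u\in E_j\cap S^{n-1}$ with $\rho_2(u)=\mu_j$. Hence $C(\rho_2)=\bigsqcup_j (E_j\cap S^{n-1})$. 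For $m_j\geq 2$, $E_j\cap S^{n-1}$ is a connected sphere $S^{m_j-1}$; for $m_j=1$ we split the two antipodal points into two singleton components. In every case one obtains disjoint, closed, connected submanifolds on which $\rho_2$ is constantly $\mu_j$.

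Condition (c) is the crux. Let $F$ denote the scale-invariant extension of $\rho_2$ to $\mathbb{R}^n\setminus\{0\}$. Scale-invariance forces the Euclidean gradient to lie in the tangent plane at every unit vector, so at a critical point $u_\star\in E_j\cap S^{n-1}$ one in fact has $\nabla F(u_\star)=0$ in the ambient sense. Consequently the second-fundamental-form correction in the Gauss formula vanishes, and the intrinsic Hessian $H_{\rho_2}(u_\star)$ coincides with the orthogonal projection onto $T_{u_\star}S^{n-1}$ of the Euclidean Hessian $H_F(u_\star)$. A direct quotient-rule differentiation of $\nabla F(u)=\tfrac{2}{u^\top Bu}(Au-F(u)Bu)$, using $\nabla F(u_\star)=0$ and $(A-\mu_j B)u_\star=0$, collapses all cross terms and yields
$$H_F(u_\star)=\frac{2}{u_\star^\top B u_\star}\,(A-\mu_j B).$$
A tangent vector $w\in T_{u_\star}S^{n-1}$ thus lies in $\ker H_{\rho_2}(u_\star)$ iff $(A-\mu_j B)w$ is a scalar multiple of $u_\star$; pairing with $u_\star$ and using symmetry of $A-\mu_j B$ together with $(A-\mu_j B)u_\star=0$ forces this multiple to be zero. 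Therefore $w\in E_j$, and together with $w\perp u_\star$ we conclude $w\in E_j\cap u_\star^\perp=T_{u_\star}N_j$; the reverse inclusion is immediate.

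The expected main obstacle is precisely this last step: one has to recognise that the scale-invariance of $\rho_2$ kills the shape-operator contribution at critical points (so the Hessian can be read off from the ambient quotient), and then exploit the self-adjointness of $A-\mu_j B$ to pin the kernel down inside the tangent space rather than in all of $\mathbb{R}^n$.
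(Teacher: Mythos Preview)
Your proof is correct and follows essentially the same route as the paper: verify the three Morse--Bott conditions in turn, identify the critical set with the generalized eigenspaces of the pair $(A,B)$ intersected with the sphere, and compute the Hessian at a critical point as $\tfrac{2}{u_\star^\top Bu_\star}(A-\mu_jB)$ to pin down its kernel. If anything, you are more careful than the paper on two minor points---you explicitly split the antipodal pair into two connected components when $m_j=1$, and you handle the tangential projection in condition~(c) via the Gauss formula and the symmetry of $A-\mu_jB$ rather than jumping directly to the ambient kernel condition---but the argument is otherwise the same.
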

\begin{proof}
Condition a) of the definition of a Morse-Bott function requires that for all $c \in \mathbb{R}$ the sublevel set $\{ u \in S^{n-1} | \rho_2(u) \leq c \}$ is a compact subset of $S^{n-1}$. Since $S^{n-1}$ is compact and $\rho_2$ is continuous, this is satisfied.

The critical points $C(\rho_2)$ are all $u \in S^{n-1}$ such that
\begin{align*}
\left( \frac{1}{2}(G+G^\top) - \rho_2(u) G^\top G \right) u = 0,
\end{align*}
which are exactly the generalized eigenvectors of the matrix pair $(\frac{1}{2}(G+G^\top),G^\top G )$. All eigenvalues with geometric multiplicity one are hence isolated critical points. 

If the generalized eigenvalue $\lambda_i$ has geometric multiplicity of $m$, 
then the solution of the equation
\begin{align*}
\left(\frac{1}{2} (G+G^\top) - \lambda_i G^\top G \right) u = 0
\end{align*}
is an $m$-dimensional linear subspace of $\mathbb{R}^{n}$ and a closed connected submanifold $N_i$ of dimension $m-1$ on the unit sphere $S^{n-1}$. On this submanifold, $\rho_2(u) = \lambda_i$ for all $u \in N_i$. Therefore, condition b) is also satisfied.
Finally, we need to show that also condition c) holds. 
According to the above discussion, $T_u N_j$ is contained in $\ker \left( H_{\rho_2} (u) \right)$ and we only have to show that $\ker \left( H_{\rho_2} (u) \right) \subseteq T_u N_j$ $\forall u \in N_j$. Hence, we start by calculating the Hessian $H$ of $\rho_2$ at the critical points. 

Let $v_i \in C(\rho)$ be the generalized eigenvector corresponding to the generalized eigenvalue $\lambda_i$ of multiplicity $m$. 
Then the symmetric Hessian matrix of $\rho_2$ at $v_i$ is given by
\begin{align*}
H_{\rho_2} (v_i) 
= \frac{2}{\|G v_i\|^2} \left( \frac{1}{2}(G + G^\top) -  \lambda_i G^\top G \right).
\end{align*}
In this case, the vector $v_i$ is an element of an $(m-1)$-dimensional submanifold $N_i$, and the nullspace of the Hessian $ H_{\rho_2} (v_i) $ is exactly the eigenspace corresponding to $\lambda_i$. Let $\psi \in T_{v_i} S^{n-1}$ have a normal component to the eigenspace $N_i$, but $\psi$ still lies in the kernel of the Hessian $ H_{\rho_2} (v_i) $. Then
\begin{align*}
\lambda_i G^\top G \psi - \frac{1}{2}(G + G^\top) \psi = 0
\end{align*}
must hold and hence $(\psi, \lambda_i)$ is a solution to the generalized eigenvalue problem and $\psi \in N_i$, which leads to a contradiction. Hence, the Hessian $H_{\rho_2} (v_i)$ has full rank in any direction normal to $N_i$ at any $v_i \in N_i$. We say that every critical point of $S^{n-1}$ belongs to a nondegenerate critical submanifold. 

Altogether, we have shown that the generalized Rayleigh quotient $\rho_2$ is indeed a Morse-Bott function on the unit sphere $S^{n-1}$, which concludes the proof.
\end{proof}
With this result, we find strong convergence guarantees for the generalized Rayleigh quotient flow, summarized in the following theorem.
\begin{theorem}
Assume $\lambda_n < \lambda_{n-1} \leq \dots \leq \lambda_1$ for the generalized eigenvalues $\lambda_i$ of the matrix pair $\left( \frac{1}{2} (G + G^\top), G^\top G \right)$. For almost all initial conditions $u(0)$ with $\|u(0)\| = 1$, $\rho_2$ converges to $-s$, the shortage of passivity, along the solutions of \eqref{eq:dyn2}. 
\label{thm:1b}
\end{theorem}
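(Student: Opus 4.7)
The plan is to combine Lemma~\ref{lem:2b} with the general convergence theory for gradient flows of Morse--Bott functions on compact Riemannian manifolds. The argument splits naturally into two parts: \emph{(i)} showing that every trajectory of \eqref{eq:dyn2} converges to a connected component of the critical set, and \emph{(ii)} ruling out convergence to any non-minimizing component except on a measure-zero set of initial conditions.

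For step \emph{(i)}, I would first note that the flow \eqref{eq:dyn2} is confined to $S^{n-1}$ (already verified above $\eqref{eq:dyn2}$), and that $\rho_2 \circ u$ is monotonically non-increasing and bounded below by $\lambda_n$. Combined with compactness of $S^{n-1}$ and the Morse--Bott property established in Lemma~\ref{lem:2b}, the standard convergence theorem for Morse--Bott gradient flows on compact manifolds (e.g.\ the argument surrounding Prop.~3.9 in~\cite{Helmke1996}) ensures that the $\omega$-limit set of every solution lies in a single critical submanifold $N_j$, on which $\rho_2 \equiv \lambda_j$. So $\rho_2(u(\tau)) \to \lambda_j$ for some $j \in \{1,\dots,n\}$.

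For step \emph{(ii)}, under the assumption $\lambda_n < \lambda_{n-1}$ the minimizing submanifold $N_n$ collapses to the two antipodal generalized eigenvectors $\pm v_n/\|v_n\|$. For every other submanifold $N_j$ with $j < n$, the Hessian computed in the proof of Lemma~\ref{lem:2b},
\begin{equation*}
H_{\rho_2}(v_j) = \frac{2}{\|Gv_j\|^2}\Bigl( \tfrac{1}{2}(G+G^\top) - \lambda_j G^\top G \Bigr),
\end{equation*}
satisfies $v_n^\top H_{\rho_2}(v_j) v_n = \frac{2(\lambda_n-\lambda_j)}{\|Gv_j\|^2} v_n^\top G^\top G v_n < 0$, reflecting the standard fact that the Morse index of the generalized Rayleigh quotient at $v_j$ equals the number of generalized eigenvalues strictly smaller than $\lambda_j$. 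Hence each $N_j$ with $j<n$ is normally hyperbolic with nontrivial unstable direction under the negative gradient flow, and the stable-manifold theorem for normally hyperbolic critical submanifolds yields that its stable manifold $W^s(N_j)$ is an immersed submanifold of $S^{n-1}$ of positive codimension.

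Concluding, the union $\bigcup_{j<n} W^s(N_j)$ is a nowhere dense subset of $S^{n-1}$ of Lebesgue measure zero, so every initial condition outside this set belongs to $W^s(N_n)$ and yields $\rho_2(u(\tau)) \to \lambda_n = -s$ along \eqref{eq:dyn2}. The step I expect to require the most care is the Morse-index count in step \emph{(ii)}: the generalized eigenvectors are $G^\top G$-orthogonal rather than Euclidean-orthogonal, so identifying a full set of tangent directions to $S^{n-1}$ on which the normal Hessian is negative demands a small linear-algebra argument, but once this is done the rest reduces to invoking standard Morse--Bott machinery already leveraged in the analogous treatment of Theorem~\ref{thm:1a}.
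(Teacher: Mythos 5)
Your proposal is correct and follows essentially the same route as the paper: Lemma~\ref{lem:2b} plus the Morse--Bott convergence result (Prop.~3.9 of \cite{Helmke1996}) to get convergence to some critical submanifold, then a Hessian/linearization computation at each $v_j$ exploiting the $G^\top G$-orthogonality of the generalized eigenvectors (the paper evaluates the quadratic form on $v_n+\alpha v_j\in T_{v_j}S^{n-1}$, which is exactly the projection you flag as the delicate step) to show only $\pm v_n$ attract. Your phrasing of the final step via stable manifolds of the normally hyperbolic saddles is in fact slightly more careful than the paper's "complement of the union of eigenspaces," but the substance is identical.
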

\begin{proof}
We start by showing that $\rho_2$ has two minima at the eigenvector $\pm v_n$ corresponding to the smallest eigenvalue $\lambda_n$. All other critical points are saddle points or maxima of $\rho_2$.

The linearization of \eqref{eq:dyn2} on the unit sphere at any generalized eigenvector $v_i$ corresponding to the generalized eigenvalue $\lambda_i$, $i=1, \dots n$, reads
\begin{align*}
\frac{\partial}{\partial \tau} u(\tau) = \frac{2}{\|G v_i\|^2} \left( \lambda_i G^\top G - \frac{1}{2}(G+G^\top) \right) u(\tau)&, \\  u(\tau)^\top v_i = 0&.
\end{align*}
To study the exponential stability of the critical points, we are now interested in the eigenvalues of the symmetric matrix $(\lambda_i G^\top G - \frac{1}{2}(G+G^\top) )$ for all $i=1, \dots, n$ corresponding to the critical points $v_i$. Since $(\lambda_i G^\top G - \frac{1}{2}(G+G^\top) )$ is a symmetric matrix, the eigenvalues of $(\lambda_i G^\top G - \frac{1}{2}(G+G^\top) )$ in the tangent space $T_{v_i} S^{n-1}$ are all negative if and only if 
\begin{align}
u^\top \left(\lambda_i G^\top G - \frac{1}{2}(G+G^\top) \right) u < 0 \quad \forall u \in T_{v_i} S^{n-1}.
\label{eq:cond}
\end{align} 

From the definition of critical points, we know
$\lambda_i G^\top G v_i - \frac{1}{2}(G+G^\top) v_i = 0$
for all $i=1, \dots, n$. Adding $\lambda_n G^\top G v_i$ to both sides reads
\begin{align}
\lambda_n G^\top G v_i - \frac{1}{2}(G+G^\top) v_i = (\lambda_n - \lambda_i) G^\top G v_i.
\label{eq:pfthm2}
\end{align}

Since $\frac{1}{2}(G+G^\top), G^\top G$ are symmetric matrices, there exists a basis for $\R^n$ of generalized eigenvectors, which are $G^\top G$-orthogonal (i.e.~$v_i^\top G^\top G v_j = 0$, for $i \neq j$). Thus, every vector $u \in T_{v_i} S^{n-1}$ can be decomposed into a linear combination of these generalized eigenvectors $v_i$, $i=1, \dots, n$.
Multiplying $v_i^\top$ on both sides of \eqref{eq:pfthm2}, we retrieve
\begin{align*}
v_i^\top \left( \lambda_n G^\top G  - \frac{1}{2}(G+G^\top) \right) v_i = (\lambda_n - \lambda_i) \|G v_i\|^2
\end{align*}
which is strictly less than zero for all $i \neq n$. With $u = \sum_{i=1}^{n} \alpha_i v_i$, where $\alpha_i \in \R$ for $i=1, \dots, n$, we find 
\begin{align*}
u^\top \left( \lambda_n G^\top G  - \frac{1}{2}(G+G^\top) \right) u = \sum_{i=1}^{n} \alpha_i^2 (\lambda_n - \lambda_i) \|G v_i\|^2
\end{align*}
which is negative if at least one $\alpha_i \neq 0$ for any $i{=}1, \dots, n{-}1$.
With condition \eqref{eq:cond}, the eigenvalues of $(\lambda_n G^\top G - \frac{1}{2}(G+G^\top) )$ in the tangent space $T_{v_n} S^{n-1}$ are hence all negative, and the critical points $\pm v_n$ are exponentially stable.
Analogously, the definition of generalized eigenvalues yields
\begin{align*}
v_n^\top \left( \lambda_i G^\top G  - \frac{1}{2}(G+G^\top)\right) v_n = (\lambda_i - \lambda_n) \|G v_n \|^2.
\end{align*}
Since we can always choose $\alpha \in \mathbb{R}$ such that $(v_n + \alpha v_i) \in T_{v_i} S^{n-1}$, we find
\begin{align*}
&(v_n + \alpha v_i)^\top \left( \lambda_i G^\top G  - \frac{1}{2}(G+G^\top)\right) (v_n + \alpha v_i) \\
&= (\lambda_i - \lambda_n) \|G v_n \|^2,
\end{align*}
which is strictly greater than zero for all $i \neq n$. With condition \eqref{eq:cond}, there therefore exists at least one positive eigenvalue of $(\lambda_i G^\top G - \frac{1}{2}(G+G^\top) )$ on the tangent space $T_{v_i} S^{n-1}$ for all $i \neq n$. Any critical point $v_i$ with $i \neq n$ is hence a saddle point or a maximum of $\rho_2$.

Due to the reasoning above, only the isolated critical points $\pm v_n$ can be attractors for \eqref{eq:dyn2}. The union of generalized eigenspaces of the matrix pair $(\frac{1}{2} (G+G^\top), G^\top G)$ corresponding to the generalized eigenvalues $\lambda_1, \dots \lambda_{n-1}$ is a nowhere dense subset in $S^{n-1}$. With Lemma \ref{lem:2b}, $\rho_2$ is a Morse-Bott function and Prop.~3.9 from \cite{Helmke1996} applies. Therefore, every solution of the gradient flow converges as $t \rightarrow \infty$ to an equilibrium point, and hence, every solution of \eqref{eq:dyn2} starting in the complement of the union of generalized eigenspaces corresponding to the generalized eigenvalues $\lambda_1, \dots \lambda_{n-1}$ will converge to either $v_n$ or $-v_n$. This completes our proof.
\end{proof}
One other approach to investigate the generalized Rayleigh quotient $\rho_2$ is by performing a linear coordinate transform $y=Gu$. Since $G$ is full rank, there always exists an inverse transformation. Define ${T=G^{-\top} (G+G^\top) G^{-1}}$. By transformation, we retrieve the standard Rayleigh quotient
\begin{align*}
\rho_2(u) 
&= \frac{1}{2}\frac{u^\top G^{T} G^{-\top}(G+G^\top)G^{-1} G u}{u^\top G^\top G u} = \frac{1}{2}\frac{y^\top T y}{y^\top y}
\end{align*} 
with the symmetric matrix $T$, where the eigenvalues of $T$ correspond to the generalized eigenvalues $\lambda_i$ of the pair ${ (\frac{1}{2}(G + G^\top), G^\top G) }$. With $G$ unknown, however, this transformation cannot directly be used for an iterative scheme to determine the shortage of passivity.
\subsubsection{Passivity - Discrete Time Solution}  
In any application, we can only iteratively determine the gradient. We thus extend our results to discrete time optimization where we improve the convergence through exact line search. Generally speaking, discrete time minimization problems on manifolds can be approached by the general update formula
\begin{align} u_{k+1} = R_{u_k} (\alpha_k p_k)
\label{eq:update_general}
\end{align}
where the search direction $p_k$ lies in the tangent space 
$T_{u_k}S^{n-1}$ and $\alpha_k$ denotes the step length. The mapping $R_{u_k}$ is also called a retraction mapping from the tangent space $T_{u_k}S^{n-1}$ to the manifold $S^{n-1}$ \cite{Absil2008}. Choosing
\begin{align}
u_{k+1} &= R_{u_k} (\alpha_k p_k)  = \frac{u_k+\alpha_k p_k}{\|u_k+\alpha_k p_k\|}
\label{eq:update}
\end{align}
yields a valid retraction onto the sphere $S^{n-1}$ \cite{Absil2008}, which is defined for all vectors that lie in a tangent space $T_{u}S^{n-1}$.

Let the search direction be the negative gradient $p_k = -\nabla \rho_2(u_k) \in T_{u_k}S^{n-1}$, which can be computed from data tuples according to Prop.~\ref{prop:1b}. 
There exist various approaches on how to choose the step size $\alpha_k$. Literature on this topic has a long history and goes back to \cite{Hestenes1951,Hestenes1951b}, where the convergence for (generalized) Rayleigh quotient iterations with fixed step size or optimized step sizes are investigated. In fact, even though the input-output map of the discrete time LTI system remains undisclosed, we can still perform a line search algorithm in the present setting. Minimizing
\begin{align}
& 2 \rho_2(R_{u_k} (\alpha_k p_k)) = 2 \rho_2(u_k + \alpha_k p_k) \notag \\
&= \frac{u_k^\top (G {+} G^\top) u_k + 2 \alpha_k u_k^\top (G {+} G^\top) p_k + \alpha_k^2 p_k^\top (G {+} G^\top) p_k}{u_k^\top G^\top G u_k + 2 \alpha_k u_k^\top G^\top G p_k + \alpha_k^2 p_k^\top G^\top G p_k} \notag \\
&= \frac{\begin{pmatrix} 1 \\ \alpha_k \end{pmatrix}^\top \begin{pmatrix} u_k^{\mathsmaller{\top}} (G{+}G^{\mathsmaller{\top}}) u_k & u_k^{\mathsmaller{\top}} (G{+}G^{\mathsmaller{\top}}) p_k \\ u_k^{\mathsmaller{\top}} (G{+}G^{\mathsmaller{\top}}) p_k & p_k^{\mathsmaller{\top}} (G{+}G^{\mathsmaller{\top}}) p_k \end{pmatrix} \begin{pmatrix} 1 \\ \alpha_k \end{pmatrix}}
{\begin{pmatrix} 1 \\ \alpha_k \end{pmatrix}^\top \begin{pmatrix} u_k^{\mathsmaller{\top}} G^{\mathsmaller{\top}} G u_k & u_k^{\mathsmaller{\top}} G^{\mathsmaller{\top}} G p_k \\ u_k^{\mathsmaller{\top}} G^{\mathsmaller{\top}} G p_k & p_k^{\mathsmaller{\top}} G^{\mathsmaller{\top}} G p_k \end{pmatrix} \begin{pmatrix} 1 \\ \alpha_k \end{pmatrix}}
\label{eq:linesearch}
\end{align} 
with respect to the step size $\alpha_k$ yields yet another generalized eigenvalue problem. Scaling the eigenvector that corresponds to the smaller eigenvalue such that the first entry equals one, the second entry denotes the optimized step size $\alpha_k^\star$. The optimized step size can again be computed by evaluating $u \mapsto Gu$ three additional times, without knowledge of $G$.
We generalize in the following the main result from \cite{Tanemura2019}, which shows for the input-feedforward passivity index that no additional input-output samples are required for finding the optimal step size. Similarly this also holds for the shortage of passivity via induction.
\begin{theorem}
Given $(G{+}G^\top) u_k$, $G^\top G u_k$, $(G{+}G^\top) p_k$, $G^\top G p_k$ and $\alpha_k^\star$. Then the gradient $p_{k+1}$ and the optimal step size $\alpha_{k+1}^\star$ can be computed by evaluating $u \mapsto Gu$ thrice.
\end{theorem}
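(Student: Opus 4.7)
The plan is to exploit two observations: the retraction $u_{k+1} = (u_k + \alpha_k^\star p_k)/\lVert u_k + \alpha_k^\star p_k \rVert$ is an \emph{affine} combination of $u_k$ and $p_k$ (after normalization), and the line-search expression \eqref{eq:linesearch} together with the gradient formula \eqref{eq:grad2a} depend on $G$ only through the four quadratic building blocks $(G{+}G^\top)u_{k+1}$, $G^\top G u_{k+1}$, $(G{+}G^\top)p_{k+1}$ and $G^\top G p_{k+1}$. Hence the proof reduces to showing that these four vectors can be produced using exactly three evaluations of $u \mapsto G u$.

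First I would observe that no evaluation of $G$ is needed to obtain the $u_{k+1}$-quantities. Since $(G{+}G^\top)$ and $G^\top G$ are linear operators and the normalization constant $\lVert u_k + \alpha_k^\star p_k \rVert$ can be computed from the scalars $u_k^\top u_k = 1$, $u_k^\top p_k$ (known since $p_k \in T_{u_k}S^{n-1}$ gives $u_k^\top p_k = 0$) and $p_k^\top p_k$ (obtainable as $-u_k^\top (G{+}G^\top)p_k / (2\rho_2(u_k) \lVert G u_k\rVert^2)$ from stored data, or simply stored from the previous iteration), one can assemble
\begin{align*}
(G{+}G^\top) u_{k+1} &= \tfrac{1}{\lVert u_k + \alpha_k^\star p_k\rVert}\bigl((G{+}G^\top)u_k + \alpha_k^\star (G{+}G^\top)p_k\bigr), \\
G^\top G u_{k+1} &= \tfrac{1}{\lVert u_k + \alpha_k^\star p_k\rVert}\bigl(G^\top G u_k + \alpha_k^\star G^\top G p_k\bigr),
\end{align*}
purely algebraically from the five given quantities. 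Plugging these into \eqref{eq:grad2a} then yields $p_{k+1} = -\nabla \rho_2(u_{k+1})$ without calling $G$.

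Next I would show that the remaining two vectors $(G{+}G^\top)p_{k+1}$ and $G^\top G p_{k+1}$ require precisely three evaluations of $u \mapsto Gu$. Using the involutory-conjugate identity $G^\top = P G P$ established in the proof of Prop.~\ref{prop:1a}, I would carry out: (i) one evaluation to obtain $G p_{k+1}$; (ii) one evaluation of $G$ on $P p_{k+1}$, whose left-multiplication by $P$ gives $G^\top p_{k+1}$ and therefore $(G{+}G^\top)p_{k+1}$; (iii) one evaluation of $G$ on $P(G p_{k+1})$ already produced in step (i), whose left-multiplication by $P$ yields $G^\top G p_{k+1}$. Substituting the four assembled quantities into the Rayleigh-type quotient \eqref{eq:linesearch} reduces $\alpha_{k+1}^\star$ to the solution of a $2\times 2$ generalized eigenvalue problem, which can be solved in closed form without invoking $G$ again.

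The only delicate step is the linearity-plus-normalization bookkeeping for $u_{k+1}$: one must verify by induction that every scalar appearing in the normalization and in \eqref{eq:linesearch} is recoverable from the five pieces of data available at step $k$, so that the three $G$-evaluations are genuinely spent only on the new direction $p_{k+1}$. Once that induction hypothesis is carried, concluding the theorem is just substitution.
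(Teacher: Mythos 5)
Your proposal is correct and follows essentially the same route as the paper: use linearity of $(G{+}G^\top)$ and $G^\top G$ to assemble the $u_{k+1}$-quantities (and hence $p_{k+1}$ via \eqref{eq:grad2a}) for free, then spend the three evaluations on $G p_{k+1}$, $G P p_{k+1}$ and $G P G p_{k+1}$ to obtain $(G{+}G^\top)p_{k+1}$ and $G^\top G p_{k+1}$, which both feed \eqref{eq:linesearch} and re-establish the hypothesis at step $k{+}1$. Your extra bookkeeping for the normalization constant is a reasonable refinement the paper glosses over (the quotients are scale-invariant anyway), though the parenthetical formula you give for $p_k^\top p_k$ is unnecessary since $p_k$ is an explicitly known vector whose norm can be computed directly.
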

\begin{proof}
Since $(G+G^\top) u_{k+1} = (G+G^\top) u_{k} + \alpha_k^\star (G+G^\top) p_k$ and $G^\top G u_{k+1} = G^\top G u_{k} + \alpha_k^\star G^\top G p_k$ holds, $p_{k+1}$ can be computed without additional input-output tuples. With the additional data tuples $(p_{k+1}, G p_{k+1})$, $(P p_{k+1}, G P p_{k+1})$ and $(PG p_{k+1}, G PG p_{k+1})$ we can calculate the optimal step size $\alpha_{k+1}^\star$ via \eqref{eq:linesearch} and at the same time also fulfill the requirement to apply this theorem at step $k+1$.
\end{proof}
\subsection{Conic Relations}
\label{sec:conic}
In \cite{Zames1966}, G. Zames introduces a feedback theorem on conic relations, which can be seen as a generalization of the small-gain theorem. In practice, an open-loop gain of less than one is often quite restrictive. With a linear shift in the feedback equation, however, a reduced gain product can often be obtained. This results in Zames' Theorem which says that the closed-loop is bounded if the open loop can be factored into two, suitably proportioned, conic relations \cite{Zames1966}.

A system that maps inputs $u$ to outputs $y$ is said to be confined to a conic region defined by the real constants $c$ and $r \geq 0$ if the inequality
\begin{align}
\| y - cu  \| \leq r \|u\| 
\label{eq:conic_relation}
\end{align}
is satisfied for all input-output tuples $(u,y)$, where $u$ and $y$ are lying in some Hilbert space $\mathcal{H}$. 
The constant $c$ is also called the center parameter and the constant $r$ is also called the radius of the input-output map. In Fig. \ref{fig:cone_plane}, a graphical interpretation of such a conic sector in the plane is depicted.
\begin{figure}[b]
\centering
\begin{tikzpicture}[scale=0.6]
     \draw[->] (-3.75,0) -- (3.75,0) ;
		 \node[right=.8pt] at (3.75,0){$u$};
     \draw[->] (0,-3) -- (0,3.75) ;
		 \node[right=.8pt] at (0,3.75){$y$};
		\draw[-] (-3.2*6/6,-1.0*6/6) -- (3.2*7/6,1.0*7/6);		
		\draw[-] (3.2*7/12,1.0*7/12) -- (3.2*7/12+0.3,1.0*7/12);
		\draw[-] (3.2*7/12+0.3,1.0*7/12) -- (3.2*7/12+0.3,1.0*7/12+0.1);
		 \node[right=.8pt] at (3.2*7/12+0.3,1.0*7/12+0.05){$c{-}r$};
     \draw[-] (-1.2*5/6,-3.8*5/6) -- (1.2,3.8);
     \draw[-] (1.2/2,3.8/2) -- (1.2/2+0.3,3.8/2);
     \draw[-] (1.2/2+0.3,3.8/2) -- (1.2/2+0.3,3.8/2+0.95);
		 \node[right=.5pt] at (1.2/2+0.3,3.8/2+0.475){$c{+}r$};
    \draw[draw=none, top color=white,bottom color=blue!30,fill opacity=0.3]
    (0,0)-- (3.2*7/6,1.0*7/6) -- (1.2,3.8) -- cycle;
		 \draw[draw=none, top color=white,bottom color=blue!30,fill opacity=0.3]
    (0,0)-- (-3.2*6/6,-1.0*6/6) -- (-1.2*5/6,-3.8*5/6) -- cycle;
		\draw[-, line width = 1pt, red] (-4.2/2 ,-3.8*5/12-1/2) -- (3.2*7/12+0.6,1.0*7/12+1.9);
		\draw[-, line width = .5pt, red] (3.2*7/24+0.3,1.0*7/24+.95) -- (3.2*7/24+0.3+0.3,1.0*7/24+.95);
		\draw[-, line width = .5pt, red] (3.2*7/24+0.3+0.3,1.0*7/24+.95) -- (3.2*7/24+0.3+0.3,1.0*7/24+.95+.3);
		\node[red, right=.5pt] at (3.2*7/24+0.3+0.3,1.0*7/24+.95+.15){$c$};
\end{tikzpicture}
\caption{A graphical illustration of a conic sector in a plane \cite{Zames1966}, which is described by the center parameter $c$ and the radius $r$.}
\label{fig:cone_plane}
\end{figure}
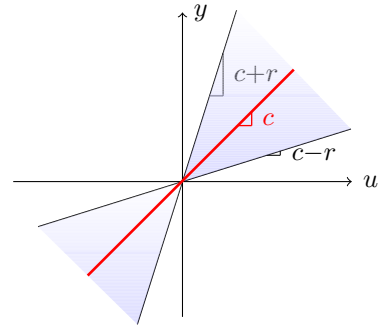 
Reformulating \eqref{eq:conic_relation} into an optimization problem yields
\begin{align*}
r^2 = \sup_{\|u\|^2 \neq 0} \; {c^2 + \frac{\|y\|^2 - 2 c \langle u,y \rangle}{\|u\|^2}}.
\end{align*}
With this maximization problem, we can find a valid radius $r$ corresponding to any center $c$ describing a cone that our input-output map is confined to.
However, the goal is to find the transformation $\pm cI$ that minimizes the gain of the open-loop element, and hence, to find the minimum radius $r_{\min}$. Finding $r_{\min}$ can increase the set of controllers for which the closed-loop is bounded. Equivalently, minimizing the radius can offer higher robustness measures for a given stabilizing controller related to the gap metric. As presented in \cite{Sakkary1985} and \cite{Georgiou1990}, the gap between the cones, to which the open-loop elements are confined, can be interpreted as a robustness measure. 

Searching for the minimum radius $r_{\min}$ leads to 
\begin{align}
r_{\min}^2 = \inf_c \sup_{\| u \| \neq 0} \frac{\|y\|^2 - 2 c \langle u,y \rangle + c^2 \|u\|^2}{\|u\|^2},
\label{eq:conic_allg}
\end{align}
which is a min-max optimization problem in the variables $c$ and $u$.
 Applying the standard Euclidean inner product with the input-output system description from \eqref{eq:yGu}, we can rewrite the optimization problem \eqref{eq:conic_allg} into
\begin{align}
\begin{split}
r_{\min}^2 & = \min_c \max_{\|u\| \neq 0} \rho_3(c,u)  \\
& = \min_c \max_{\|u\| \neq 0} \frac{u^\top (G^\top G - c (G + G^\top) + c^2 I_n) u}{\|u\|^2}.
\end{split}
\label{eq:opt_prob}
\end{align}
The term $\rho_3 : \mathbb{R} \times \mathbb{R}^n \backslash \{0\} \rightarrow \mathbb{R}$ in \eqref{eq:opt_prob} with $A(c) = G^\top G - c (G + G^\top) + c^2 I_n$ can again be referred to as a Rayleigh quotient. The Rayleigh quotient $\rho_3$ is a smooth function, which is scale-invariant in $u$. Therefore, we consider the Rayleigh quotient $\rho_3$ on the manifold $\mathbb{R} \times S^{n-1}$. 

Due to the Courant-Fischer-Weyl principle, all critical points and critical values of $\rho_3$ for any given $c$ are the eigenvectors and eigenvalues of $A(c)$ respectively. More specifically, the maximum of the Rayleigh quotient for any given $c$ corresponds to the largest eigenvalue $\lambda_1$ of $A(c)$. 
Hence, Eq. \eqref{eq:opt_prob} could also be expressed as
\begin{align}
r_{\min}^2 = \min_c \lambda_1 (A(c)).
\label{eq:eigvec}
\end{align}
This reveals that we are searching for the minimization of the maximal eigenvalue of a symmetric matrix function. Our first result states, that \eqref{eq:eigvec} is a strongly convex function with exactly one minimum.
\begin{lem}
The function $f_1: \mathbb{R} \rightarrow \mathbb{R}$ with $c \mapsto \lambda_1 ( A(c))$ is strongly convex and has only one minimum, which is a global minimum.
\label{lem:1c}
\end{lem}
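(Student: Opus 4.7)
The plan is to use the variational characterization of the largest eigenvalue of a symmetric matrix to reduce the problem to properties of a parameterized family of quadratics. A clean starting observation is that
\[
A(c) = G^\top G - c(G+G^\top) + c^2 I_n = (G - cI_n)^\top(G - cI_n),
\]
so that $f_1(c) = \max_{u \in S^{n-1}} u^\top A(c) u = \max_{u \in S^{n-1}} \|(G-cI_n)u\|^2$. For each fixed $u \in S^{n-1}$, the scalar map $\phi_u(c) := u^\top A(c) u = c^2 - c\, u^\top(G+G^\top)u + u^\top G^\top G u$ is a quadratic in $c$ with leading coefficient $\|u\|^2 = 1$, hence $\phi_u''(c) = 2$.

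Therefore each $\phi_u$ is strongly convex with modulus $\mu = 2$, i.e.
\[
\phi_u(\lambda c_1 + (1-\lambda)c_2) \le \lambda \phi_u(c_1) + (1-\lambda)\phi_u(c_2) - \lambda(1-\lambda)(c_1 - c_2)^2
\]
for all $c_1,c_2 \in \mathbb{R}$ and $\lambda \in [0,1]$. I would then invoke the standard fact that the pointwise supremum of a family of strongly convex functions sharing a common modulus is strongly convex with the same modulus: taking the supremum over $u \in S^{n-1}$ on both sides of the inequality above yields the same inequality for $f_1$, establishing strong convexity of $f_1$ with modulus $\mu = 2$.

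Finally, to obtain the unique global minimum, I would argue existence by coercivity and uniqueness by strict convexity. Continuity of $f_1$ follows from joint continuity of $(c,u) \mapsto \phi_u(c)$ together with compactness of $S^{n-1}$. Fixing any $u_0 \in S^{n-1}$, one has $f_1(c) \ge \phi_{u_0}(c) \to \infty$ as $|c| \to \infty$, so $f_1$ is coercive and attains its infimum. Strong convexity implies strict convexity, which rules out more than one minimizer; hence the minimum exists, is unique, and is global.

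I do not anticipate a serious obstacle: the only ingredient beyond routine manipulation is the preservation of strong convexity under pointwise suprema with a common modulus, which is a one-line consequence of the defining inequality. Rewriting $A(c)$ as $(G-cI_n)^\top(G-cI_n)$ makes the quadratic-in-$c$ structure, and in particular the constant leading coefficient $1$ in every $\phi_u$, immediate.
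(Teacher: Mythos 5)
Your proof is correct, and it takes a mildly but genuinely different route from the paper's. The paper splits $f_1(c)=\lambda_1\bigl(G^\top G - c(G+G^\top)\bigr)+c^2$, invokes a cited result (Fletcher) that $\lambda_1$ is convex on the space of symmetric matrices to get convexity of the first summand, and then adds the strongly convex term $c^2$. You instead keep $A(c)$ intact, use the Courant--Fischer characterization to write $f_1$ as a pointwise supremum of the quadratics $\phi_u(c)=u^\top A(c)u$, observe that each has leading coefficient $\|u\|^2=1$, and pass the uniform strong-convexity inequality through the supremum. The two arguments rest on the same underlying fact (the variational characterization is precisely what makes $c\mapsto\lambda_1(A(c))$ a sup of convex functions), but yours is self-contained, avoids the external citation, and yields an explicit modulus $\mu=2$; the factorization $A(c)=(G-cI_n)^\top(G-cI_n)$ is a nice touch that makes the unit leading coefficient transparent. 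Your existence/uniqueness argument (coercivity via $f_1\ge\phi_{u_0}$ plus strict convexity) is also slightly more explicit than the paper's, which simply asserts that strong convexity and continuity give a unique global minimizer. No gaps.
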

\begin{proof}
We start by separating $f_1(c) = 
\lambda_1 (G^\top G - c (G + G^\top) ) + c^2$.
While $c^2$ is a strongly convex function, we are further interested in $\lambda_1 (G^\top G {-} c (G {+} G^\top) )$. From \cite{Fletcher1985} Thm.~A.1, the largest eigenvalue is convex and continuous in the space of symmetric matrices. Hence, the largest eigenvalue of an affine function of symmetric matrices $\lambda_1(G^\top G {-} c (G {+} G^\top) )$ is convex.
Since $c^2$ is a smooth and strongly convex function and $\lambda_1 (G^\top G - c (G + G^\top) )$ is continuous and convex, their sum 
is strongly convex and continuous. 
Hence, the function $f_1: \mathbb{R} \rightarrow \mathbb{R}$ with $c \mapsto \lambda_1 ( A(c))$ has a global minimizer and no other local minima.
\end{proof}
To find the minimal radius $r_{\min}$ from our min-max optimization problem \eqref{eq:opt_prob} without knowledge of $G$, our approach is again to apply a gradient-based optimization scheme. Therefore, our first proposition states that we can indeed retrieve the gradient of $\rho_3$ with respect to $c$ and $u$ from drawing input-output samples $(u,y)$ from simulations or experiments.
\begin{prop}
The gradients of $\rho_3: \mathbb{R} \times S^{n-1} \rightarrow \R$ in the first and second variable are given by 
\begin{align}
\nabla_c \rho_3(c,u) &= 2c - \frac{u^\top (G + G^\top)  u }{ \|u\|^2} 
\label{eq:conic_gradient1} \\
&= 2c - u^\top (G u + P G P u)  \notag \\
\nabla_u \rho_3(c,u) &= \frac{2}{\|u\|^2} (A(c)  - \rho_3(c,u) I_n) u \label{eq:conic_gradient2} \\
&= 2 ( P G P G u - c (G u + P G P u) ) \notag \\
& \quad - 2 u^\top (P G P G u - c (G u + P G P u) ) u, \notag
\end{align}
and can be computed by evaluating $u \mapsto Gu$ thrice.
\label{prop:2c}
\end{prop}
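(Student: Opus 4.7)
The plan is to mirror the strategy of Proposition~\ref{prop:1a}: first derive the two partial gradients by direct calculus, then rewrite the resulting expressions so that every occurrence of $G^\top$ is replaced by $PGP$, and finally count how many evaluations of $u \mapsto Gu$ are needed to assemble them. Throughout, I work with the representation
\begin{align*}
\rho_3(c,u) = \frac{u^\top G^\top G u}{\|u\|^2} - 2c\,\frac{u^\top G u}{\|u\|^2} + c^2,
\end{align*}
obtained from the definition of $A(c)$ together with $u^\top(G+G^\top)u = 2 u^\top G u$ and $u^\top I_n u = \|u\|^2$.

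First I would compute $\nabla_c \rho_3$. Since $c$ only enters as a scalar quadratic (with $u$ fixed), ordinary differentiation in $c$ yields $\nabla_c \rho_3(c,u) = 2c - \tfrac{u^\top(G+G^\top)u}{\|u\|^2}$. On $S^{n-1}$ the normalization disappears, and substituting the identity $G^\top = PGP$ (valid because $G$ is a lower-triangular Toeplitz matrix, as recalled in the proof of Proposition~\ref{prop:1a}) gives $\nabla_c \rho_3(c,u) = 2c - u^\top(Gu + PGPu)$, matching \eqref{eq:conic_gradient1}.

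Next I would compute the Riemannian gradient $\nabla_u \rho_3$ of $\rho_3(c,\,\cdot\,)$ on the unit sphere, equipping $S^{n-1}$ with the metric inherited from $\mathbb{R}^n$, exactly as in Proposition~\ref{prop:1a}. Because $\rho_3(c,u) = \tfrac{u^\top A(c) u}{u^\top u}$ with $A(c)$ symmetric, the quotient rule delivers
\begin{align*}
\nabla_u \rho_3(c,u) = \frac{2 A(c) u \cdot u^\top u - 2 u^\top A(c) u \cdot u}{(u^\top u)^2} = \frac{2}{\|u\|^2}\bigl(A(c) - \rho_3(c,u) I_n\bigr) u,
\end{align*}
which is the first line of \eqref{eq:conic_gradient2}. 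Expanding $A(c) u = G^\top G u - c(G+G^\top)u + c^2 u$ and again replacing $G^\top = PGP$ turns $G^\top G u$ into $(PG)^2 u = PGPGu$ and $(G+G^\top)u$ into $Gu + PGPu$; the $c^2 u$ term is absorbed into $-\rho_3(c,u) u$ via $\rho_3(c,u) u = u^\top(PGPGu - c(Gu + PGPu))u + c^2 u$ on $S^{n-1}$, yielding the stated expression.

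Finally I would verify the experiment count. The three quantities required are $Gu$, $PGPu$, and $PGPGu$: $Gu$ costs one evaluation; $PGPu = P G(Pu)$ costs one evaluation of $G$ applied to the reversed input $Pu$; and $PGPGu = PG(PGu)$ reuses $PGu$ (the reversal of $Gu$, obtained for free) and costs one further evaluation of $G$ applied to $PGu$. Three applications of the map $u \mapsto Gu$ therefore suffice, and from $Gu$, $PGPu$ and $PGPGu$ both gradients are formed by scalar products with $u$ plus elementary algebra, concluding the proof. I do not expect a genuine obstacle here; the only point requiring some care is the bookkeeping that guarantees $PGPGu$ is obtained from $PGu$ rather than from an independent fresh experiment, so that the count is $3$ and not $4$.
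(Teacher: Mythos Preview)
Your proposal is correct and follows exactly the pattern the paper relies on: the paper's own proof simply cites \cite{Romer2018b}, Lemma~1, and your argument---direct differentiation of the Rayleigh quotient as in Proposition~\ref{prop:1a}, replacement of $G^\top$ by $PGP$, and the three-experiment bookkeeping $Gu$, $PGPu = P\,G(Pu)$, $PGPGu = P\,G(PGu)$---is precisely how that lemma is established. Your remark that $PGu$ is obtained for free by reversing the already computed $Gu$ is the only subtle point, and you handle it correctly.
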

\begin{proof}
This result follows directly from \cite{Romer2018b}, Lemma 1.
\end{proof} 
\subsubsection{Conic Relations - Continuous Time Solution}  
To find the conic relation with minimal radius $r_{\min}$ for our unknown input-output system, we employ a gradient descent in the first variable $c$ and a gradient ascent in the second vector variable $u$ resulting in the saddle point dynamics given by 
\begin{align}
\begin{split}
\frac{\mathrm{d}}{\mathrm{d} \tau} c(\tau) = &{\;-} \nabla_c \rho_3(c(\tau),u(\tau))\\
\frac{\mathrm{d}}{\mathrm{d} \tau} u(\tau) = & \phantom{\;-} \nabla_u \rho_3(c(\tau),u(\tau)).
\end{split}
\label{eq:conic_flow}
\end{align}
The saddle point dynamics in \eqref{eq:conic_flow} leave the manifold $\mathbb{R} \times S^{n-1}$ invariant, since
\begin{align*}
\frac{\mathrm{d}}{\mathrm{d} \tau} \| u(\tau) \|^2 &= 2 u(\tau)^\top \frac{\mathrm{d}}{\mathrm{d} \tau} u(\tau) \\
&= \frac{4}{\|u\|^2} u(\tau)^\top (A(c(\tau))  - \rho_3(c(\tau),u(\tau)) I_n) u(\tau) \\
&= 4 (\rho_3(c(\tau),u(\tau))  - \rho_3(c(\tau),u(\tau))) = 0.
\end{align*}
The equilibrium points of \eqref{eq:conic_flow} are described by $u$ being an eigenvector $v_i$ of $A(c)$ and the corresponding $c = \frac{1}{2} u (G + G^\top) u$. With the analysis before, we are searching for $u^{\star}$ being the eigenvector corresponding to the maximum eigenvalue denoted by $u^{\star} = v_1 (A(c^\star))$ with $c^\star = \frac{1}{2} u^{\star \top} (G + G^\top) u^\star$, which then leads to the minimal radius $r^2_{\min} = \rho_3 ( c^\star, u^\star)$. 

In the following theorem, we show that the tuple with the center $c^\star$ and input sample $u^\star$ corresponding to the minimal radius $r_{\min}$ is in fact a locally attracting equilibrium point of \eqref{eq:conic_flow} when $\lambda_1(A(c^\star))$ is a simple eigenvalue. When we minimize the maximal eigenvalue of a matrix function, however, we also need to consider the possibility that the solution to this optimization problem is an eigenvalue of geometric multiplicity two. In this second case, let $v_1(A(c^\star))$ and $v_2(A(c^\star))$ be the eigenvectors to the eigenvalue $\lambda_1(A(c^\star))$. We choose $v_1(A(c^\star))$ such that $2c - v_1(A(c))^\top  (G+G^\top) v_1(A(c)) = 0$ and $v_2(A(c^\star))$ consecutively such that $v_2(A(c^\star))^\top v_1(A(c^\star)) = 0$, which is always possible since $A(c)$ is a symmetric matrix. We formulate an additional assumption in the case that $\lambda_1(A(c^\star))$ is an eigenvalue of multiplicity two.
\begin{assumption}
\label{as:1}
With $v_1(A(c^\star))$ and $v_2(A(c^\star))$ as defined above, the following condition holds:
\begin{align*}
v_1(A(c^\star))^\top (G + G^\top) v_2(A(c^\star)) \neq 0.
\end{align*}
\end{assumption}
We will briefly discuss this technical assumption after the following theorem.
\begin{theorem}
Assume that $\lambda_1(A(c^\star))$ is an eigenvalue
\begin{itemize}
\item with multiplicity one, or 
\item with multiplicity two and Assumption \ref{as:1} holds. 
\end{itemize}
Then the equilibrium point $(c^\star, u^\star)$ corresponding to the squared minimum radius $r_{\min}^2 = \rho_3 (c^\star, u^\star)$ is 
locally exponentially stable under the saddle point dynamics in \eqref{eq:conic_flow}. 
\label{thm:1c}
\end{theorem}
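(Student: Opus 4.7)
The plan is to linearize the saddle-point flow \eqref{eq:conic_flow} at $(c^\star, u^\star)$ on the tangent space $\mathbb{R}\times T_{u^\star}S^{n-1}$ and verify via a quadratic Lyapunov function that the linearization is Hurwitz; local exponential stability of the nonlinear flow then follows by Lyapunov's indirect method.

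Writing $c = c^\star + \tilde c$ and $u = u^\star + v$ with $v^\top u^\star = 0$, and using the equilibrium conditions $A(c^\star) u^\star = \lambda_1 u^\star$ together with $2c^\star = u^{\star\top}(G+G^\top) u^\star$ (which is precisely $\nabla_c \rho_3(c^\star, u^\star) = 0$), I would expand \eqref{eq:conic_gradient1}--\eqref{eq:conic_gradient2} to first order in $(\tilde c, v)$. Since $\nabla_u \rho_3$ is already tangent to $S^{n-1}$ by construction, no extra projection is needed and one obtains
\begin{align*}
\dot{\tilde c} &= -2\tilde c + 2 v^\top (G+G^\top) u^\star, \\
\dot v &= 2\bigl(A(c^\star) - \lambda_1 I\bigr) v + 2\tilde c\,\bigl(2c^\star u^\star - (G+G^\top) u^\star\bigr).
\end{align*}
For the Lyapunov candidate $V(\tilde c, v) = \tfrac12(\tilde c^2 + \|v\|^2)$, the antisymmetric coupling characteristic of a saddle-point flow causes the cross terms to cancel (the constraint $v^\top u^\star = 0$ eliminates the $2c^\star u^\star$ contribution), leaving
\begin{align*}
\dot V = -2\tilde c^2 + 2 v^\top\bigl(A(c^\star) - \lambda_1 I\bigr) v.
\end{align*}

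If $\lambda_1(A(c^\star))$ is simple, then $A(c^\star) - \lambda_1 I$ is negative definite on $T_{u^\star}S^{n-1}$, so $\dot V$ is a negative definite quadratic form and the linearization is Hurwitz at once. The main obstacle is the degenerate case of multiplicity two, where $\dot V$ is only negative semidefinite and vanishes precisely on $E = \{(\tilde c, v) : \tilde c = 0,\ v \in \operatorname{span}(v_2)\}$. I would then apply LaSalle's invariance principle to the linear system and check when $E$ is forward invariant: substituting $\tilde c = 0$ and $v = \alpha v_2$ into $\dot{\tilde c}$ yields $\dot{\tilde c} = 2\alpha\, v_2^\top (G+G^\top) v_1$, and Assumption~\ref{as:1} is exactly the nondegeneracy condition ensuring this is nonzero unless $\alpha = 0$. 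Hence the largest invariant subspace of $E$ is the origin, the linearization is asymptotically—and therefore exponentially—stable, and Lyapunov's indirect method transfers the conclusion back to \eqref{eq:conic_flow}.
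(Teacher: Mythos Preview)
Your proof is correct, and for the simple-eigenvalue case it coincides with the paper's argument: the paper also takes $P = I$ in the Lyapunov equation $J^{\prime\top}P + PJ' \prec 0$, which is exactly your $\dot V = -2\tilde c^2 + 2v^\top(A(c^\star)-\lambda_1 I)v$.

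Where you diverge is the multiplicity-two case. The paper handles it spectrally: it invokes the Ky Fan inequality to bound $\operatorname{Re}\lambda_i(J')$ by the largest eigenvalue of the symmetric part (which is $\le 0$), then cites an Ostrowski--Schneider result to rule out purely imaginary eigenvalues (their eigenvectors would have to lie in the one-dimensional kernel of the symmetric part, but imaginary eigenvalues come in conjugate pairs), and finally checks by a row-rearrangement that $J'$ is nonsingular precisely when $v_2^\top(G+G^\top)u^\star \neq 0$. You instead stay with the Lyapunov function and use LaSalle on the linear system, reducing the question to whether $E = \{(0,\alpha v_2)\}$ contains a nontrivial invariant subspace; substituting into $\dot{\tilde c}$ immediately reveals Assumption~\ref{as:1} as the obstruction. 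Your route is more self-contained and avoids the two matrix-analysis citations, at the modest cost of not exhibiting the eigenstructure of $J'$ explicitly. Both arguments ultimately establish that $J'$ is Hurwitz and then appeal to the indirect method, so the conclusions are identical.
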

\begin{proof}
Linearizing \eqref{eq:conic_flow} around the critical point $(c^\star, u^\star) \in \mathbb{R} \times S^{n-1}$ yields the linearized system dynamics 
\begin{align*}
\frac{\mathrm{d}}{\mathrm{d} \tau} \begin{pmatrix} \delta c(\tau) \\ \delta u(\tau) 
\end{pmatrix} = J(c^\star, u^\star) \begin{pmatrix} \delta c(\tau) \\ \delta u(\tau) 
\end{pmatrix}
\end{align*}
with $\delta c = c - c^\star$, $\delta u = u - u^\star$, and the Jacobian reads 
\begin{align*}
J (c^\star, u^\star) = \begin{pmatrix}
- \nabla_{cc} \rho_3(c^\star, u^\star) & -\nabla_{cu} \rho_3(c^\star, u^\star) \\
\phantom{-} \nabla_{uc} \rho_3(c^\star, u^\star) & \phantom{-} \nabla_{uu} \rho_3(c^\star, u^\star) 
\end{pmatrix}
\end{align*}
where
\begin{align*}
- \nabla_{cc} \rho_3(c^\star, u^\star) &= -2 \\
-\nabla_{cu} \rho_3(c^\star, u^\star) &= 2\left( (G + G^\top)u^\star - (u^{\star \top} (G + G^\top)  u^\star) u^\star \right)^\top\\
\nabla_{uc} \rho_3 (c^\star, u^\star) &= \nabla_{cu} \rho_3 (c^\star, u^\star)^\top\\
\nabla_{uu} \rho_3 (c^\star, u^\star) &= 2 \left( A(c^\star) - \rho_3(c^\star, u^\star) I_n \right).
\end{align*}

Since any symmetric matrix possesses $n$ mutually orthogonal eigenvectors, 
the set of eigenvectors of the symmetric matrix $\frac{1}{2} \left( J(c^\star, u^\star) + J (c^\star, u^\star)^\top \right)$ given by $b_1 = (1,0_{n})$, $b_2 = (0,v_n)$, $b_3 = (0,v_{n-1})$, $\dots$, $b_{n} = (0,v_2)$, $b_{n+1} = (0,v_1)$ form an orthonormal basis of $\mathbb{R}^{n+1}$, where $v_i, i=1,\dots,n$ denote the eigenvectors of $A(c^\star)$. If $\lambda_1(A(c^\star))$ has multiplicity two, we choose the two eigenvectors spanning the eigenspace corresponding to $\lambda_1(A(c^\star))$ such that $v_1 = u^\star$ and $v_1^\top v_2 = 0$. 

Recall that \eqref{eq:conic_flow} leaves the manifold $\mathbb{R} \times S^{n-1}$ invariant and hence, we are only interested in the Jacobian on the tangent space $T_{(c^\star, u^\star)} \left( \mathbb{R} \times S^{n-1} \right)$, which is spanned by the basis vectors $b_1$, $b_2$, $\dots$, $b_{n}$. 
By projecting the Jacobian matrix $J(c^\star, u^\star)$ onto the tangent space $T_{(c^\star, u^\star)} \left( \mathbb{R} \times S^{n-1} \right)$, we find
\begin{align*}
J^\prime(c^\star, u^\star) =
\begin{pmatrix}
b_1 & \dots & b_{n}
\end{pmatrix}^\top
J(c^\star, u^\star)
\begin{pmatrix}
b_1 & \dots & b_{n}
\end{pmatrix} \\
= \begin{pmatrix}
-2 & * 						 & * && \dots & * \\
* & (\lambda_n {-} \lambda_1) & 0 && \dots & 0 \\
* &  0 & \ddots &&  & 0 \\
\vdots & \vdots && \ddots &  & \vdots \\
* & 0 &&  &  (\lambda_3 {-} \lambda_1) & 0 \\
* & 0 & & \dots & 0 & (\lambda_2 {-} \lambda_1) \\
\end{pmatrix}
\end{align*}
which is of the general form
\begin{align*}
J^\prime(c^\star, u^\star) =
\begin{pmatrix}
\phantom{-}N\phantom{^\top} & S \\ 
-S^\top & C
\end{pmatrix}
\end{align*}
with $S = (2 v_1 ^\top (G + G^\top) v_n, \dots, 2 v_1 ^\top (G + G^\top) v_2 )$ and $N$ negative definite. The matrix $C$ is negative definite if the eigenvalue $\lambda_1(A(c^\star))$ is simple and negative semi-definite otherwise.

In the case that $\lambda_1(A(c^\star))$ is simple, choosing $P = I_{n}$ in
\begin{align*}
\begin{pmatrix}
N\phantom{^\top} & -S \\ S^\top & \phantom{-}C
\end{pmatrix} P + P 
\begin{pmatrix}
\phantom{-}N\phantom{^\top} & S \\ 
-S^\top & C
\end{pmatrix} 
= 2 
\begin{pmatrix}
N & 0 \\ 0 & C
\end{pmatrix}
\end{align*}
yields a negative-definite matrix. Applying Lyapunov's theory for linear systems and the Hartman-Grobman theorem, this proves local exponential stability of $(c^\star, u^\star)$ in the case that $\lambda_1(A(c^\star))$ is simple.

In the case that $\lambda_1(A(c^\star))$ has multiplicity of two, we continue by applying the \textit{Ky Fan} inequality (\cite{Bhatia1997}, Prop.~\RM{3}.5.3) to find that
$$ \operatorname{Re}\left( \lambda_i(J^\prime (c^\star, u^\star)) \right) \leq \lambda_1 \left( \frac{1}{2} \left( J^\prime(c^\star, u^\star) + J^\prime(c^\star, u^\star)^\top \right) \right).  $$
holds for all $i=1,\dots,n$.
Since  $\frac{1}{2} \left( J^\prime(c^\star, u^\star) + J^\prime(c^\star, u^\star)^\top \right)$ denotes the symmetric part of the Jacobian on the tangent space $T_{(c^\star, u^\star)} \left( \mathbb{R} \times S^{n-1} \right)$, which is negative semidefinite, we know that 
$\operatorname{Re}\left( \lambda_i(J^\prime(c^\star, u^\star)) \right) \leq 0$
for all $i=1,\dots,n$.

Furthermore, we need to exclude possible eigenvalues on the imaginary axis. Ostrowski and Schneider, in \cite{Ostrowski1962} Thm.~2, draw a connection between purely imaginary eigenvalues of a matrix $J^\prime(c^\star, u^\star)$ and conditions on its symmetric part $\frac{1}{2} \left(J^\prime(c^\star, u^\star) + J^\prime(c^\star, u^\star)^\top \right)$. Namely, if $\frac{1}{2} \left(J^\prime(c^\star, u^\star)+ J^\prime(c^\star, u^\star)^\top \right)$ is semidefinite and real, then the corresponding eigenvectors to $k=2m$ imaginary eigenvalues ($\pm i \alpha_1, \dots, \pm i \alpha_m$) of $J^\prime(c^\star, u^\star)$ are in the nullspace of $\frac{1}{2} \left(J^\prime(c^\star, u^\star) + J^\prime(c^\star, u^\star)^\top \right)$.

If $\lambda_1(A(c^\star))$ has the multiplicity of two, we find that the symmetric part of the $J^\prime(c^\star, u^\star)$, which reads
\begin{align}
\begin{split}
& \frac{1}{2} \left( J^\prime(c^\star, u^\star) + J^\prime(c^\star, u^\star)^\top \right) \\ &=
\mbox{diag} \left( -2, \lambda_n-\lambda_1, \dots, \lambda_3-\lambda_1, 0 \right),
\end{split}
\label{eq:symmetric_Jacobian}
\end{align}
has only a one dimensional nullspace. Since any eigenvalues on the imaginary axis would correspond to an even number of eigenvectors that must lie in the nullspace of \eqref{eq:symmetric_Jacobian}, we can conclude that $J^\prime(c^\star, u^\star)$ has no purely imaginary eigenvalues.  

Finally, we need to investigate possible zero eigenvalues.
Rearranging rows of $J^\prime(c^\star, u^\star)$ yields
\begin{align*}
\begin{pmatrix}
-a & 0 & 0 & & \dots  & 0 \\
* & (\lambda_n {-} \lambda_1) & 0 && \dots & 0 \\
\vdots &  & \ddots&  & & \vdots \\
* &  &&  &  (\lambda_3 {-} \lambda_1) & 0 \\
-2 & * 						 & * && \dots & a \\
\end{pmatrix} 
\end{align*}
with $a = v_2^\top (G+G^\top) u^\star$, which is a triangular matrix with non-zero entries on the diagonal if and only if $v_2^\top (G+G^\top) u^\star \neq 0$. This reveals that $J^\prime(c^\star, u^\star)$ has full rank and therefore no zero eigenvalue under Assumption 1. 

In summary, the linearization of the dynamics \eqref{eq:conic_flow} on the manifold $\mathbb{R} \times S^{n-1}$ at the equilibrium point $(c^\star, u^\star)$ lead to a Jacobian $J^\prime(c^\star, u^\star)$ with $\operatorname{Re} \left( \lambda_i (J^\prime(c^\star, u^\star) ) \right) < 0$. Hence, in the tangent space $T_{(c^\star, u^\star)} \left( \mathbb{R} \times S^{n-1} \right)$, the point $(c^\star, u^\star)$ is locally exponentially stable. This concludes our proof.
\end{proof}
Let us further consider the case when the eigenvalue $\lambda_1(A(c^\star))$ is an eigenvalue of multiplicity two and Assumption 1 does not hold, i.e. $v_2^\top (G+G^\top) u^\star = 0$. This happens only if at least one of the two analytic eigenvalue functions $\tilde{\lambda}_{i=1,2}(c)$, from rearrangement of $\lambda_{i=1,2} (A(c))$ \cite{Rellich1969}, that meet at $(c^\star,\lambda_1(A(c^\star))$ has a vanishing gradient at $c^\star$. Since this an incredibly rare case, and Assumption 1 holds almost surely when $\lambda_1(A(c^\star))$ is an eigenvalue of multiplicity two, we do not want to go into more detail here and refer the interested reader to \cite{Rellich1969, Mengi2014} for more details on eigenvalues of Hermitian matrix functions.

Furthermore, even in the technical case when Assumption~1 is not satisfied,
we find that the Jacobian with regard to the directions normal to $v_1, v_2$, given by $b_1, \dots, b_{n-1}$, has again only eigenvalues with negative real parts, since
\begin{align*}
J^{\prime \prime} (c^\star, u^\star) &=
\begin{pmatrix}
b_1 & \dots & b_{n-1}
\end{pmatrix}^\top
J(c^\star, u^\star)
\begin{pmatrix}
b_1 & \dots & b_{n-1}
\end{pmatrix} \\
&= \begin{pmatrix}
-2\phantom{^\top} & S \\
-S^\top & \mbox{diag} \left( \lambda_n {-} \lambda_1, \dots, \lambda_3 {-} \lambda_1 \right) \\
\end{pmatrix}.
\end{align*}

As a corollary to Thm.~\ref{thm:1c}, we can show that the optimizer $(c^\star, u^\star)$ is a local min-max saddle point of $\rho_3$ via the Taylor series expansion given by
\begin{align*}
\rho_3(c,u^\star) &= \rho_3(c^\star,u^\star) + \nabla_{c} \rho_3 (c^\star, u^\star) \delta c \\
&+ \frac{1}{2} \nabla_{cc} \rho_3 (c^\star, u^\star) \delta c ^2 + \mathcal{O}(\delta c ^3),\\
\rho_3(c^\star,u) &= \rho_3(c^\star,u^\star) + \nabla_{u} \rho_3 (c^\star, u^\star) \delta u \\
&+ \frac{1}{2} \delta u ^\top \nabla_{uu} \rho_3 (c^\star, u^\star) \delta u + \mathcal{O}(\delta u ^3).
\end{align*}
From before, we know $\nabla_{u} \rho_3 (c^\star, u^\star) = \nabla_{c} \rho_3 (c^\star, u^\star) = 0$. With $\nabla_{uu} \rho_3 (c^\star, u^\star)$ positive semi-definite and $\nabla_{cc} \rho_3 (c^\star, u^\star)$ negative definite on the tangent space $T_{(c^\star, u^\star)} \left( \mathbb{R} \times S^{n-1} \right)$, the inequality
$\rho_3(c^\star,u) \leq \rho_3(c^\star,u^\star) \leq \rho_3(c,u^\star)$
holds in a neighborhood of $(c^\star,u^\star)$, and hence, $(c^\star, u^\star)$ is in fact a locally exponentially stable local min-max saddle point of $\rho_3$ on the manifold $\mathbb{R} \times S^{n-1}$. 

The structure of our Jacobian $J(c^\star, u^\star)$ relates the linearization of \eqref{eq:conic_flow} around the critical point $(c^\star, u^\star)$ to the well-known (linear) saddle point problems of the general form
\begin{align*}
\begin{pmatrix}
N & \phantom{-}S^\top \\ S & -C\phantom{^\top}
\end{pmatrix}
\begin{pmatrix}
x \\ y
\end{pmatrix}=
\begin{pmatrix}
f \\ g
\end{pmatrix}
\end{align*}
that arise, for example, in the context of regularized weighted least-squares problems, from certain interior point methods in optimization, or from Lagrange functions with $C$ being a zero matrix \cite{Benzi2005}. 
\subsubsection{Conic Relations - Discrete Time Solution}
Iterative approaches for the solution of saddle point problems have already been introduced in the book of Arrow, Hurwicz and Uzawa \cite{Arrow1958} and an article of Polyak \cite{Polyak1970}. In these references, iterative schemes consisting of simultaneous iterations in both variables and their convergence are dis\-cussed, addressing mainly the problem of finding the saddle point of a Lagrangian.  
One of the iterative approaches introduced in \cite{Arrow1958}, Chapter 10, Sections 4-5, is the so-called \textit{Arrow-Hurwicz} iteration which reads
\begin{align}
\begin{split}
c_{k+1} &= c_k - \alpha  \nabla_c \rho_3(c_k, u_k) \\
u_{k+1} &= u_k + \alpha  \nabla_u \rho_3(c_k, u_k).
\end{split}
\label{eq:arrow_hurwicz_it}
\end{align}
In \cite{Romer2018b}, it is shown along the lines of \cite{Polyak1970} that for a small enough step size $\alpha$, the method \eqref{eq:arrow_hurwicz_it} is locally convergent to $(c^\star, u^\star)$ and the modified \textit{Arrow-Hurwicz} method \cite{Popov1980} is introduced as an expedient method to determine the minimal cone of an unknown input-output system.

The \textit{Uzawa} iteration for general saddle point problems, also called the dual method, was presented by Uzawa in \cite{Arrow1958}, Chapter 10. Here, the gradient iteration is only performed with respect to the input $u$, while the corresponding center $c$ is found by minimization of $\rho_3(c,u_k)$ with respect to $c$:
\begin{align}
\begin{split}
\rho_3(c_{k+1}, u_k) &= \min_{c} \rho_3(c, u_k) \\
u_{k+1} &= u_k + \alpha  \nabla_u \rho_3(c_k, u_k).
\end{split}
\label{eq:uzawa_it}
\end{align}

For any given $u_k \in S^{n-1}$, $\rho_3(\cdot, u_k)$ is a strongly convex function with a global minimum at the critical point $c = 0.5 (u_k^\top (G + G^\top) u_k)$ (cf. Lemma~\ref{lem:1c}).
With the results from Prop.~\ref{prop:2c}, we can hence compute $\min_{c} \rho_3(c, u_k) = 0.5 u_k^\top (PGP u_k + G u_k)$ with two input-output tuples and $\nabla_u \rho_3(c_k, u_k)$ with one additional input-output tuple from (numerical) experiments.
Hence, the iterative \textit{Uzawa} iteration on the manifold $\mathbb{R} \times S^{n-1}$ is given by
\begin{align}
\begin{split}
c_{k+1} &= 0.5 u_k^\top (PGP u_k + G u_k) \\
u^\prime_{k+1} &= u_k + 2 \alpha ( P G P G u_k - c (G u_k + P G P u_k)) \\
& - 2 \alpha u_k^\top (P G P G u_k - c (G u_k + P G P u_k) ) u_k\\
u_{k+1} &= \frac{u^\prime_{k+1}}{\|u^\prime_{k+1}\|},
\end{split}
\label{eq:uzawa_iteration}
\end{align}
with step size $\alpha$, where we applied again a valid retraction mapping from the tangent space $T_{(c_k,u_k)} (\R \times S^{n-1})$ to the manifold $\R \times S^{n-1}$ \cite{Absil2008}.

Along the lines of \cite{Polyak1970}, we show in the following that \eqref{eq:uzawa_iteration} is locally convergent to $(c^\star, u^\star)$.
\begin{prop}
\label{prop:1c}
Assume that $\lambda_1(A(c^\star))$ is an eigenvalue
\begin{itemize}
\item with multiplicity one, or 
\item with multiplicity two and Assumption \ref{as:1} holds. 
\end{itemize}
Then, there exists an $\bar{\alpha}$ such that for all $\alpha \in (0, \bar{\alpha})$ the method \eqref{eq:uzawa_it} is locally convergent to $(c^\star, u^\star)$. 
\end{prop}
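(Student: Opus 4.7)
The approach is to apply a standard discrete-time stability argument: linearize the iteration map of \eqref{eq:uzawa_iteration} at $(c^\star, u^\star)$, show that its Jacobian has spectral radius strictly less than one whenever $\alpha$ is sufficiently small, and conclude local convergence by Ostrowski's theorem on locally attracting fixed points. A preliminary check is that $(c^\star, u^\star)$ really is a fixed point: the characterization of the equilibrium preceding Theorem~\ref{thm:1c} gives $c^\star = \tfrac{1}{2} u^{\star\top}(G+G^\top) u^\star$, so the exact $c$-minimization step is stationary, and $\nabla_u \rho_3(c^\star, u^\star) = 0$ leaves the $u$-update stationary as well.

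The next step is to linearize on the tangent space $T_{(c^\star, u^\star)}(\mathbb{R} \times S^{n-1})$. The linearization of the scalar $c$-update yields $\delta c_{k+1} = u^{\star\top}(G+G^\top)\delta u_k =: b^\top \delta u_k$, and the linearization of the retracted $u$-update yields $\delta u_{k+1} = P_T\big(\delta u_k + \alpha H_{uc}\delta c_k + \alpha H_{uu}\delta u_k\big)$, where $H_{uc}$ and $H_{uu}$ are the Hessian blocks of $\rho_3$ already computed in the proof of Theorem~\ref{thm:1c}, and $P_T$ is the orthogonal projection onto $T_{u^\star}S^{n-1}$ arising as the differential of the normalization retraction at $u^\star$. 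A standard perturbation argument on the block Jacobian shows that the eigenvalues clustering near $\lambda = 1$ take the form $1 + \alpha \mu_i + O(\alpha^2)$, where the $\mu_i$ are the eigenvalues on $T_{u^\star}S^{n-1}$ of the Schur-complement-type operator
\begin{align*}
M := P_T\big(H_{uu} + H_{uc}\, b^\top\big)P_T ;
\end{align*}
the only other eigenvalue is of order $O(\alpha)$ near $\lambda = 0$ (reflecting that $\delta c_{k+1}$ is slaved to $\delta u_k$) and is hence well inside the unit disk. Equivalently, $M$ is the Hessian at $u^\star$ of the reduced envelope function $\psi(u) := \rho_3(c(u), u)$ with $c(u) = \tfrac{1}{2}u^\top(G+G^\top)u$, so Uzawa reduces, to first order, to projected gradient ascent on $\psi$.

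The main technical obstacle will be showing that $M \prec 0$ on $T_{u^\star}S^{n-1}$. When $\lambda_1(A(c^\star))$ is simple, this is immediate: $H_{uu}$ is a positive multiple of $A(c^\star) - \lambda_1 I_n$, which is negative definite on the tangent space, so the rank-one correction $H_{uc}b^\top$ need not even be invoked. The delicate case is multiplicity two, where $P_T H_{uu} P_T$ acquires a one-dimensional kernel spanned by the second eigenvector $v_2$ (chosen as in the discussion preceding Theorem~\ref{thm:1c}), and the rank-one correction $H_{uc}b^\top$ must push this zero eigenvalue strictly negative. Evaluating $v_2^\top M v_2$ produces, up to a positive constant, the square $\big(v_1^\top(G+G^\top)v_2\big)^2$, which is nonzero precisely under Assumption~\ref{as:1}---mirroring exactly the role Assumption~\ref{as:1} played in ruling out a zero eigenvalue of the saddle-point Jacobian in the proof of Theorem~\ref{thm:1c}.

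Once $M \prec 0$ is established, let $-\mu_{\max} < 0$ denote its largest eigenvalue; then for any $\bar\alpha < 2/\mu_{\max}$ and every $\alpha \in (0, \bar\alpha)$, each $1 + \alpha \mu_i$ lies in $(-1,1)$, so the Jacobian of the full iteration has spectral radius strictly less than one and Ostrowski's theorem delivers local convergence to $(c^\star, u^\star)$.
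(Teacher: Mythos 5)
Your proposal is correct, and it starts exactly where the paper's proof does: linearize the Uzawa map at $(c^\star,u^\star)$ on the tangent space $T_{(c^\star,u^\star)}(\mathbb{R}\times S^{n-1})$ and show the resulting Jacobian has spectral radius below one. The difference lies in how the spectrum is controlled. The paper follows Polyak: writing $S=\nabla_{uc}\rho_3(c^\star,u^\star)$, it eliminates the $c$-component from the eigenvalue equations of $K(c^\star,u^\star)$ to arrive at $\tfrac{\alpha}{2}\left(SS^\top-2\nabla_{uu}\rho_3\right)u_e=\mu(1-\mu)u_e$, uses positive definiteness of that symmetric matrix on the tangent space to force $\mu(1-\mu)$ to be real and positive, rules out the $\operatorname{Re}(\mu)=\tfrac12$ branch, and thereby obtains the explicit threshold $\bar{\alpha}=1/\left(2\|SS^\top-2\nabla_{uu}\rho_3\|\right)$. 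You instead run first-order perturbation theory in $\alpha$ about the $\alpha=0$ Jacobian, which collapses the $(n{-}1)$-fold eigenvalue at $1$ onto the reduced operator $M=\nabla_{uu}\rho_3+Sb^\top=\nabla_{uu}\rho_3-\tfrac12 SS^\top$ (since $b=-\tfrac12 S$) --- precisely the negative of the paper's key matrix --- so both proofs ultimately hinge on the identical definiteness condition. Your route buys the envelope interpretation ($M$ is the Hessian of $u\mapsto\min_c\rho_3(c,u)$) and, valuably, an explicit verification of the definiteness claim in the multiplicity-two case via $v_2^\top Mv_2=-2\left(v_2^\top(G+G^\top)u^\star\right)^2<0$ under Assumption~\ref{as:1}, a point the paper only asserts by reference back to Thm.~\ref{thm:1c}; the paper's route buys a non-asymptotic step-size bound. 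Two small repairs: your admissible range should be governed by the most negative eigenvalue of $M$, not by its largest eigenvalue $-\mu_{\max}$; and because your eigenvalue branches carry $o(\alpha)$ remainders, the honest conclusion is $|\lambda_j(\alpha)|^2=1+2\alpha\mu_j+o(\alpha)<1$ for all sufficiently small $\alpha>0$, which yields the existence of $\bar{\alpha}$ that the proposition asks for, but not the clean quantitative bound $2/\mu_{\max}$.
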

\begin{proof}
The local behavior of the \textit{Uzawa} iteration is given by
\begin{align*}
e_{k+1} =
K(c^\star,u^\star)
e_{k}, \quad 
e_k = 
\begin{pmatrix}
c_{k}\\
u_{k} 
\end{pmatrix} -
\begin{pmatrix}
c^\star\\
u^\star 
\end{pmatrix}
\end{align*}
where
\begin{align*}
K(c^\star,u^\star) =
\begin{pmatrix}
0 & - \frac{1}{2} \nabla_{cu} \rho_3(c^\star, u^\star)\\
\alpha \nabla_{uc} \rho_3(c^\star, u^\star) & I_n + \alpha \nabla_{uu} \rho_3(c^\star, u^\star)
\end{pmatrix}.
\end{align*}
To improve readability, we denote $\nabla_{uc} \rho_3(c^\star, u^\star)$ by $S$ in the following.

Since the projection onto the manifold $\R \times S^{n-1}$ is a smooth retraction mapping from any tangent space onto the manifold, this projection preserves convergence properties of the algorithm \cite[Chapter 4]{Absil2008}. Therefore, we are interested in the eigenvalues of $K(c^\star, u^\star)$ on the tangent space $T_{(c^\star,u^\star)} (\R \times S^{n-1})$ spanned by the vectors $b_1, \dots, b_n$.

The eigenvalue equation for the matrix $K(c^\star, u^\star)$ with the eigenvalue $\mu$ and the eigenvector $(c_e, u_e)$ reads
\begin{align*}
- \frac{1}{2} S^\top u_e &= \mu c_e \\
\alpha S c_e + (I_n + \alpha \nabla_{uu} \rho (c^\star,u^\star)) u_e &= \mu u_e, \\
u_e ^\top u^\star &= 0.
\end{align*}
Multiplying the first equation by $\alpha S$ and replacing $\alpha S c_e$ by the second equation yields
\begin{align*}
- & \frac{\alpha}{2} S S^\top u_e = \mu ((\mu-1) I_n - \alpha \nabla_{uu} \rho (c^\star,u^\star) ) u_e
\end{align*}
and hence
\begin{align*}
& \frac{\alpha}{2} \left( S S^\top - 2 \nabla_{uu} \rho (c^\star,u^\star) \right) u_e = \mu (1-\mu) u_e.
\end{align*}

If $u_e = 0$, then $\mu c_e = 0$. Since $(c_e,u_e) \neq 0$, and hence $c_e \neq 0$, we find $\mu = 0$ implying $| \mu | < 1$.

If $u_e \neq 0$, then $\mu (1-\mu)$ is an eigenvalue of the matrix $\frac{\alpha}{2} \left( S S^\top - 2 \nabla_{uu} \rho (c^\star,u^\star) \right)$, which is symmetric and positive definite in the tangent space $T_{(c^\star,u^\star)} (\R \times S^{n-1})$ if $\lambda_1(A(c^\star))$ is an eigenvalue of multiplicity one, or an eigenvalue of multiplicity two and Assumption 1 holds (cf. Thm.~3). Hence, we know that $\mu (1-\mu)$ is real and
\begin{align}
0 < \mu (1-\mu) &\leq \alpha \left\| \frac{1}{2}S S^\top - \nabla_{uu} \rho (c^\star,u^\star) \right\| 
\label{eq:smallerSStop}
\\ &\leq  \frac{\alpha}{2} \| S S^\top \| + 2 \alpha \left( \lambda_1(A(c^\star))- \lambda_n(A(c^\star)) \right). \notag
\end{align}

The term $\mu (1-\mu)$ being real implies $\Imag (\mu) = 0$, or $\Real (\mu) = \frac{1}{2}$. With $\Imag (\mu) = 0$ and $\mu (1-\mu) > 0$, it follows directly that $| \mu | < 1$ must hold. 

Let 
\begin{align*}
\bar{\alpha} = \frac{1}{2 \| S S^\top - 2 \nabla_{uu} \rho (c^\star,u^\star) \| }.
\end{align*}
 
In the case $\Imag (\mu) \neq 0$ and hence $\Real (\mu) = \frac{1}{2}$, we find $\mu (1-\mu) = \frac{1}{4} + \Imag^2(\mu) > \frac{1}{4} = \bar{\alpha} \left\|  \frac{1}{2}S S^\top - \nabla_{uu} \rho (c^\star,u^\star) \right\|$. This, however, contradicts \eqref{eq:smallerSStop} for all $\alpha < \bar{\alpha}$, and thus we have $\Imag (\mu) = 0$.

Altogether, this leaves us with eigenvalues $\mu$ with an absolute value strictly less than one whenever $\alpha < \bar{\alpha}$, and hence $(e_k) \rightarrow 0 $ whenever $e_0$ is small. Therefore, the method \eqref{eq:arrow_hurwicz_it} is locally convergent to $(c^\star, u^\star)$.
\end{proof}
With this iterative approach for saddle point problems, we conclude this section of methods for discrete time linear systems to identify the $\mathcal{L}^2$-gain, the shortage of passivity and the minimal cone that the input-output system is confined to. The presented analysis of possible approaches for data-driven inference of control theoretic system properties depicts the potential of this framework and builds the basis for extensions for robust gradient based methods or application to other classes of systems. 
\section{Generalizations and Extensions}
\label{sec:extensions}
The previous section introduces a systematic approach to iteratively determine certain dissipation inequalities from input-output data and, moreover, provides a rigorous mathematical framework and hence also the foundation for generalizations and extensions. In this section, 
we start by introducing the necessary tools to evaluate also continuous time systems via iterative methods and show how a similar approach than in the previous section is applicable. Furthermore, we summarize how the presented results can also be applied to MIMO systems, discuss how measurement noise impacts the approach and present some insights into the convergence rate.
\subsection{Continuous Time LTI Systems}
\label{sec:continuous}
In this section, we consider SISO continuous time LTI systems $H: \mathcal{L}^2 \rightarrow \mathcal{L}^2$, where the set $\mathcal{L}^2$ denotes the square integrable functions. 
The input to output operator can be written as
\begin{align*}
y(t) = (g*u)(t) = \int \! g(t-\zeta) u(\zeta) \, \mathrm{d} \zeta 
\end{align*}
where $g$ denotes the continuous time impulse response of the system, $u$ is the input to the system and $y$ is the output of the system. 
In the following, the convolution operator $u \mapsto g * u$ will be denoted by $u \mapsto C_g(u)$ for readability. 
Again, we assume $u(t)=0$ for $t < 0$.

For every bounded linear operator on a Hilbert space ${H: \mathcal{L}^2 \rightarrow \mathcal{L}^2}$, 
there exists a unique adjoint operator ${H^\star: \mathcal{L}^2 \rightarrow \mathcal{L}^2}$ defined by 
$\langle H(u), y \rangle = \langle u, H^\star(y) \rangle$,
wherein $\|\cdot \| : \L2 \rightarrow \R$ denotes the $\L2$-norm and $\langle \cdot,\cdot \rangle: \L2 \times \L2 \rightarrow \R$ denotes the $\L2$-inner product. Let $\bar{g}(t) = g(-t)$.
Then
\begin{align*}
\langle C_g(u), y \rangle 
&= \int \! \int \! g(t-\zeta) u(\zeta) \, \mathrm{d}\zeta \, y(t) \, \mathrm{d}t \\
&= \int \! \int \! g(t-\zeta) y(t) u(\zeta) \, \mathrm{d}t  \, \mathrm{d}\zeta 
\\
&= \int \! u(\zeta) \int \! g(-(\zeta-t)) y(t) \, \mathrm{d}t  \, \mathrm{d}\zeta
= \langle u , C_{\bar{g}}(y) \rangle
\end{align*} 
verifies that $C_{\bar{g}}$ is the adjoint operator of $C_g$.

In the following, we iteratively search for the input ${u \in \mathcal{L}^2}$ corresponding to the operator gain $\gamma$, the shortage of passivity $s$ and conic relations, respectively, as depicted in Fig. \ref{fig:PDE_cont}.
\begin{figure}[t]
\centering
\begin{tikzpicture}[scale=0.8]
    \begin{axis}[   
		axis lines=left, ytick=\empty,
        xlabel={Iterations $k$},
        ylabel={$t$},
        zlabel={$u_k(t)$},
        view/h=-35,
        grid=both,  
        cycle list name=color list]  
    \addplot3[no marks, color=P, line width = 1pt] table {ffigs/T16.dat};    
    \addplot3[no marks, color=O, line width = 1pt] table {ffigs/T15.dat};    
    \addplot3[no marks, color=N, line width = 1pt] table {ffigs/T14.dat};
    \addplot3[no marks, color=M, line width = 1pt] table {ffigs/T13.dat};
    \addplot3[no marks, color=L, line width = 1pt] table {ffigs/T12.dat};
    \addplot3[no marks, color=K, line width = 1pt] table {ffigs/T11.dat};
    \addplot3[no marks, color=J, line width = 1pt] table {ffigs/T10.dat};
    \addplot3[no marks, color=I, line width = 1pt] table {ffigs/T9.dat};
    \addplot3[no marks, color=H, line width = 1pt] table {ffigs/T8.dat};
    \addplot3[no marks, color=G, line width = 1pt] table {ffigs/T7.dat};    
    \addplot3[no marks, color=F, line width = 1pt] table {ffigs/T6.dat};    
    \addplot3[no marks, color=E, line width = 1pt] table {ffigs/T5.dat};    
    \addplot3[no marks, color=D, line width = 1pt] table {ffigs/T4.dat};    
    \addplot3[no marks, color=C, line width = 1pt] table {ffigs/T3.dat};    
    \addplot3[no marks, color=B, line width = 1pt] table {ffigs/T2.dat};    
    \addplot3[no marks, color=A, line width = 1pt] table {ffigs/T1.dat};    
    \end{axis}
\end{tikzpicture}
\caption{The goal is to iteratively converge from an initial input $u_0 \in \mathcal{L}^2$ towards the input $u$ corresponding to the operator gain $\gamma$, the shortage of passivity $s$ or conic relations, respectively.}
\label{fig:PDE_cont}
\end{figure}
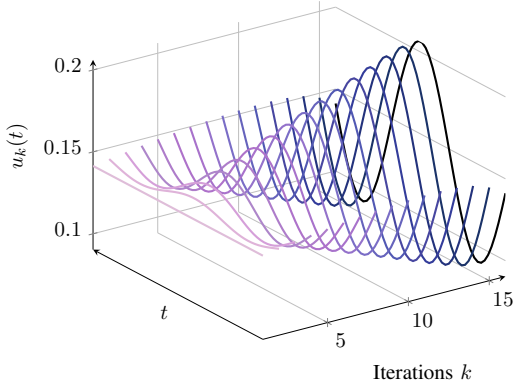
\subsubsection{$\mathcal{L}^2$-Gain}
For continuous time LTI systems we can reformulate the $\mathcal{L}^2$-gain condition into
\begin{align*}
\begin{split}
\gamma^2 &= \sup_{ \|u\| \neq 0} \rho_1(u) = \sup_{ \|u\| \neq 0} \frac{1}{2} \frac{\langle C_g(u), C_g(u)\rangle}{\|u\|^2} \\
&= \sup_{ \|u\| \neq 0} \frac{\int \! \left( \int \! g(\zeta) u(t-\zeta) \, \mathrm{d}\zeta  \, \right)^2 \, \mathrm{d}t}
{\int \! u^2(t) \, \mathrm{d}t}
\end{split}
\end{align*}
where $\rho_1: \mathcal{L}^2 \rightarrow \mathbb{R}$ is a scale-invariant function, also referred to as the Rayleigh quotient. 

We say that $\rho_1$ is Fr\'echet-differentiable in $\mathcal{L}^2$ if for every $u \in \mathcal{L}^2$ there exists a linear operator $D\rho_1(u)$ such that
\begin{align*}
\lim_{h \rightarrow 0} \frac{| \rho_1(u+h)-\rho_1(u) - D\rho_1(u)(h) |}{\|h\|} = 0.
\end{align*}
By the Riesz representation theorem, there is a unique ${\rho_1':\mathcal{L}^2 \rightarrow \mathcal{L}^2}$ such that $D\rho_1(u)(h) =  \langle \rho_1'(u), h \rangle$ if $\rho_1$ is differentiable at $u$. $D\rho_1(u)$ is also called the dual of $\rho_1'(u)$.
Without loss of generality, we consider $\rho_1$ on the unit sphere $S_{\mathcal{L}^2} = \{ u \in \mathcal{L}^2 | \|u\| = 1 \}$.
\begin{prop}
The Fr\'echet-derivative of $\rho_1$ on the unit sphere $S_{\mathcal{L}^2}$
is given by
\begin{align*}
\rho_1'(u) = 2 \left( C_{\bar{g}} \circ C_g \right) (u) - 2\rho_1(u)  u .
\end{align*}
and can be computed by evaluating $u \mapsto C_g(u)$ twice.
\label{prop:2a}
\end{prop}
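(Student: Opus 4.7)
The plan is to mirror the proof of Proposition~\ref{prop:1a} in the continuous-time setting, with the permutation matrix $P$ replaced by a time-reversal operator. First I would compute the Fr\'echet derivative of $\rho_1$ on $\mathcal{L}^2 \setminus \{0\}$ via a quotient-rule calculation. The Fr\'echet derivative of $u \mapsto \|C_g(u)\|^2 = \langle C_g(u), C_g(u)\rangle$ is the linear functional $h \mapsto 2\langle C_g(u), C_g(h)\rangle = 2\langle (C_{\bar{g}} \circ C_g)(u), h\rangle$, where the second equality is the adjoint identity already verified in the excerpt. Combined with the Fr\'echet derivative $h \mapsto 2\langle u, h\rangle$ of $u \mapsto \|u\|^2$, the quotient rule yields a bounded linear functional on $\mathcal{L}^2$ whose Riesz representer is
\begin{equation*}
\rho_1'(u) = \frac{2\,(C_{\bar{g}} \circ C_g)(u)}{\|u\|^2} - \frac{2\,\langle C_g(u), C_g(u)\rangle}{\|u\|^4}\,u.
\end{equation*}
Restricting to $S_{\mathcal{L}^2}$, i.e.\ $\|u\|=1$, this collapses to $2(C_{\bar{g}} \circ C_g)(u) - 2\rho_1(u)u$, which moreover lies in the tangent space at $u$ as one checks by pairing with $u$.

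The second assertion, that $\rho_1'(u)$ can be evaluated from only two applications of $u \mapsto C_g(u)$, requires a continuous-time analog of the involutory conjugation $PG^\top = GP$ used in Proposition~\ref{prop:1a}. I would introduce the time-reversal operator $R$ defined by $(Rf)(t) = f(-t)$ (or, on a finite horizon $[0,T]$, by $(Rf)(t) = f(T-t)$) and show, via the change of variables $\zeta \mapsto -\zeta$ in the convolution integral, that $C_{\bar{g}} = R \circ C_g \circ R$. This is the exact continuous-time counterpart of $G^\top = PGP$, and it gives $(C_{\bar{g}} \circ C_g)(u) = R(C_g(R(C_g(u))))$. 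Thus two evaluations of $u \mapsto C_g(u)$ separated by time reversals assemble the first summand of $\rho_1'(u)$, while the scalar $\rho_1(u) = \|C_g(u)\|^2$ falls out as a by-product of the first evaluation.

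The main obstacle I anticipate is the cleanliness of the Fr\'echet differentiability step: one needs the remainder $\|C_g(h)\|^2/\|h\|$ to vanish as $\|h\| \to 0$ in $\mathcal{L}^2$, which requires boundedness of $C_g$ on $\mathcal{L}^2$. Since the excerpt is implicitly restricting to causal, asymptotically stable LTI systems, $C_g$ is indeed a bounded operator on $\mathcal{L}^2$ and the estimate is routine. A secondary care point is that the identity $C_{\bar{g}} = R \circ C_g \circ R$ must respect causality in any implementation: working with inputs supported on a finite interval $[0,T]$, the finite-horizon time reversal $(Rf)(t) = f(T-t)$ maps causal admissible signals to causal admissible signals, so the two-evaluation scheme can actually be realized in simulation, in direct analogy with the discrete-time recipe $G^\top u = PGPu$.
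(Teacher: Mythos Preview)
Your proposal is correct and follows essentially the same approach as the paper: compute the Fr\'echet derivative of the numerator via the adjoint identity, apply the quotient rule and restrict to $S_{\mathcal{L}^2}$, then realize $C_{\bar g}$ through time reversal to obtain the two-evaluation scheme. The paper's bar notation $\overline{C_g(\overline{C_g(u)})}$ is exactly your $R\circ C_g\circ R\circ C_g$, and your remarks on boundedness of $C_g$ and finite-horizon causality make explicit what the paper leaves implicit.
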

\begin{proof}
First, we claim that the Fr\'echet-derivative of 
\begin{align*}
f(u) = \langle C_g(u) , C_g(u) \rangle
\end{align*}
is given by $f'(u) = 2 \left( C_{\bar{g}} \circ C_g \right) (u)$. By applying the presented definition
\begin{align*}
		| f&(u+h)-f(u)- \langle f'(u), h \rangle | \\
		=& | \langle C_g(u+h), C_g(u+h) \rangle \\
		&-  \langle C_g(u),C_g(u) \rangle 
		- 2 \langle \left( C_{\bar{g}} \circ C_g \right) (u), h \rangle|\\
		= &|\langle C_g(h), C_g(h) \rangle|  = \mathcal{O}(\|h\|^2)
		\end{align*}
we find that the claim is indeed true.

From the quotient rule follows
\begin{align*}
\rho_1'(u) &= \frac{2}{\|u\|^2} \left( \left( C_{\bar{g}} \circ C_g \right) (u) - \frac{\langle C_g(u), C_g(u) \rangle}{\langle u, u \rangle}  u \right)\\
&=2 \left( C_{\bar{g}} \circ C_g \right) (u) - 2\rho_1(u)  u.
\end{align*}

In order to calculate the Fr\'echet-derivative $\rho_1'(u)$ from two input-output tuples, we require $C_g(u)$ and $C_{\bar{g}} \circ C_g (u)$ from evaluating $u \mapsto C_g(u)$.
Therefore, we rewrite the adjoint operator 
\begin{align*}
C_{\bar{g}}(y)(t) &= \int \! g(-\zeta) y(t-\zeta) \, \mathrm{d}\zeta = \int \! g(\zeta) y(t+\zeta) \, \mathrm{d}\zeta 
\\
&= \int \! g(\zeta) \bar{y}(-t-\zeta) \, \mathrm{d}\zeta 
= C_{g}(\bar{y})(-t)
\end{align*}
to see that $\left( C_{\bar{g}} \circ C_g \right) (u) = C_{\bar{g}}(y) = \overline{C_{g}(\bar{y})} = \overline{C_{g}(\overline{C_g(u)})}$ holds, where the bar again denotes time-reversal. 
\end{proof}
Hence, even though we have no knowledge about the system but input-output information, we can construct the Fr\'echet-derivative $\rho_1'(u)$ from two input-output samples, namely $C_g(u)$ and $\left( C_{\bar{g}} \circ C_g \right) (u) = \overline{C_{g}(\overline{C_g(u)})}$ by time-reversing the output ${\overline{C_g(u)}(t) = C_g(u)(-t)}$, choosing it as yet another input and time-reversing the output again.

Similar to the discrete time case, we plan to maximize $\rho_1$ by a dynamical system that is now described by the evolution equation $\frac{\partial}{\partial \tau} u(\tau) {=} \rho_1'(u(\tau))$ reading
\begin{align}
\frac{\partial}{\partial \tau} u(\tau) {=} 2 \left( C_{\bar{g}} \circ C_g \right) (u(\tau)) {-} 2\rho_1(u(\tau))  u(\tau)  
\label{eq:PDE} 
\end{align}
along whose solution $\rho_1$ increases monotonically.
We can show that \eqref{eq:PDE} leaves the unit sphere $S_{\mathcal{L}^2}$ invariant:
\begin{align*}
\begin{split}
\frac{\partial}{\partial \tau} \|u(\tau)\|^2
=& 2\langle u(\tau), \frac{\partial}{\partial \tau} u(\tau) \rangle \\
=&4 \langle u(\tau), \left( C_{\bar{g}} \circ C_g \right) (u(\tau)) - \rho_1(u(\tau))  u(\tau) \rangle \\
=& 4 \rho_1(u(\tau)) - 4\rho_1(u(\tau))) = 0.
\end{split}
\end{align*}
\begin{prop}
The Rayleigh quotient $\rho_1(u(\tau))$ monotonically increases along the solutions of \eqref{eq:PDE} and converges for ${\tau \rightarrow \infty}$.
\label{prop:2a2}
\end{prop}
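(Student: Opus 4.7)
The plan is to establish monotonicity by a direct chain-rule calculation, and then deduce convergence by showing that $\rho_1$ is bounded above along the flow.

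First I would compute $\frac{\mathrm{d}}{\mathrm{d}\tau} \rho_1(u(\tau))$ using the Fr\'echet derivative from Prop.~\ref{prop:2a}. Since the evolution equation \eqref{eq:PDE} is precisely $\frac{\partial}{\partial \tau} u(\tau) = \rho_1'(u(\tau))$, the chain rule gives
\begin{align*}
\frac{\mathrm{d}}{\mathrm{d}\tau} \rho_1(u(\tau)) = \langle \rho_1'(u(\tau)), \tfrac{\partial}{\partial \tau} u(\tau) \rangle = \| \rho_1'(u(\tau)) \|^2 \geq 0,
\end{align*}
which immediately establishes monotonicity. Equality holds only at critical points of $\rho_1$, i.e.\ at fixed points of the flow.

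Second, since $H = C_g$ is a bounded linear operator on $\mathcal{L}^2$, it has finite operator norm, so $\rho_1(u) = \tfrac{1}{2}\|C_g(u)\|^2 / \|u\|^2$ is bounded above by $\tfrac{1}{2}\|C_g\|_{\mathrm{op}}^2 = \gamma^2$ on all of $\mathcal{L}^2 \setminus \{0\}$, and in particular on $S_{\mathcal{L}^2}$ which is left invariant by \eqref{eq:PDE}. Thus $\tau \mapsto \rho_1(u(\tau))$ is a real-valued function that is monotonically nondecreasing and bounded above, and by the monotone convergence theorem for real sequences (applied to nets/continuous arguments), $\lim_{\tau \to \infty} \rho_1(u(\tau))$ exists in $\mathbb{R}$.

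The one subtlety to address is global existence of the solution $u(\tau)$ for all $\tau \geq 0$: the vector field on the right-hand side of \eqref{eq:PDE} is locally Lipschitz on $\mathcal{L}^2 \setminus \{0\}$ (being the sum of a bounded linear map and a smooth quadratic in $u$), and because the flow preserves $S_{\mathcal{L}^2}$, the solution remains on the unit sphere where the vector field is uniformly bounded in norm by $2\|C_{\bar g}\circ C_g\|_{\mathrm{op}} + 2\gamma^2$. This prevents blow-up in finite time and yields a solution defined for all $\tau \geq 0$, which is what we need for the $\tau \to \infty$ limit to be meaningful. The main obstacle is really just ensuring this well-posedness on the infinite-dimensional manifold $S_{\mathcal{L}^2}$; once that is in place the monotonicity-plus-boundedness argument closes the proof.
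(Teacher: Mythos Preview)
Your proposal is correct and follows essentially the same approach as the paper: the paper also computes $\frac{\partial}{\partial \tau}\rho_1(u(\tau)) = \|\rho_1'(u(\tau))\|^2 \geq 0$ for monotonicity, invokes boundedness of the operator (via the Rayleigh--Ritz/Courant--Fischer--Weyl principle rather than the operator norm directly, but this is the same content), and concludes with the monotone convergence theorem. Your additional paragraph on global existence of the flow on $S_{\mathcal{L}^2}$ is a welcome rigor step that the paper leaves implicit.
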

\begin{proof}
According to the Courant-Fischer-Weyl principle for self-adjoint operators we find an upper bound on $\rho_1$ by
\begin{align*}
\sup \sigma = \sup_{\|u\| \neq 0} \rho_1(u)
\end{align*}
where $\sigma$ denotes the spectrum of the linear and bounded operator $u \mapsto C_g(u)$.
This principle is also referred to as the Rayleigh-Ritz principle.
Moreover, on the basis of the Fr\'echet-derivative of Prop.~\ref{prop:2a}, we can conclude that $\rho_1(u(\tau))$ is a monotonically increasing function of $\tau$:
\begin{align}
\begin{split}
\frac{\partial}{\partial \tau} \rho_1(u(\tau))
{=}& \langle \rho_1'(u(\tau)), \frac{\partial}{\partial \tau} u(\tau) \rangle 
{=} \| \rho_1'(u(\tau)) \|^2 \geq 0.
\label{eq:2a_mono_decr}
\end{split}
\end{align}
Thus, $\rho_1(u(\tau))$ is monotonically increasing with $\tau$ and upper-bounded by the Rayleigh-Ritz principle stated above. By the monotone convergence theorem, $\tau \mapsto \rho_1(u(\tau))$ converges. 
\end{proof} 
Another promising approach of maximizing $\rho_1$ could be to apply Temple's inequality where an additional term can guarantee a lower bound on the infimum of $-\rho_1$ \cite{Harrell1978} and hence the supremum of $\rho_1$.
\subsubsection{Passivity}
Similarly to the $\mathcal{L}^2$-gain, the shortage of passivity can be studied by 
\begin{align*}
\begin{split}
s &= -\inf_{ \|u\| \neq 0} \rho_2(u) = -\inf_{ \|u\| \neq 0} \frac{\langle u, C_g(u) \rangle}{\|C_g(u)\|^2}.
\end{split}
\end{align*}
Since $\langle u, C_g(u) \rangle = \langle C_g(u), u \rangle = \langle u , C_{\bar{g}}(u) \rangle$ holds by the definition of the adjoint operator, we rewrite $\rho_2$ into
\begin{align}
\rho_2(u) = \frac{1}{2} \frac{\langle u, C_g(u) + C_{\bar{g}}(u) \rangle}{\|C_g(u)\|^2}.
\label{eq:spassive_continuous}
\end{align}
where $\rho_2: \mathcal{L}^2 \rightarrow \mathbb{R}$ is a scale-invariant function also referred to as the generalized Rayleigh quotient.
Without loss of generality, we consider $\rho_2$ on the unit sphere $S_{\mathcal{L}^2}$.
\begin{prop}
The Fr\'echet-derivative of $\rho_2$ on the unit sphere $S_{\mathcal{L}^2}$
is given by
\begin{align*}
\rho_2'(u) = \frac{1}{\| C_g(u) \|^2} \left( C_{\bar{g}}(u) {+} C_g(u) - 2 \rho_2(u) (C_{\bar{g}} \circ C_g)(u) \right)
\end{align*}
and can be computed by evaluating $u \mapsto C_g(u)$ thrice.
\label{prop:3b}
\end{prop}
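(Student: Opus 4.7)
The plan is to mirror the structure of the proof of Prop.~\ref{prop:2a}, but with the extra difficulty that $\rho_2$ is now a quotient of two nontrivial functionals. First, I would compute the Fr\'echet-derivatives of the numerator $N(u) = \tfrac{1}{2}\langle u, C_g(u) + C_{\bar g}(u)\rangle$ and the denominator $D(u) = \|C_g(u)\|^2$ separately. For $D$, Prop.~\ref{prop:2a} already supplies $D'(u) = 2\left(C_{\bar g}\circ C_g\right)(u)$. For $N$, I would observe that $C_g + C_{\bar g}$ is self-adjoint, since $C_{\bar g}$ is the adjoint of $C_g$, so that expanding $N(u+h)-N(u)$ and collecting terms of order $\|h\|$ leaves $N'(u) = (C_g + C_{\bar g})(u)$ with an $\mathcal{O}(\|h\|^2)$ remainder, exactly as in the $f$-computation of Prop.~\ref{prop:2a}.

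Second, I would apply the quotient rule of Fr\'echet differentiation,
\begin{align*}
\rho_2'(u) \;=\; \frac{N'(u)D(u)\;-\;N(u)D'(u)}{D(u)^2},
\end{align*}
substitute the two derivatives, and recognize $N(u)/D(u)=\rho_2(u)$ to arrive at the claimed formula
\begin{align*}
\rho_2'(u) = \frac{1}{\|C_g(u)\|^2}\bigl(C_g(u) + C_{\bar g}(u) - 2\rho_2(u)\left(C_{\bar g}\circ C_g\right)(u)\bigr).
\end{align*}
Because $\rho_2$ is scale-invariant, its Fr\'echet-derivative is automatically tangent to $S_{\mathcal{L}^2}$ at any $u\in S_{\mathcal{L}^2}$, so no additional orthogonal projection is needed to obtain the gradient on the sphere.

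Third, I would show that three evaluations of $u\mapsto C_g(u)$ suffice to assemble every quantity appearing in the formula. Exactly as in Prop.~\ref{prop:2a}, the identity $C_{\bar g}(v) = \overline{C_g(\bar v)}$ lets us trade every application of the adjoint for one application of $C_g$ sandwiched between two time-reversals. Thus the first experiment yields $y_1=C_g(u)$; the second experiment, with the time-reversed input $\bar u$, yields $C_{\bar g}(u) = \overline{C_g(\bar u)}$; and the third experiment, with input $\overline{y_1}$, yields $\left(C_{\bar g}\circ C_g\right)(u) = \overline{C_g(\overline{y_1})}$. The scalar $\|C_g(u)\|^2$, and hence $\rho_2(u)$, are then inner products of these three measured signals (and $\|C_g(u)\|^2 = \langle u, (C_{\bar g}\circ C_g)(u)\rangle$ is preferable for noise reasons, as noted after Prop.~\ref{prop:1b}).

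The main obstacle I anticipate is not the algebraic quotient-rule calculation, which is routine once $C_g+C_{\bar g}$ has been identified as self-adjoint, but rather the careful bookkeeping for the Fr\'echet-derivative of $N$ on an infinite-dimensional Hilbert space: one must verify that the ostensibly cross-term $\langle h, C_{\bar g}(u)\rangle + \langle u, C_g(h)\rangle$ collapses cleanly to $\langle h, (C_g + C_{\bar g})(u)\rangle$ via the adjoint identity, so that the Riesz representative of the linear part is exactly $(C_g+C_{\bar g})(u)$ and no stray factors corrupt the $2\rho_2(u)$ coefficient in the final expression.
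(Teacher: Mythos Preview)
Your proposal is correct and follows precisely the route the paper intends: the paper's own proof consists of the single sentence ``The proof is analogous to the proof of Prop.~\ref{prop:2a},'' and what you have written is exactly that analogy carried out in detail---Fr\'echet-derivatives of numerator and denominator via the self-adjointness of $C_g+C_{\bar g}$, the quotient rule, and the same time-reversal trick to realize $C_{\bar g}$ from evaluations of $C_g$. Your three-experiment bookkeeping also matches the discrete-time version in Prop.~\ref{prop:1b}, so nothing is missing.
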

\begin{proof}
The proof is analogous to the proof of Prop.~\ref{prop:2a}.
\end{proof}
To minimize $\rho_2$, we again consider the evolution equation
\begin{align}
\frac{\partial}{\partial \tau} u(\tau) = - \rho_2'(u(\tau)) \label{eq:PDE_2}
\end{align}
along whose solution $\rho_2$ decreases monotonically.
We can show that \eqref{eq:PDE_2} leaves the unit sphere $S_{\mathcal{L}^2}$ invariant.
\begin{prop}
The generalized Rayleigh quotient $\rho_2$ monotonically decreases along the solutions of \eqref{eq:PDE_2}. 
\label{prop:2b}
\end{prop}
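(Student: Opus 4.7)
The plan is to mirror the argument used for Proposition~\ref{prop:2a2} almost verbatim, replacing $\rho_1$ and its gradient ascent by $\rho_2$ and its gradient descent. Concretely, I would invoke the chain rule for Fréchet-differentiable maps on $\L2$: since $\rho_2$ is Fréchet-differentiable at $u(\tau)$ with derivative represented by $\rho_2'(u(\tau))$ via the Riesz isomorphism, one has
\begin{equation*}
\frac{\partial}{\partial \tau} \rho_2(u(\tau)) = \langle \rho_2'(u(\tau)), \tfrac{\partial}{\partial \tau} u(\tau) \rangle.
\end{equation*}
Substituting the evolution equation~\eqref{eq:PDE_2} then gives
\begin{equation*}
\frac{\partial}{\partial \tau} \rho_2(u(\tau)) = -\langle \rho_2'(u(\tau)), \rho_2'(u(\tau)) \rangle = -\|\rho_2'(u(\tau))\|^2 \le 0,
\end{equation*}
which is exactly the claim. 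Equality holds precisely at critical points of $\rho_2$ on $S_{\L2}$.

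Before writing this, I would briefly remark that the invariance of the unit sphere under~\eqref{eq:PDE_2} (announced in the sentence preceding the proposition) makes the computation consistent: the flow stays in the domain on which $\rho_2$ equals its restriction to $S_{\L2}$, so the chain-rule computation above is being applied to the genuine restriction whose derivative is $\rho_2'(u)$ from Proposition~\ref{prop:3b}. The invariance itself follows from the analogue of the calculation in Proposition~\ref{prop:2a2}, using that $\langle u, \rho_2'(u)\rangle = 0$ for $u \in S_{\L2}$; indeed, applying $\langle u, \cdot\rangle$ to the formula in Proposition~\ref{prop:3b} gives $\|C_g(u)\|^{-2}(2\langle u, C_g(u)\rangle - 2\rho_2(u)\|C_g(u)\|^2) = 0$ by the definition of $\rho_2$ in~\eqref{eq:spassive_continuous}.

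The only nontrivial point, and thus the part I would write out most carefully, is justifying the chain rule step. This requires that $\tau \mapsto u(\tau)$ is differentiable in the $\L2$-norm and that $\rho_2$ is continuously Fréchet-differentiable along the trajectory; both are inherited from the smoothness of $C_g$ and $C_{\bar g}$ as bounded linear operators on $\L2$, together with the assumption $\|C_g(u)\|\neq 0$ on the sphere $S_{\L2}$ (so that $\rho_2$ is well-defined and smooth there). No additional convergence claim is needed since the proposition only asserts monotone decrease; unlike Proposition~\ref{prop:2a2}, I would not attempt to conclude convergence of $\rho_2(u(\tau))$ here, as that would further require a lower bound analogous to the Rayleigh–Ritz principle, which is less immediate for the generalized Rayleigh quotient.
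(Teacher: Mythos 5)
Your proposal is correct and is essentially the paper's own argument: the paper proves this proposition by pointing back to the chain-rule computation in Eq.~\eqref{eq:2a_mono_decr} with the sign flipped, i.e.\ $\frac{\partial}{\partial \tau}\rho_2(u(\tau)) = \langle \rho_2'(u(\tau)), -\rho_2'(u(\tau))\rangle = -\|\rho_2'(u(\tau))\|^2 \leq 0$, exactly as you write. Your additional remarks on sphere invariance via $\langle u, \rho_2'(u)\rangle = 0$ and on not claiming convergence (since no Rayleigh--Ritz-type lower bound is invoked here) are accurate and consistent with the paper.
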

\begin{proof}
On the basis of the Fr\'echet-derivative of Prop.~\ref{prop:2b}, we can conclude that $\rho_2(u(\tau))$ is a monotonically decreasing function of $\tau$, cf. Eq. \eqref{eq:2a_mono_decr}.
\end{proof} 
In case of conic relations, the gradients with respect to both variables $c \in \mathbb{R}$ and $u \in S_{\mathcal{L}^2}$ can also be obtained from 
evaluating $u \mapsto C_g(u)$ thrice for determining the cone with infimal radius that a continuous time system is confined to, analogously to the cases of gain and passivity as introduced before. More details are left out due to brevity.
\subsection{Multiple Input Multiple Output Systems}
\label{sec:mimo}
In this subsection, we shortly summarize how the presented approach can be extended to MIMO systems along the lines of \cite{Oomen2014, Romer2018a}. 
For simplicity we consider square MIMO systems, for which the input-output map for a given input sequence in matrix notation reads
\begin{align}
\begin{pmatrix}
y_{1} \\
y_{2} \\
\vdots\\
y_{m} \\
\end{pmatrix}
=
\begin{pmatrix}
G_{11} & G_{12} & \cdots & G_{1m} \\
G_{21} & G_{22} & \cdots & G_{2m} \\
\vdots & \vdots & \vdots &  \vdots\\
G_{m1} & G_{m2} & \cdots & G_{mm} \\
\end{pmatrix}
\begin{pmatrix}
u_{1} \\
u_{2} \\
\vdots\\
u_{m} \\
\end{pmatrix}
\label{eq:mimo}
\end{align}
with $u_i, y_i \in \mathbb{R}^n$, $i=1,\dots,m$. In short notation, \eqref{eq:mimo} will be denoted by $Y = \Gamma U$ where $Y \in \mathbb{R}^{mn}$, $U \in \mathbb{R}^{mn}$ and $ \Gamma \in \mathbb{R}^{mn \times mn}$. The input-output system properties can then again be formulated as optimization problems
\begin{align*}
\gamma^2 & 
= \max_{\|U\| \neq 0} \frac{ \| \Gamma U \|^2}{\|U\|^2}, \quad
s  
= - \min_{\|U\| \neq 0} \frac{ U^\top \Gamma U }{\| \Gamma U\|^2}, \\
r_{\min}^2 
& = \min_{c} \max_{\|U\| \neq 0} \frac{c^2 \|U\|^2 - 2c  U^\top \Gamma U  + \| \Gamma U \|^2}{\|U\|^2}.
\end{align*}
We hence retrieve the same optimization problems as in the SISO case with the same respective gradients \eqref{eq:grad1a}, \eqref{eq:grad2a} and \eqref{eq:conic_gradient1} with \eqref{eq:conic_gradient2}, which can be computed with the terms $U, \Gamma U, \Gamma^\top U$ and $\Gamma^\top \Gamma U$. In contrast to the SISO case, however, $\Gamma$ does not have Toeplitz structure and therefore $P \Gamma^\top \neq \Gamma P$, but we can still compute the gradients by evaluating $U \mapsto \Gamma U$ while $\Gamma$ remains undisclosed. 
Let $E^{ij}_m$ be the $m \times m$ matrix with zero entries everywhere except for the single entry $1$ at the $i$th row and $j$th column. Decomposing $\Gamma^\top U = \sum_{i,j=1}^{m} (E^{ij}_m \otimes P) \Gamma (E^{ij}_m \otimes P) U$ with $\otimes$ denoting the Kronecker product, yields a constructive procedure to compute $\Gamma^\top U$ from $m^2$ input-output tuples. For all $i,j = 1, \dots, m$, we choose the $j^{th}$ component of $u$, viz. $u_j$, time-reverse it, apply it to the $i^{th}$ input of the system defined by $\Gamma$, measure only the $i^{th}$ output and time reverse it again. 

Altogether, we hence require $m^2+1$ evaluations of $U \mapsto \Gamma U$ to compute $\nabla \rho_1(U)$, and $2m^2 +1$ evaluations to calculate $\nabla \rho_2(U)$ or $\nabla_c \rho_3(c,U)$ together with $\nabla_U \rho_3(c,U)$. However, any prior knowledge on the coupling of the MIMO system can significantly reduce the amount of required (numerical) experiments, e.g. in case of knowledge on the interconnection of networked dynamical systems \cite{Romer2018a}.

Since the optimization problems are analogous to the SISO case and since we have shown that the gradient can be computed from (numerical) experiments, all convergence guarantees presented in this paper hold also for the MIMO case and can be extended along the lines of Sec.~\ref{sec:continuous} and Sec.~\ref{sec:noise}.
\subsection{Measurement noise}
\label{sec:noise}
The presented framework for determining system properties is based on gradient dynamical systems. Generally speaking, the iterative procedure hence inherits robustness properties of such approaches from classical results, e.g.\ from \cite{Polyak1987}. To be more specific, we evaluate the case where the output is corrupted by additive measurement noise $e$, ${e{=}\begin{pmatrix} e(1), ..., e(n) \end{pmatrix}^\top}$. Similar to \cite{Wahlberg2010}, we consider white noise with zero mean and variance $\sigma_e^2$. For the $\mathcal{L}^2$-gain, this implies that the data tuples of the experiments necessary for calculating the gradient read $(u_k, G u_k {+} e_{k,1})$, $(P (G u_k {+} e_{k,1}), G P G u_k {+} GP e_{k,1} {+} e_{k,2}) $, 
where $e_{k,1}, e_{k,2}$ is the measurement noise of the first and second experiment.
\begin{lem}
The gradient $\hat{\nabla} \rho_1(u_k)$ computed via~Prop.~\ref{prop:1a}
by evaluating $u_k \mapsto Gu_k + e_{k,j}$ twice ($j=1,2$), with 
$e_{k,j} = \begin{pmatrix} e_{k,j}(1), \dots, e_{k,j}(n) \end{pmatrix}^\top$, $e_{k,j}(i) \sim \N \left( 0, \sigma_e^2 \right), \; i=1, \dots, n$, yields
\begin{align}
\hat{\nabla} \rho_1(u_k) = \nabla \rho_1(u_k) + \epsilon_k
\label{eq:noise_grad}
\end{align}
with $\E [\epsilon_k] = 0$ and $\E [\|\epsilon_k\|^2 ] 
\leq 4 \E (e_{k,1}^\top GG^\top e_{k,1} {+} e_{k,2} ^\top e_{k,2})$. 
\end{lem}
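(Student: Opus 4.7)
The plan is to unfold what the two noisy experiments deliver, substitute into the closed-form gradient of Prop.~\ref{prop:1a}, collect the noise terms into $\epsilon_k$, and then separately verify the two claims about $\E[\epsilon_k]$ and $\E[\|\epsilon_k\|^2]$.

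First I would retrace the procedure of Prop.~\ref{prop:1a} with noise: the first experiment returns $y_{k,1}=Gu_k+e_{k,1}$; feeding $Py_{k,1}$ back and time-reversing the second output yields the estimate
\begin{align*}
\widehat{G^{\top}G\,u_k} = P G P y_{k,1} + P e_{k,2} = P G P G u_k + P G P e_{k,1} + P e_{k,2}.
\end{align*}
Since $\|u_k\|=1$, this also gives $\hat{\rho}_1(u_k)=u_k^{\top}\widehat{G^{\top}G\,u_k}=\rho_1(u_k)+u_k^{\top}(PGPe_{k,1}+Pe_{k,2})$. Plugging both into $\hat{\nabla}\rho_1(u_k)=2\widehat{G^{\top}G\,u_k}-2\hat{\rho}_1(u_k)u_k$ and comparing with the noise-free formula identifies
\begin{align*}
\epsilon_k = 2\bigl(I_n - u_k u_k^{\top}\bigr)\bigl(P G P e_{k,1} + P e_{k,2}\bigr).
\end{align*}

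For the mean, both $e_{k,1}$ and $e_{k,2}$ are zero-mean and independent of $u_k$ (which depends only on past iterates), so linearity of expectation gives $\E[\epsilon_k]=0$. For the second moment I would use that $(I_n-u_ku_k^{\top})$ is the orthogonal projector onto $T_{u_k}S^{n-1}$, whence $\|\epsilon_k\|^2\le\|2PGPe_{k,1}+2Pe_{k,2}\|^2$. Expanding and using independence of $e_{k,1},e_{k,2}$ together with $P^{\top}P=I_n$ yields
\begin{align*}
\E[\|\epsilon_k\|^2] \le 4\,\E\bigl[e_{k,1}^{\top}PG^{\top}GPe_{k,1}\bigr]+4\,\E\bigl[e_{k,2}^{\top}e_{k,2}\bigr].
\end{align*}

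The one step that needs a little care, and which I expect to be the only real obstacle, is identifying $PG^{\top}GP$ with $GG^{\top}$. This follows from the involutory-conjugacy relation $PG^{\top}=GP$ established in the proof of Prop.~\ref{prop:1a}: indeed, $PG^{\top}GP=(PG^{\top})(GP)=GP\cdot GP=G(PGP)=GG^{\top}$, using $PGP=G^{\top}$ and $P^2=I_n$. Substituting this identity into the bound above gives exactly the stated inequality, which closes the proof.
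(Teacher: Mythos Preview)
Your proof is correct and follows essentially the same route as the paper: you derive the same error term $\epsilon_k=2(I_n-u_ku_k^{\top})(PGPe_{k,1}+Pe_{k,2})$, use zero-mean noise for $\E[\epsilon_k]=0$, and bound the second moment via the identity $PG^{\top}GP=GG^{\top}$. The only cosmetic difference is that the paper computes $\E[\|\epsilon_k\|^2]$ exactly (obtaining the additional subtracted terms $-(u_k^{\top}G^{\top}e_{k,1})^2-(u_k^{\top}Pe_{k,2})^2$) and then drops them, whereas you invoke the projector contraction $\|(I_n-u_ku_k^{\top})v\|\le\|v\|$ directly; both yield the stated bound.
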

\begin{proof}
Computing the gradient vector field $\rho_1: S^{n-1} \rightarrow \mathbb{R}$ from the noise corrupted data $P (G u_k + e_{k,1}) \mapsto G P G u_k + GP e_{k,1} + e_{k,2}$ yields \eqref{eq:noise_grad} with 
\begin{align*}
\epsilon_k {=} 2 \left( PGP e_{k,1} {+} P {e}_{k,2} {-} \left(u_k^\top PGP e_{k,1} {+} u_k^\top P {e}_{k,2}\right) u_k \right).
\end{align*}
The linearity of the expectation operator and $e_{k,1}(i), e_{k,2}(i) \sim \N \left( 0, \sigma_e^2 \right)$ for $i=1, \dots, n$ directly leads to
$\E [\epsilon_k(i)] = 0,\; i=1,\dots,n$ as well as to an upper bound on the variance $\E [\|\epsilon_k\|^2 ] = 4 \E (e_{k,1}^\top GG^\top e_{k,1} + e_{k,2} ^\top e_{k,2} - (u^\top G^\top e_{k,1})^2 - (u^\top P e_{k,2})^2) \leq 4 \E (e_{k,1}^\top GG^\top e_{k,1} {+} e_{k,2} ^\top e_{k,2})$. 
\end{proof}
Most importantly, this means that the gradient is unbiased. The upper bound on the variance provides theoretical insights. For calculating the variance, however, some bounds on $G$ are necessary. Similar results hold for the gradient of the input-strict passivity cost function (cf.~\cite{Romer2017b}) and the gradients in~\eqref{eq:conic_gradient1}, \eqref{eq:conic_gradient2} for conic relations.
Exemplarily, we keep considering the $\mathcal{L}^2$-gain and let $\lambda_1 \geq \dots \geq \lambda_n$ denote the eigenvalues of $G^\top G$. 
\begin{lem}
The Rayleigh quotient $\rho_1: S^{N-1} \rightarrow R$ has the following characteristics:
\begin{itemize}
\item $\rho_1 \in \mathcal{C}^{\infty}$
\item $\rho_1$ is locally strongly concave at $v_1$ with the concavity parameter $l = \lambda_{1} - \lambda_{2} > 0$ if and only if the largest eigenvalue $\lambda_1$ is simple
\item $\nabla \rho_1$ is locally Lipschitz on the unit sphere with the Lipschitz constant $L=\lambda_1 - \lambda_n$.
\end{itemize}
\end{lem}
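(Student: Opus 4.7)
The plan is to treat the three assertions in order, in each case reducing the question to the spectral structure of $G^\top G$. Smoothness is essentially free: restricted to $S^{n-1}$, the map $u \mapsto u^\top G^\top G u$ is the restriction of a polynomial, and the unit sphere is a smooth embedded submanifold of $\mathbb{R}^n$, so $\rho_1 \in \mathcal{C}^{\infty}(S^{n-1})$.

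For the local strong concavity at $v_1$, the idea is to compute the Riemannian Hessian $\operatorname{Hess} \rho_1(v_1)$ on the tangent space $T_{v_1} S^{n-1}$. Since $v_1$ is a critical point of $\rho_1$ with critical value $\lambda_1$, a standard computation, which I would make explicit by differentiating $t \mapsto \rho_1((v_1 + t w)/\|v_1 + t w\|)$ twice at $t=0$ for $w \in T_{v_1}S^{n-1}$, yields
\begin{equation*}
\operatorname{Hess}\rho_1(v_1)[w,w] \;=\; 2\,w^\top\!\bigl(G^\top G - \lambda_1 I\bigr)w.
\end{equation*}
Now I would use the orthonormal eigenbasis $v_2,\dots,v_n$ of $G^\top G$ (which spans $T_{v_1}S^{n-1}$ whenever $\lambda_1$ is simple) and write $w = \sum_{i\ge 2} \alpha_i v_i$ to get $w^\top(G^\top G - \lambda_1 I) w = \sum_{i\ge 2}\alpha_i^2(\lambda_i - \lambda_1) \le (\lambda_2-\lambda_1)\|w\|^2$. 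This bound is strict and negative iff $\lambda_1 > \lambda_2$, giving the desired strong concavity with parameter proportional to $\lambda_1-\lambda_2$; by continuity of the Hessian, concavity extends to a neighbourhood of $v_1$. Conversely, if $\lambda_1 = \lambda_2$ then the eigenspace of $\lambda_1$ has dimension at least two and contains a direction in $T_{v_1}S^{n-1}$ along which $\operatorname{Hess}\rho_1(v_1)$ vanishes, precluding strong concavity. The main subtlety here is the passage from the Euclidean Hessian of the polynomial extension to the Riemannian Hessian on $S^{n-1}$, which is why I would work directly with the geodesic parametrisation $t\mapsto (v_1+tw)/\|v_1+tw\|$ rather than invoking the Euclidean chain rule.

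For the local Lipschitz property of $\nabla \rho_1$, I would argue via a Hessian bound. Since $\rho_1$ is smooth on the compact manifold $S^{n-1}$, it suffices to bound the operator norm of $\operatorname{Hess}\rho_1$ in a neighbourhood of the point of interest, and then invoke the Riemannian mean value inequality along geodesics: $\|\nabla\rho_1(u)-\nabla\rho_1(u')\| \le \sup_{\xi\in\gamma}\|\operatorname{Hess}\rho_1(\xi)\|\cdot d(u,u')$ where $\gamma$ is the minimising geodesic. From the computation above, $\operatorname{Hess}\rho_1(v_1)$ acts on $T_{v_1}S^{n-1}$ as $2(G^\top G - \lambda_1 I)$ with eigenvalues $2(\lambda_i-\lambda_1)$ for $i=2,\dots,n$; its operator norm equals $2(\lambda_1-\lambda_n)$, and by continuity the same bound holds, up to an arbitrarily small slack, in a neighbourhood. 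Thus $\nabla\rho_1$ is locally Lipschitz with constant $L=\lambda_1-\lambda_n$ (the factor of two being absorbed into whatever normalisation of the Rayleigh quotient is used consistently throughout the paper).

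The step I expect to require the most care is the second one, specifically the clean identification of $T_{v_1}S^{n-1}$ with $\operatorname{span}\{v_2,\dots,v_n\}$ and the reduction of the Hessian to diagonal form in that basis; this is where the simplicity of $\lambda_1$ enters both as a necessary and sufficient condition. The first and third assertions are then essentially corollaries of the spectral bookkeeping carried out in the second.
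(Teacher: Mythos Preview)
Your proposal is correct and follows essentially the same route as the paper: compute the Hessian $H_{\rho_1}(v_1)=2(G^\top G-\lambda_1 I)$, project it onto $T_{v_1}S^{n-1}=\operatorname{span}\{v_2,\dots,v_n\}$, and read off the diagonal $2\operatorname{diag}(\lambda_2-\lambda_1,\dots,\lambda_n-\lambda_1)$ to obtain both the concavity parameter $l=\lambda_1-\lambda_2$ and the Lipschitz constant $L=\lambda_1-\lambda_n$. You are in fact a bit more careful than the paper on two points: you spell out the ``only if'' direction of the concavity claim (the paper leaves this implicit), and you justify the Lipschitz bound via the Riemannian mean value inequality rather than simply invoking the Hessian lower bound $H_{\rho_1}(v_1)\succeq -LI$.
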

\begin{proof}
The Rayleigh quotient $\rho_1: R^{n} \setminus\{0\} \rightarrow R$ is a smooth function \cite{Helmke1996}, and hence $\rho_1: S^{n-1} \rightarrow R$ is smooth as well.
Since the function $\rho_1$ is twice continuously differentiable, then $\rho_1$ is locally strongly concave with the parameter $l$ if and only if $H_{\rho_1}(v_1) \preceq -l I_n $. 
The computation of the Hessian reveals
$H_{\rho_1}(v_1) = 2 (G^\top G - \lambda_1 I_n )$.
By projection onto the tangent space $T_{v_1} S^{n-1}$ which is spanned by the orthonormal vectors $v_2, \dots, v_{n}$, we find
$\begin{pmatrix} v_2 & \dots & v_{n} \end{pmatrix}^\top H_{\rho_1}(v_1) \begin{pmatrix} v_2 & \dots & v_{n} \end{pmatrix} = 2 \mbox{diag} \left( (\lambda_2{-}\lambda_1), \dots, (\lambda_{n}{-}\lambda_1) \right) \preceq (\lambda_{2}{-}\lambda_1) I_{n-1}$,
and hence that $\rho_1$ is indeed locally strongly concave at $v_1$ on the manifold $S^{n-1}$ with the concavity parameter $l = \lambda_{1} - \lambda_2$.
Since $\rho_1$ is twice differentiable and locally concave at $v_1$ on the unit sphere $S^{n-1}$, $\rho_1$ is locally Lipschitz with constant $L$ if and only if $H_{\rho_1}(v_1) \succeq -L I_n$. The results above then finally lead to $L = \lambda_1 - \lambda_n$, which concludes the proof.
\end{proof}
Similar statements follow from Thm.~\ref{thm:1b} and Thm.~\ref{thm:1c} for $\rho_2$ and $\rho_3$. This leads directly to results from \cite{Polyak1987} which we state here for general gradient methods in the presence of noise. 
\begin{prop}(\cite[Ch.~4, Thm.~3]{Polyak1987})
Let $F(u)$ be strongly concave (with constant $l$) with a gradient satisfying a Lipschitz condition (with constant $L$). Furthermore, let 
$u_{k+1} = u_k + \alpha_k ( \nabla F(u_k) + \epsilon_k )$ be our updating scheme where the noise $\epsilon_k$ is random, independent, with $\E [\epsilon_k] = 0$ and $\E [\| \epsilon_k \|^2] \leq \sigma^2$. 
\begin{enumerate}
\item[(i)]Then there exists a $\bar{\alpha} > 0$ such that for $\alpha_k = \alpha$, $k=1,2,\dots$, with $0 < \alpha < \bar{\alpha}$, we have 
\begin{align*}
\E[F(u^\star)-F(u_k)] \leq R(\alpha) + \E[F(u^\star)-F(u_0)] q^k
\end{align*}
where $q<1$, $R(\alpha) \rightarrow 0 $ as $\alpha \rightarrow 0$. 
\item[(ii)] If $\alpha_k \rightarrow 0$, $\sum_{k= 0}^\infty \alpha_k = \infty$, then $\E [ \| u_k - u^\star \|^2 ] \rightarrow 0$.
\item[(iii)]
Finally, if $\alpha_k = \alpha / k$, $\alpha > 1/(2l)$, then
\begin{align*}
\E[F(u^\star)-F(u_k)] \leq \frac{L \sigma^2 \alpha^2}{2(2l\alpha - 1)k} +  \mathcal{O}\left(\frac{1}{k}\right).
\end{align*}
\end{enumerate}
\end{prop}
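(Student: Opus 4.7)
Since this is a classical stochastic approximation result, my plan is to reproduce the argument of \cite{Polyak1987} using three ingredients: (a) the ascent inequality implied by the $L$-Lipschitz gradient, namely ${F(u_{k+1}) \geq F(u_k) + \langle \nabla F(u_k), u_{k+1}-u_k \rangle - (L/2)\|u_{k+1}-u_k\|^2}$; (b) the Polyak--\L{}ojasiewicz inequality implied by $l$-strong concavity, namely ${\|\nabla F(u_k)\|^2 \geq 2l\,(F(u^\star) - F(u_k))}$; and (c) the two noise properties ${\E[\epsilon_k \mid \mathcal{F}_k] = 0}$ and ${\E[\|\epsilon_k\|^2 \mid \mathcal{F}_k] \leq \sigma^2}$, where $\mathcal{F}_k$ is the $\sigma$-algebra generated by $u_0,\epsilon_0,\dots,\epsilon_{k-1}$.

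For part (i), I would substitute the update rule into the ascent inequality and take conditional expectations. The cross term ${\langle \nabla F(u_k), \alpha_k \epsilon_k\rangle}$ vanishes by zero-mean independence, and the quadratic term contributes at most ${(L/2)\alpha_k^2(\|\nabla F(u_k)\|^2 + \sigma^2)}$. Setting $V_k := \E[F(u^\star) - F(u_k)]$ and applying the PL inequality, one obtains a recursion of the form ${V_{k+1} \leq (1 - 2l\alpha + L\alpha^2 \cdot \text{something})V_k + (L/2)\alpha^2 \sigma^2}$. For $\alpha$ small enough, say $\alpha < \bar\alpha := l/L^2$ or similar, the multiplier $q$ lies strictly below $1$. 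Unrolling gives ${V_k \leq q^k V_0 + R(\alpha)}$ with ${R(\alpha) = (L\alpha^2\sigma^2)/(2(1-q)) \to 0}$ as $\alpha \to 0$, which is exactly the claimed form.

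For part (ii) with $\alpha_k \to 0$ and $\sum \alpha_k = \infty$, I would instead track $W_k := \E\|u_k - u^\star\|^2$ and derive a recursion ${W_{k+1} \leq (1-2l\alpha_k) W_k + \alpha_k^2 \sigma^2 + \text{h.o.t.}}$ using strong concavity in the form $\langle \nabla F(u_k), u_k - u^\star\rangle \leq -l\|u_k-u^\star\|^2$. The Robbins--Siegmund almost-supermartingale lemma then gives $W_k \to 0$ once one verifies $\sum \alpha_k^2 < \infty$ follows implicitly or adjusts the step size accordingly; alternatively, a direct induction on the product $\prod_{j=k_0}^{k}(1-2l\alpha_j)$, which vanishes by divergence of $\sum \alpha_k$, yields the claim.

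For part (iii) with $\alpha_k = \alpha/k$ and $\alpha > 1/(2l)$, I would take the recursion for $V_k$ from step (i) specialized to the time-varying schedule, ${V_{k+1} \leq (1 - 2l\alpha_k)V_k + (L/2)\alpha_k^2\sigma^2}$, and prove by induction that $V_k \leq C/k$ for ${C := L\sigma^2 \alpha^2 / (2(2l\alpha - 1))}$, which is where the condition $\alpha > 1/(2l)$ enters to guarantee the induction step is feasible. The main obstacle will be the bookkeeping in part~(iii): getting the constant in front of $1/k$ exactly right requires careful handling of the $\mathcal{O}(1/k)$ lower-order term and the base of the induction, whereas parts~(i) and~(ii) follow routinely from the PL inequality and Robbins--Siegmund. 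Because the statement is quoted verbatim from \cite[Ch.~4,~Thm.~3]{Polyak1987}, no independent proof is needed here beyond citing the source.
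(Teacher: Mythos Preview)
The paper does not prove this proposition at all: it is stated as a direct quotation of \cite[Ch.~4, Thm.~3]{Polyak1987} and used as a black box, with no accompanying proof or proof sketch. Your final sentence already recognizes this, and indeed everything preceding it is superfluous for the purposes of the paper --- though your reconstruction of the classical Polyak argument via the descent/ascent lemma, the PL inequality, and the Robbins--Siegmund scheme is essentially the standard route and is correct in outline.
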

With a suitably chosen step size $\alpha_k$, the iteration 
\begin{align*}
u^\prime_{k+1} = u_k + \alpha_k \left(\nabla \rho_1(u_k) + \epsilon_k \right), \quad
u_{k+1} = \frac{u^\prime_{k+1}}{\|u^\prime_{k+1}\|}
\end{align*}
is hence locally convergent to $u^\star$ with $\rho(u^\star) = \gamma^2$ for small enough noise, and similarly for, e.g., the input-feedforward passivity index.
Even if we do not have zero mean white noise on the measurement but only the information on a deterministic worst-case bound on $\epsilon$, \cite[Ch.~4, Thm.~1]{Polyak1987} provides convergence guarantees for general gradient methods towards a neighborhood of the optimizer dependent on $\epsilon$ leading to a confidence interval.

The above analysis also give us an approach to determine local convergence rates. Applying \cite[Thm.~1.5]{Klerk2017} for a fixed step size of $\alpha = \frac{2}{L+l}$ leads to a local convergence estimate of 
\begin{align*}
\rho_1(u^\star) - \rho_1(u_k) \leq \frac{L}{2} \left( \frac{L-l}{L+l} \right)^{2k} \|u^\star - u_0\|^2.
\end{align*}
For the gradient method with exact line search (as it is possible without additional input-output tuples for the $\mathcal{L}^2$-gain and the input-strict and output-strict passivity), we can apply \cite[Thm.~1.2]{Klerk2017} to find the local convergence estimate of
\begin{align}
\rho_1(u^\star) - \rho_1(u_k) \leq \left( \frac{L-l}{L+l} \right)^{2k} \left(\rho_1(u_0) - \rho_1(u^\star) \right).
\label{eq:conv_rate}
\end{align}
More recently, \cite{Michalowsky2019} also provide 
design tools to tailor a gradient dynamical system to the required convergence rate and robustness (i.e., in \cite{Michalowsky2019}, $H_2$-performance from noise to output/optimizer). Based on the results in Sec.~\ref{sec:discrete}, one can hence design an iterative gradient scheme with specific local robustness and convergence guarantees, e.g., for determining the $\mathcal{L}^2$-gain. This framework even paves the way towards extending the presented approaches to (slightly) nonlinear systems if the influence of the nonlinearity can be bounded by a deterministic $\epsilon$, or alternatively can be described in terms of integral quadratic constraints in the setup of \cite{Michalowsky2019}.
\section{Example}
\label{sec:example}
In this section, we illustrate the applicability and the potential of the proposed methods with different examples, including an oscillator and a high-dimensional system. 
\subsection{$\L2$-gain and conic relations of a random system}
We start with a randomly generated LTI system of order 20 (\textsc{Matlab} function \textit{drss} with \textit{rng(0)}), which has an $\L2$-gain of $\gamma = 13.7$. The initial input $u_0 \in \mathbb{R}^{10^3}$ is $u_0 = \sin ( t )$, $t=1,\dots,10^3 $, normalized such that $\|u_0\| = 1$. We first apply the continuous time gradient dynamical system and saddle-point dynamics for finding the $\L2$-gain as well as the tightest cone containing the input-output behavior via numerical integration in \textsc{Matlab} with \textit{ode15s}. Secondly, we apply the presented iterative sampling schemes. In case of the $\L2$-gain, we choose Algorithm~1 in \cite{Rojas2012}. For finding the tightest cone, we apply the \textit{Uzawa} method (cf.\ Prop.~\ref{prop:1c}) with a step size of $\alpha = 0.002$. The simulation results in Fig.~\ref{fig:l2_conic} confirm the convergence guarantees provided in Sec.~\ref{sec:discrete}. Allowing for conic relations instead of the $\L2$-gain decreases the radius to $r_{\min} = 7.7$.
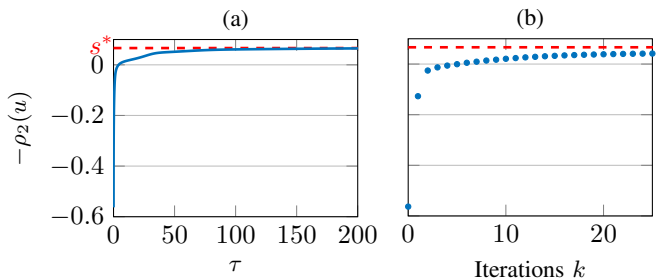
\begin{figure}[t]%
{
\definecolor{mycolor1}{rgb}{0.00000,0.44700,0.74100}%
 x(t) + 0.01 u(t), 
 \end{split}
 \label{eq:example}
\end{align}
with $u(t)=0$ for $t<0$ and $H : u \mapsto y$ in the time interval $t \in [0,10]$. We simulate the model with a sampling time of $\Delta t = 0.01 \mbox{s}$. Its true shortage of passivity 
is $s^\star = 0.07$. 
We first apply the gradient dynamical system described in \eqref{eq:dyn2}  and then apply the iterative sampling scheme including the line search algorithm in \eqref{eq:linesearch}. The initial input $u \in \mathbb{R}^{10^3}$ is chosen to be the normed constant signal $u = (10 \sqrt{10})^{-1} (1, \dots, 1)$.
The results in Fig.~\ref{fig:sop_sim} show after seven iterations that the system is not (output strictly) passive. Even for an oscillator, we can approximate the shortage of passivity after only few iterations. However, very close to the true minimum of $\rho_2$, convergence becomes quite slow, which might be due to the fact that the generalized eigenvalues of the matrix pair $\frac{1}{2}(G^\top +G),G^\top G)$ are spread out, which is an indicator for slow convergence of steepest descent methods (cf.\ \eqref{eq:conv_rate} with $L \gg l$).
\begin{figure}[t]%
\centering
{
\definecolor{mycolor1}{rgb}{0.00000,0.44700,0.74100}%
\begin{tikzpicture}

\begin{axis}[%
width=0.179\textwidth,
height=0.13\textwidth,
at={(0cm,0cm)},
scale only axis,
xmin=0,
xmax=200,
xlabel={\small $\tau$},
xtick scale label code/.code={},
ymin=-0.6,
ymax=0.1,
ylabel={\small $- \rho_2(u)$},
ytick scale label code/.code={},
ymajorgrids,
axis background/.style={fill=white},
]
\addplot [color=red,solid,forget plot,line width=1pt, dashed]
  table[row sep=crcr]{%
0 0.06625\\
10000 0.06625\\
};
\addplot [color=mycolor1,solid,forget plot,line width=1pt]
  table[row sep=crcr]{%
0	-0.563122794398223\\
0.00337601494293459	-0.543265224830455\\
0.00675202988586918	-0.525425746427771\\
0.0101280448288038	-0.50930003356626\\
0.0201131975686017	-0.46894524923852\\
0.0300983503083997	-0.437077471801468\\
0.0400835030481977	-0.411066739497589\\
0.0500686557879957	-0.389230744574047\\
0.0693547958605535	-0.355234782838811\\
0.0886409359331112	-0.32841989615392\\
0.107927076005669	-0.306406831158831\\
0.127213216078227	-0.287838401737353\\
0.146499356150785	-0.271835374183727\\
0.188135215063071	-0.243513058915\\
0.229771073975357	-0.220952344377249\\
0.271406932887643	-0.202265674752217\\
0.31304279179993	-0.186372534347716\\
0.354678650712216	-0.17257650543038\\
0.431710826600833	-0.151070541433081\\
0.50874300248945	-0.133284161320353\\
0.585775178378067	-0.118177971403938\\
0.662807354266683	-0.105140873740685\\
0.7398395301553	-0.0937789804434859\\
0.852062378942758	-0.0796878620170026\\
0.964285227730215	-0.0680360531879631\\
1.07650807651767	-0.0584379853354142\\
1.18873092530513	-0.0505739521468439\\
1.30095377409259	-0.0441457326645981\\
1.41317662288005	-0.0388718549383459\\
1.55011196883707	-0.0336374351191479\\
1.68704731479409	-0.0293758721849068\\
1.82398266075112	-0.0258150315395998\\
1.96091800670814	-0.0227694506924038\\
2.09785335266516	-0.0201153970546419\\
2.23478869862219	-0.0177691801973621\\
2.39043118772364	-0.0154028406481023\\
2.54607367682509	-0.0132981081519808\\
2.70171616592654	-0.0114104409623903\\
2.85735865502799	-0.00970593731236429\\
3.01300114412944	-0.00815792309567683\\
3.16864363323089	-0.00674491337131134\\
3.39068930246512	-0.00492858585848966\\
3.61273497169935	-0.00331073252979018\\
3.83478064093358	-0.00185929026674183\\
4.05682631016781	-0.000548744710325399\\
4.27887197940204	0.000641460477867216\\
4.50091764863627	0.00172806949144236\\
4.89333159762677	0.00343611599378456\\
5.28574554661726	0.00492214748154386\\
5.67815949560775	0.00623042795035402\\
6.07057344459825	0.00739460823046363\\
6.46298739358874	0.0084406165180979\\
6.85540134257924	0.00938867849045934\\
7.45631494391106	0.0106865987578185\\
8.05722854524288	0.0118366109059802\\
8.6581421465747	0.0128717644903503\\
9.25905574790652	0.0138171226504551\\
9.85996934923834	0.0146920355908168\\
10.4608829505702	0.0155116850465356\\
11.334153447776	0.0166286658761099\\
12.2074239449818	0.0176831129527102\\
13.0806944421876	0.0186970560391773\\
13.9539649393934	0.0196876641859866\\
14.8272354365992	0.0206685937811528\\
15.700505933805	0.0216509242871811\\
17.1930776160127	0.0233584945950592\\
18.6856492982204	0.0251319649795154\\
20.1782209804281	0.0269968726706468\\
21.6707926626358	0.0289685790172276\\
23.1633643448435	0.0310501070564234\\
24.6559360270513	0.0332282175930957\\
26.333148403699	0.0357475674432488\\
28.0103607803467	0.0382612163894966\\
29.6875731569944	0.0406547102955415\\
31.3647855336421	0.0428074061119197\\
32.6760640713367	0.0442619358390421\\
33.9873426090313	0.04549525476475\\
35.2986211467259	0.0465143739607054\\
36.6098996844205	0.0473447261908856\\
37.9211782221151	0.0480214241884109\\
39.2324567598097	0.0485811345698323\\
40.5437352975043	0.0490566779618508\\
41.8830628231258	0.0494831199799728\\
43.2223903487473	0.0498708036078437\\
44.5617178743688	0.050234951140585\\
45.9010453999903	0.0505859996144667\\
47.2403729256117	0.0509308371173719\\
48.5797004512332	0.0512738566291262\\
49.9190279768547	0.0516177303662737\\
51.414002082417	0.0520043897441019\\
52.9089761879792	0.0523948998236999\\
54.4039502935414	0.0527893672546205\\
55.8989243991036	0.0531873602980393\\
57.3938985046659	0.0535880363011965\\
58.8888726102281	0.0539902421107464\\
60.6170371697017	0.0544552400778527\\
62.3452017291753	0.0549178990253342\\
64.0733662886488	0.0553754468833233\\
65.8015308481224	0.0558249253769895\\
67.529695407596	0.0562633017397403\\
69.2578599670696	0.0566875863802499\\
71.9894125073489	0.0573227949192556\\
74.7209650476283	0.0579067324396282\\
77.4525175879076	0.0584332849655004\\
80.184070128187	0.0589000602335437\\
82.9156226684663	0.0593083835292408\\
85.700236365846	0.0596690130583551\\
88.4848500632256	0.0599803468305525\\
91.2694637606052	0.0602503277799532\\
94.0540774579848	0.0604868730517973\\
96.8386911553645	0.0606971804289301\\
99.6233048527441	0.0608873722022689\\
103.147203247421	0.0611068182119091\\
106.671101642098	0.0613094499661602\\
110.195000036775	0.0615005207485817\\
113.718898431451	0.061683474527067\\
117.242796826128	0.0618604928871638\\
120.766695220805	0.0620329074453065\\
124.476295945099	0.062210279528447\\
128.185896669392	0.0623838734179216\\
131.895497393686	0.0625538729515586\\
135.605098117979	0.0627202868325939\\
139.314698842273	0.0628830169643192\\
143.024299566566	0.0630419023931251\\
148.556405775925	0.0632712751260343\\
154.088511985284	0.0634909455789803\\
159.620618194642	0.063700209549269\\
165.152724404001	0.0638984370366567\\
170.68483061336	0.064085124710443\\
176.216936822719	0.0642599291034029\\
184.413821742655	0.0644968051725239\\
192.610706662592	0.0647075471833759\\
200.807591582528	0.0648932638054139\\
209.004476502464	0.0650557412753524\\
217.201361422401	0.0651972011244611\\
225.398246342337	0.0653200564067696\\
234.401632777071	0.0654364112204439\\
243.405019211804	0.0655361761019573\\
252.408405646537	0.065621970669723\\
261.41179208127	0.0656960297652757\\
270.415178516004	0.0657602139808068\\
279.418564950737	0.0658160498783773\\
289.023928938248	0.0658678207702735\\
298.629292925759	0.0659127506842009\\
308.23465691327	0.0659518477346732\\
317.840020900781	0.0659859386003555\\
327.445384888292	0.066015708760562\\
337.050748875803	0.0660417331527745\\
351.070328729351	0.0660739840735861\\
365.089908582898	0.0661005337315092\\
379.109488436446	0.0661223942607717\\
393.129068289993	0.0661403895132434\\
407.148648143541	0.0661551952161464\\
421.168227997088	0.0661673685619712\\
440.597577932705	0.0661807320559342\\
460.026927868321	0.0661908961588006\\
479.456277803938	0.066198617856715\\
500	0.0662047683295821\\
};
\pgfplotsset{
    after end axis/.code={
				\node[color=red] at (axis cs:-10,0.08){{$s^*$}}; 
        \node[above] at (axis cs:100,0.1){\small{(a)}};  
    }
}
\end{axis}
\end{tikzpicture}%
}
\hspace*{-.2cm}\raisebox{-2.2pt}{
\definecolor{mycolor1}{rgb}{0.00000,0.44700,0.74100}%
\begin{tikzpicture}

\begin{axis}[%
width=0.179\textwidth,
height=0.13\textwidth,
at={(0cm,0cm)},
scale only axis,
xmin=0,
xmax=25,
xlabel={\small Iterations $k$},
ymin=-0.6,
ymax=0.1,
yticklabels={,,},
axis background/.style={fill=white},
ytick scale label code/.code={},
ymajorgrids,
]
\addplot [color=mycolor1,mark size=1pt,only marks,mark=*,mark options={solid},forget plot]
  table[row sep=crcr]{%
0	-0.563122794398223\\
1	-0.126990908246608\\
};
\addplot [color=mycolor1,mark size=1pt,
only marks,mark=*,mark options={solid},forget plot]
  table[row sep=crcr]{%
2	-0.0255766557219546\\
3	-0.0131448828641669\\
4	-0.00604649805070568\\
5	-0.000527473518647227\\
6	0.00509748857463434\\
7	0.00937836427548779\\
8	0.013866577548472\\
9	0.0172944178836028\\
10	0.020861274965489\\
11	0.0236122483507426\\
12	0.0264187401218119\\
13	0.0286054200745476\\
14	0.030782505758741\\
15	0.0324942922151764\\
16	0.034156107256076\\
17	0.0354734489132588\\
18	0.0367224977528525\\
19	0.0377204213406932\\
20	0.0386476448523679\\
21	0.0393945462787773\\
22	0.0400776732214785\\
23	0.040633016187028\\
24	0.0411355383587436\\
25	0.0415484142153208\\
};

\addplot [color=red,solid,forget plot,line width=1pt, dashed]
  table[row sep=crcr]{%
0	0.06625\\
1000 0.06625\\
};
\pgfplotsset{
    after end axis/.code={
        \node[above] at (axis cs:12.5,0.1){\small{(b)}};  
    }
}
\end{axis}
\end{tikzpicture}%
}
\caption{Illustration of the (a) continuous time and (b) discrete time optimization to determine the shortage of passivity $s^\star$.}%
\label{fig:sop_sim}
\end{figure}
\subsection{High-dimensional system}
The third example is taken from the literature and can be found, for example, in \cite{Tran2017} and references therein\footnote{The authors of \cite{Tran2017} made their \textsc{Matlab} files available on 
{\tt\small http://verivital.com/hyst/pass-order-reduction/}  }. 
The SISO LTI model of order 84 describes the discretization of a partial differential equation (PDE) over a $7 \times 12$ grid, where 
the boundaries of interest lie on the opposite corners of a square. 
The example is listed as a benchmark example for model order reduction when the exact mathematical model is known. We simulate the trajectories with a sampling rate of $\Delta t = 5e^{-5}$ over $10^4$ steps. The true $\L2$-gain of the discrete time system is $\gamma = 10.8$ and the input feedfoward passivity index is $\nu = -0.07$. Furthermore, the measurements are subject to multiplicative noise, i.e., $\tilde{y}_k=(1+\varepsilon_k)y_k$, where $\varepsilon_k(i)$ is uniformly distributed in the interval $[-\bar{\varepsilon},\bar{\varepsilon}]$, $i=1,\dots,n$, with $\bar{\varepsilon}=0.5$. For both, the $\L2$-gain and input-strict passivity, we choose the initial input to be $u_0 = \sin ( t ) + 0.25$, $t=1,\dots,10^4$, normalized such that $\|u_0\| = 1$. We apply a gradient ascent and descent, respectively, with gradient information from two noise corrupted data samples per iteration as discussed in Sec.~\ref{sec:noise} and we choose a fixed step size of $\alpha = 0.01$. 
\begin{figure}[h]%
\centering
\hspace*{-.4cm}
{
\definecolor{mycolor1}{rgb}{0.00000,0.44700,0.74100}%
\begin{tikzpicture}

\begin{axis}[%
width=0.18\textwidth,
height=0.15\textwidth,
at={(0cm,0cm)},
scale only axis,
xmin=0,
xmax=5,
xlabel={\small Iterations $k$},
xtick={0,1,2,3,4,5},
ymin=0,
ymax=12,
ylabel={\small $\hat{\gamma}$},
ylabel shift = -5pt,
axis background/.style={fill=white},
ymajorgrids,
]

\addplot [color=mycolor1,only marks,mark=*,mark options={solid}]
  table[row sep=crcr]{%
0	3.60537398986052\\
1	5.75155444564492\\
2	9.78691161229382\\
3	10.2765275096065\\
4	10.042796398648\\
5	10.1909792061518\\
6	10.0951780287552\\
7	10.1623966563965\\
8	10.122768215325\\
9	10.0772502548349\\
10	10.0965356779516\\
};

\addplot [color=red,solid,forget plot,line width=1pt, dashed]
  table[row sep=crcr]{%
0	10.836\\
10 10.836\\
};

\pgfplotsset{
    after end axis/.code={
				\node[color=red] at (axis cs:.5,9.55){\colorbox{white}{\textcolor{red}{$\gamma^*$}}}; 
        \node[above] at (axis cs:2.5,12){\small{(a)}};  
    }
}
\end{axis}
\end{tikzpicture}%
}
\hspace*{-.2cm}\raisebox{-.5pt}
{
\definecolor{mycolor1}{rgb}{0.00000,0.44700,0.74100}%
\begin{tikzpicture}

\begin{axis}[%
width=0.179\textwidth,
height=0.15\textwidth,
at={(0cm,0cm)},
scale only axis,
xmin=0,
xmax=10,
xlabel={\small Iterations $k$},
ymin=-.2,
ymax=1.5,
ylabel={\small $\hat{\nu}$},
ylabel shift = -5pt,
axis background/.style={fill=white},
ytick scale label code/.code={},
ymajorgrids,
]

\addplot [color=mycolor1,only marks,mark=*,mark options={solid}]
  table[row sep=crcr]{%
0	1.14349244505456\\
1	0.712391218395083\\
2	0.433959666638288\\
3	0.244926446228861\\
4	0.132302674252234\\
5	0.054600033244679\\
6	0.0100533253185001\\
7	-0.0240663281356372\\
8	-0.0416156462921245\\
9	-0.0505821603991828\\
10	-0.0566212836365406\\
};

\addplot [color=red,solid,forget plot,line width=1pt, dashed]
  table[row sep=crcr]{%
0	-0.07\\
10 -0.07\\
};

\pgfplotsset{
    after end axis/.code={
				\node[color=red] at (axis cs:1,0.1){{$\nu^*$}};
        \node[above] at (axis cs:5,1.5){\small{(b)}};  
    }
}
\end{axis}
\end{tikzpicture}%
}
\caption{Iteratively determining (a) the gain and (b) passivity of the discretized PDE system with measurement noise levels at $50\%$. 
}%
\label{fig:Ex2}
\end{figure}
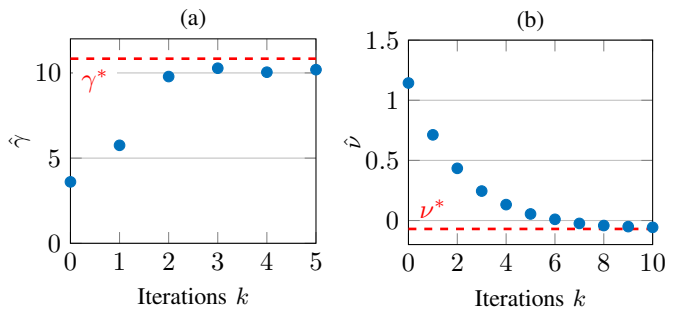
The results in Fig.~\ref{fig:Ex2} show that the presented approach converges quite fast towards a small neighborhood of the true system property despite high noise levels, which is well aligned with the discussions in Sec.~\ref{sec:noise}.

As reference and further motivation, we apply simple system identification tools off the shelf to the first input-output pair. We choose the \textsc{Matlab} functions \textit{ssest} (estimates state-space model and then refines the parameter values using prediction error minimization), \textit{ssregest} (estimates state-space model by reduction of regularized ARX model), and \textit{n4sid} (estimates state-space model using subspace method) each for different assumptions on the system order as well as the suggested system order of $11$. After the model identification, we then determine the gain and the input feedforward passivity index with \textit{norm($\cdot$,inf), getPassiveIndex($\cdot$,'input')}, respectively. The result is summarized in the table below.
Note that for a system order of $100$ the system identification techniques required up to $1.3$ hours on an Intel i7, while the computational expenses of the sampling schemes are negligible small.
{
 \begin{center}
\scriptsize
{\renewcommand{\arraystretch}{1.9}%
 \begin{tabular}{|l||llll|}
 \hline
\pbox{2.6cm}{assumed \\ system order:} & \pbox{3cm}{10}  & \pbox{3cm}{40} & \pbox{3cm}{100} & \pbox{3cm}{default}\\[1ex]
 \hline
\hline
\pbox{1.9cm}{\textit{ssest}} 
& \pbox{2cm}{$\hat{\gamma} = 1.49$ \\ $\hat{\nu} = -84.8$ } 
& \pbox{2cm}{$\hat{\gamma} = 48.6$ \\ $\hat{\nu} = -\infty$ } 
& \pbox{2cm}{command \\ failed } 
& \pbox{2cm}{$\hat{\gamma} = 9.59$ \\ $\hat{\nu} = -\infty$} \\[2ex]
\pbox{2.1cm}{\textit{ssregest}}
& \pbox{2cm}{$\hat{\gamma} = 10.5$ \\ $\hat{\nu} = -4.42$ } 
& \pbox{2cm}{$\hat{\gamma} = 10.8$ \\ $\hat{\nu} = -1.45$ }
& \pbox{2cm}{$\hat{\gamma} = 10.8$ \\ $\hat{\nu} = -0.07$ }
& \pbox{2cm}{$\hat{\gamma} = 11.0$ \\ $\hat{\nu} = -3.92$} \\[2ex]
\pbox{2.1cm}{\textit{n4sid}} 
& \pbox{2cm}{$\hat{\gamma} = 10.8$ \\ $\hat{\nu} = -\infty$ }
& \pbox{2cm}{$\hat{\gamma} = 10.8$ \\ $\hat{\nu} = -\infty$ }
& \pbox{2cm}{$\hat{\gamma} = 44.2$ \\ $\hat{\nu} = -\infty$ }
& \pbox{2cm}{$\hat{\gamma} = 10.8$ \\ $\hat{\nu} = -\infty$} \\[1ex]
 \hline
 \end{tabular}}
 \end{center}}
\begin{flushright}
\vspace*{-.1cm}
\scriptsize \quad ($\gamma^\star = 10.8$, $\nu^\star = -0.07$) \; \quad \quad
\end{flushright}

Thus we can see that standard system identification tools from one noise-corrupted input-output trajectory produced highly variable results, especially with respect to the input feedforward passivity index.
While a more in-depth analysis and comparison of state of the art system identification techniques (cf.~\cite{Ljung1999,Ljung2008, Pillonetto2014}) together with subsequent model analysis and the presented framework is part of future work, the requirement of multiple possibly time-consuming experiments with known initial condition can generally be considered a drawback of the presented approach (as system identification can work with one input-output trajectory). However, we want to emphasize that the presented iterative sampling scheme is particularly simple to apply and independent of the system order. Furthermore, it has an inherent robustness against noise and is even capable of providing guarantees, which are well-studied in the literature of gradient methods. 
Finally, the presented method comes with great potential in (i) further developing and improving its scheme (e.g.\ to slightly nonlinear systems) and (ii) using the theoretical insights of the optimization problems to come up with other methods to determine system properties from data.
\section{Conclusion}
\label{sec:conclusion}
Due to their relevance in controller design, we presented a unified approach to iteratively determine the $\L2$-gain, passivity properties and the minimal cone that an LTI system is confined to, while the exact input-output behavior remains undisclosed.
First, we formulated these control-theoretic properties as optimization problems, where the gradients can be obtained from input-output data samples. To find the solution to the optimization problems, we applied gradient dynamical systems and saddle-point flows, respectively. This led to evolution equations, for which we investigated the convergence behavior also under the presence of measurement noise.
%
%
%
\ifCLASSOPTIONcaptionsoff
  \newpage
\fi
%
%
%
%
\bibliographystyle{IEEEtran}
\bibliography{bib_TAC}
\vfill
%
%
%
%
%
%
\end{document}